\def\ps@pprintTitle{%
  \let\@oddhead\@empty
  \let\@evenhead\@empty
  \def\@oddfoot{\reset@font\hfil\thepage\hfil}
  \let\@evenfoot\@oddfoot
}
\pgfplotsset{compat=newest}
\definecolor{myblue}{rgb}{0.050980,0.34118,0.69804}
\definecolor{myred}{rgb}{1,0.21961,0.28627}
\definecolor{mygrey}{rgb}{0.43137,0.43137,0.43137}
\let\old@algocf@pre@ruled\@algocf@pre@ruled
\renewcommand{\@algocf@pre@ruled}{%
  \Hy@raisedlink{\hyper@anchorstart{algocf.\thealgocf}\hyper@anchorend}%
  \old@algocf@pre@ruled}
\newcommand{\ST}{\mbox{such that}}
\newcommand{\WITH}{\mbox{with}}
\newcommand{\WHEN}{\mbox{when}}
\newcommand{\WHERE}{\mbox{where}}
\newcommand{\IdEst}{\mbox{i.e.,}}
\newcommand{\IF}{\mbox{if}}
\newcommand{\OSapproach}{\mbox{OS}}
\newcommand{\OSCapproach}{\mbox{OSC}}
\newcommand{\OSCmin}{\mbox{GOS}}
\newcommand{\Dimensional}[1]{\tilde{#1}}
\newcommand{\UNITS}[1]{[#1]}
\newcommand{\NUM}[1]{N_{\mathsf{#1}}}
\newcommand{\VEC}[1]{\vec{#1}}
\newcommand{\MATR}[1]{\textbf{#1}}
\newcommand{\MATRmn}[3]{\MATR{#1}_{#2\times#3}}
\newcommand{\indicator}[2]{\chi_{#1}(#2)}
\newcommand{\NOUGHT}[1]{#1_0} 
\newcommand{\unitsfont}[1]{\mathrm{#1}} 
\newcommand{\eps}{\varepsilon}
\newcommand{\q}{q}
\newcommand{\anum}{a}
\newcommand{\bnum}{b}
\newcommand{\xPP}{y}
\newcommand{\tPP}{t}
\newcommand{\xPPdl}{\varphi}
\newcommand{\tPPdl}{\tau}
\newcommand{\gPP}{g}
\newcommand{\rPP}{R}
\newcommand{\hPP}{h}
\newcommand{\vPP}{v}
\newcommand{\meter}{\unitsfont{m}}
\newcommand{\second}{\unitsfont{s}}
\newcommand{\Gmatr}{\MATR{G}}
\newcommand{\Pmatr}{\MATR{P}}
\newcommand{\xPPref}{\xPPdl_{\mathrm{ref}}}
\newcommand{\odeFF}{\mathrm{\it{ode45}}}
\newcommand{\Ndigit}{n}
\newcommand{\xPPNdig}{\xPPdl_{\Ndigit}}
\newcommand{\TmaxPP}{T}
\newcommand{\errRoundPP}{e_{\Ndigit}}
\newcommand{\errReducePP}{\epsilon_{\mathtt{s}}(\TmaxPP)}
\newcommand{\TmaxRedPP}{\TmaxPP_\mathtt{m}}
\newcommand{\TsingRedPP}{\TmaxPP_0}
\newcommand{\GammaLam}{\gamma}
\newcommand{\DeltaLam}{\delta}
\newcommand{\F}{F}
\newcommand{\var}{x}
\newcommand{\p}{p}
\newcommand{\thcc}{\theta}
\newcommand{\f}{f}
\newcommand{\varDL}{\xi}
\newcommand{\lam}{\lambda}
\newcommand{\NUMlam}{\NUM{l}}
\newcommand{\Klam}{\kappa}
\newcommand{\DD}{d}
\newcommand{\CC}{c}
\newcommand{\TT}{t}
\newcommand{\Rthcc}{\rho}
\newcommand{\be}{\beta}
\newcommand{\unit}{\unitsfont{u}}
\newcommand{\NUMunit}{\NUM{u}}
\newcommand{\bp}{m}
\newcommand{\MM}{\MATR{M}}
\newcommand{\PI}{\pi}
\newcommand{\NUMPI}{\NUM{0}}
\newcommand{\zz}{z}
\newcommand{\Ker}{\mathrm{ker}}
\newcommand{\Dim}{\mathrm{dim}}
\newcommand{\Rank}{\mathrm{rank}}
\newcommand{\Basis}{\mathrm{basis}}
\newcommand{\Colsp}{\mathrm{colsp}}
\newcommand{\Transp}{T}
\newcommand{\Ortho}{\bot}
\newcommand{\Inv}{-1}
\newcommand{\PseudoInv}{+}
\newcommand{\argmin}{\mathop{\mathrm{argmin}}}
\newcommand{\Max}{\mathop{\mathrm{max}}}
\newcommand{\Min}{\mathop{\mathrm{min}}}
\newcommand{\grad}{\nabla}
\newcommand{\Partial}{\partial}
\newcommand{\costOS}{C}
\newcommand{\Thcc}{\Theta}
\newcommand{\thOS}{\VEC{\thcc}_{\mathrm{opt}}}
\newcommand{\thOScomp}[1]{\VEC{\thcc}_{\mathrm{opt},#1}}
\newcommand{\thOSC}{\VEC{\thcc}^{\star}_{\mathrm{opt}}}
\newcommand{\thOSCcomp}[1]{\VEC{\thcc}^{\star}_{\mathrm{opt},#1}}
\newcommand{\UNITscale}{U}
\newcommand{\SMALLscaleIDX}{\mathtt{s}}
\newcommand{\matrixYY}{\MATR{Y}}
\newcommand{\vectorRHSyy}{\VEC{\gamma}}
\newcommand{\BB}{b}
\newcommand{\bbexp}{a}
\newcommand{\BBmatrix}{\MATR{B}}
\newcommand{\VV}{v}
\newcommand{\VVmatrix}{\MATR{V}}
\newcommand{\WW}{w}
\newcommand{\WWmatrix}{\MATR{W}}
\newcommand{\LL}{l}
\newcommand{\aamatrix}{\MATR{A}}
\newcommand{\aavec}{\VEC{\omega}}
\newcommand{\matrixTT}{\MATR{T}}
\newcommand{\vectorRHStt}{\VEC{\tau}}
\newcommand{\AAcoeff}{\alpha}
\newcommand{\AAcoeffvecOS}{\VEC{\AAcoeff}_{\mathrm{opt}}}
\newcommand{\matrixSysAA}{\MATR{S}}
\newcommand{\lhsSysAA}{\VEC{s}}
\newcommand{\NUMrowsAA}{\NUM{r}}
\newcommand{\NUMcolsAA}{\NUM{c}}
\newcommand{\helpMatSysAA}{\boldsymbol{\Sigma}}
\newcommand{\helpRhsSysAA}{\VEC{\sigma}}
\newcommand{\hVSysAAcomp}{\nu}
\newcommand{\helpVecSysAA}{\VEC{\hVSysAAcomp}}
\newcommand{\helpNrSysAA}{\hat{\NUMrowsAA}}
\crefname{definition}{Definition}{Definitions}
\newtheorem{assumption}{Assumption}
\crefname{assumption}{Assumption}{Assumptions}
\crefname{theorem}{Theorem}{Theorems}
\crefname{proposition}{Proposition}{Propositions}
\newcommand{\figuresname}{Figures~}
\begin{document}

%----------------------------------------------------------------------------------------
\renewcommand{\sectionautorefname}{Section}
\renewcommand{\subsectionautorefname}{Section}
\renewcommand{\subsubsectionautorefname}{Section}
\renewcommand{\algorithmautorefname}{Algorithm}
\renewcommand{\itemautorefname}{Item}
\setcounter{MaxMatrixCols}{50}	
%----------------------------------------------------------------------------------------

\begin{frontmatter}

%% Title, authors and addresses

%% use the tnoteref command within \title for footnotes;
%% use the tnotetext command for theassociated footnote;
%% use the fnref command within \author or \address for footnotes;
%% use the fntext command for theassociated footnote;
%% use the corref command within \author for corresponding author footnotes;
%% use the cortext command for theassociated footnote;
%% use the ead command for the email address,
%% and the form \ead[url] for the home page:
%% \title{Title\tnoteref{label1}}
%% \tnotetext[label1]{}
%% \author{Name\corref{cor1}\fnref{label2}}
%% \ead{email address}
%% \ead[url]{home page}
%% \fntext[label2]{}
%% \cortext[cor1]{}
%% \address{Address\fnref{label3}}
%% \fntext[label3]{}

\title{Complexity reduction of physical models:\\an equation-free approach by means of scaling}

\author[cunef]{Simone Rusconi\corref{cor1}\fnref{sr}}
\author[imdea]{Christina Schenk\fnref{cs}} 
\author[asachi]{Razvan Ceuca\fnref{rc}}
\author[bcam,iker,stoilow]{Arghir Zarnescu\fnref{az}}
\author[bcam,iker]{Elena Akhmatskaya\corref{cor2}\fnref{ea}}

\address[cunef]{CUNEF Universidad, C/ de los Pirineos 55, 28040 Madrid, Spain}
\address[imdea]{IMDEA Materials Institute, C/ Eric Kandel 2, Tecnogetafe, 28906 Getafe (Madrid), Spain}
\address[asachi]{``Gh. Asachi'' Technical University, Bd. Carol I, nr. 11, 700506 Iasi, Romania}
\address[bcam]{BCAM - Basque Center for Applied Mathematics, Alameda de Mazarredo 14, 48009 Bilbao, Spain}
\address[iker]{IKERBASQUE, Basque Foundation for Science, Plaza Euskadi 5, 48009 Bilbao, Spain}
\address[stoilow]{``Simion Stoilow'' Institute of the Romanian Academy, 21 Calea Grivi\c{t}ei, 010702 Bucharest, Romania}

\cortext[cor1]{Contact address: \texttt{rusconis89@gmail.com}}
\cortext[cor2]{Contact address: \texttt{akhmatskaya@bcamath.org}}

\fntext[sr]{\href{https://orcid.org/0000-0003-3324-7395}{orcid.org/0000-0003-3324-7395}}
\fntext[cs]{\href{https://orcid.org/0000-0002-7817-6757}{orcid.org/0000-0002-7817-6757}} 
\fntext[rc]{\href{https://orcid.org/0000-0003-0899-2633}{orcid.org/0000-0003-0899-2633}} 
\fntext[az]{\href{https://orcid.org/0000-0002-3620-6196}{orcid.org/0000-0002-3620-6196}} 
\fntext[ea]{\href{https://orcid.org/0000-0002-5136-7991}{orcid.org/0000-0002-5136-7991}} 

\begin{abstract}
 
	The description of complex physical phenomena often involves sophisticated models that rely on a large number of parameters, with many dimensions and scales. One practical way to simplify that kind of models is to discard some of the parameters, or  terms of underlying equations, thus giving rise to reduced models. Here, we propose a general approach to obtaining such reduced models. The method is independent of the model in use, i.e., equation-free, depends only on the interplay between the scales and dimensions involved in the description of the phenomena, and controls over-parametrization. It also quantifies conditions for asymptotic models by providing explicitly computable thresholds on values of parameters that allow for reducing complexity of a model, while preserving essential predictive properties. Although our focus is on complexity reduction, this approach may also help with calibration by mitigating the risks of over-parameterization and instability in parameter estimation. The benefits of this approach are discussed in the context of the classical projectile model. 		
	
\end{abstract}

%\begin{keyword}
%% keywords here, in the form: keyword \sep keyword

%% PACS codes here, in the form: \PACS code \sep code

%% MSC codes here, in the form: \MSC code \sep code
%% or \MSC[2008] code \sep code (2000 is the default)
%\end{keyword}

\end{frontmatter}

% \tableofcontents

%% The Appendices part is started with the command \appendix;
%% appendix sections are then done as normal sections
%% \appendix

%% \section{}
%% \label{}

%% If you have database file and want bibtex to generate the
%% bibitems, please use
%%
%%  \bibliographystyle{elsarticle-harv}
%%  \bibliography{<your bibdatabase>}

%% else use the following coding to input the bibitems directly in the
%% TeX file.
%\tableofcontents

\section{Introduction}
\label{sec:intro}

	When dealing with complex models, it is often useful to consider an asymptotic analysis with the aim of simplifying the model under consideration. Such an approach identifies a parameter $\eps>0$ and investigates the limit for $\eps \to 0$ of the solutions to the given equations. When $\eps$ is small, the solutions to the model considered in first place can be accurately reproduced by the solutions of the same equations with $\eps$ formally set to $0$, i.e., with some terms neglected. However, in general, it is hard to understand how small such an $\eps$ must be to consider it ``small enough'' and, then, to allow the use of solutions to the simplified model.		
		
	In particular, for models involving very different scales or over-parametrization, reducing complexity and applying appropriate scaling becomes crucial for preventing instability in calibration \cite{https://doi.org/10.1111/1467-9868.00294,https://doi.org/10.1038/s41467-021-26107-z,ZHANG2024109854}. This turns to be even more critical when feeding synthetic data from physical models into machine learning models \cite{technologies9030052}, as imbalances or miscalibrations in the data can significantly affect model performance \cite{5128907}. Proper scaling ensures that synthetic data reflects realistic parameter distributions, helping to mitigate overfitting and instability in parameter estimation \cite{https://doi.org/10.1111/1467-9868.00294}. Ultimately, scaling improves calibration, stabilizes predictions and may lead to more reliable machine learning model training, even when using synthetic data.
				
	Our approach is based on already introduced methodologies, that we called Optimal Scaling ($\OSapproach$) \cite{RUSCONI2019106944} and Optimal Scaling with Constraints ($\OSCapproach$) \cite{RUSCONI2023127756}. Similarly to traditional nondimensionalization approaches, those allow recasting dimensional equations into unitless form. However, in contrast to traditional techniques, they do not require a problem-dependent insight into the phenomenon under consideration. In addition, they ensure a strong reduction of ranges of magnitudes assumed by the equations' coefficients, provide accurate estimation of characteristic features of considered systems and allow reducing round-off and numerical errors.

	Such approaches were applied in \cite{RUSCONI2019106944,RUSCONI2023127756} to the prospective Population Balance Model \cite{Ramkrishna2000} for Latex Particles Morphology Formation (LPMF PBM) \cite{DDPM_2016,PhDThesis_Rusconi_PMCQS}. First, in \cite{RUSCONI2019106944}, we obtained dimensionless scaled equations with minimized variation of the relevant quantities. Such a feature was beneficial in diminishing numerical errors, and avoiding unphysical behavior of the computed solution. Second, in \cite{RUSCONI2023127756}, a quantitative criterion for locating regions of slow and fast aggregation was introduced to derive a family of dimensionless LPMF PBM of reduced complexity. Moreover, in \cite{RUSCONI2019106944}, the Optimal Scaling was also applied to liquid crystal and quantum models, helping to decrease the range of magnitudes assumed by the involved quantities, without compromising on accuracy of characteristic features of considered systems \cite{Gartland2018,PhysRevA.53.2135}.

	The purpose of this study is to generalize the already introduced methodologies and demonstrate in a transparent way the use and advantages of such an approach on the classical benchmark in the scaling theory. In particular, we present an advanced variant of the Optimal Scaling approaches which inherits benefits of its predecessors and, in addition, takes control of over-parametrization. We refer to this method as Generalized Optimal Scaling ($\OSCmin$). The proposed methodology can provide a quantitative criterion for identifying those values of physical parameters that allow discarding a few terms in the considered equations without significantly modifying their solutions. This technique does not depend on the model, and as such its application to any specific model will produce results which exhibit a certain dependency on the model, that needs to be investigated numerically. In order to concretely illustrate our derivations and findings, we consider the classical projectile model.
	
	The paper is structured as follows. \autoref{sec:prerequisites} clarifies the employed notations and summarizes the projectile model we use in this work as a case study. \autoref{sec:OS_overview} outlines assumptions and statements formulated and proven in this study to support the presented scaling approaches. Moreover, \autoref{sec:OS_overview} sums up the main achievements of the proposed methodologies \cite{RUSCONI2019106944,RUSCONI2023127756} and puts the $\OSCapproach$ method, previously formulated in the context of the LPMF PBM, into a general framework. \autoref{sec:achieve_OS_MinParam} improves further the generalized $\OSCapproach$ by extending its functionalities with ability to reduce to its minimum the number of parameters employed in a model. A detailed formulation of the resulting $\OSCmin$ method is presented in \autoref{tab:summary_OS} in comparison with the original $\OSCapproach$ method. In \autoref{sec:intro_ProjPb}, we test the Optimal Scaling methodologies on the projectile model. \autoref{sec:concl_discuss} provides conclusions and discusses the capabilities of presented techniques.
		
\section{Prerequisites}
\label{sec:prerequisites}		
				
\subsection{Employed Notation}
\label{sec:notation}

	Through all this work, the following notation is employed.

\begin{itemize}

\item We denote any scalar quantity $\Dimensional{\q}$ as
		
\begin{equation}
\Dimensional{\q} \coloneqq \q \, \UNITS{\Dimensional{\q}},
\label{eqn:notation_dimensional_quant}
\end{equation}
	
\noindent being $\UNITS{\Dimensional{\q}}$ the dimensions of $\Dimensional{\q}$, i.e., its units of measure, and $\q$ the numerical value associated with $\Dimensional{\q}$ in the chosen units $\UNITS{\Dimensional{\q}}$.

\item The symbol $\ll$ is used when a strictly positive number $\anum$ is at least an order of magnitude smaller than another strictly positive number $\bnum$, i.e., $\anum \ll \bnum \Leftrightarrow \anum / \bnum \le 10^{-1} $.

\item The symbol $\approx$ is used when the two strictly positive numbers $\anum$ and $\bnum$ have similar magnitudes, i.e., $\anum \approx \bnum \Leftrightarrow 10^{-1} < \anum / \bnum < 10$.

%We use the symbols 
%\begin{enumerate*}[label=$(\roman*)$]
%\item $\ll$, 
%\item $\approx$ and 
%\item $\gg$ 
%\end{enumerate*}
%meaning that 
%\begin{enumerate}[label=$(\roman*)$]
%\item a strictly positive number $\anum$ is at least an order of magnitude smaller than another strictly positive number $\bnum$, i.e., $\anum \ll \bnum \Leftrightarrow \anum / \bnum \le 10^{-1} $;
%\item the two strictly positive numbers $\anum$ and $\bnum$ have similar magnitudes, i.e., $\anum \approx \bnum \Leftrightarrow 10^{-1} < \anum / \bnum < 10$;
%\item a strictly positive number $\anum$ is at least an order of magnitude larger than another strictly positive number $\bnum$, i.e., $\anum \gg \bnum \Leftrightarrow \anum / \bnum \ge 10$.
%\end{enumerate}

%\item The symbol $\lesssim$ is used when a strictly positive number $\anum$ is smaller than, or, has magnitude similar to another strictly positive number $\bnum$, i.e., $\anum \lesssim \bnum \Leftrightarrow \anum / \bnum < 10$, and the symbol $\gtrsim$ is used when a strictly positive number $\anum$ is larger than, or, has magnitude similar to another strictly positive number $\bnum$, i.e., $\anum \gtrsim \bnum \Leftrightarrow \anum / \bnum > 10^{-1}$.

\item The superscript $\Transp$ means ``transpose'', e.g., the matrix $\MM^{\Transp}$ is the transpose of $\MM$.

\item The acronyms $\OSapproach$, $\OSCapproach$ and $\OSCmin$ refer to Optimal Scaling, Optimal Scaling with Constraints and Generalized Optimal Scaling, respectively.

\end{itemize}

\subsection{The classical projectile model: scaling strategies}	
\label{sec:CaseStudy_ProjPb}

	We choose the classical projectile model as an application for the scaling approaches considered in this work. First, we wish to discuss some possible strategies for scaling this model in the context of the Optimal Scaling to highlight the ideas behind the scaling methods under study. The rigorous scaling of the model is presented in detail in \autoref{sec:intro_ProjPb}.

	A ball is thrown vertically upward from a certain height above the surface of the Earth. The height $\Dimensional{\xPP}(\Dimensional{\tPP})$ reached at time $\Dimensional{\tPP}$ is measured in meters, while the elapsed time $\Dimensional{\tPP}$ in seconds, i.e., $\UNITS{\Dimensional{\xPP}}=\meter$, $\UNITS{\Dimensional{\tPP}}=\second$. As presented in \cite{Holmes2009_ND}, the variable $\Dimensional{\xPP}$ satisfies the Ordinary Differential Equation 
	
\begin{equation}
\frac{d^2\Dimensional{\xPP}}{d\Dimensional{\tPP}^2} =
- \frac{\Dimensional{\gPP} \, \Dimensional{\rPP}^2}
{(\Dimensional{\xPP}+\Dimensional{\rPP})^2},
\quad
\forall \Dimensional{\tPP} \ge \NOUGHT{\Dimensional{\tPP}} \coloneqq 0 \, \second,
\quad \WITH \quad
\Dimensional{\xPP}(\NOUGHT{\Dimensional{\tPP}}) = 
\NOUGHT{\Dimensional{\hPP}},
\quad
\frac{d\Dimensional{\xPP}}{d\Dimensional{\tPP}}(\NOUGHT{\Dimensional{\tPP}}) =
\NOUGHT{\Dimensional{\vPP}},
\label{eqn:dimensional_ODE_projectile}
\end{equation}
	
\noindent being $\Dimensional{\gPP}$ the gravitational acceleration, $\Dimensional{\rPP}$ the radius of the Earth, $\NOUGHT{\Dimensional{\hPP}}$ and $\NOUGHT{\Dimensional{\vPP}}$ the initial height and velocity, respectively. As indicated in \cite{Scaling_Langtangen}, the dimensionless counterpart of \eqref{eqn:dimensional_ODE_projectile} is obtained by the change of variables
	
\begin{equation}
\tPPdl \coloneqq \Dimensional{\tPP} / \Dimensional{\thcc}_1,
\quad
\xPPdl \coloneqq \Dimensional{\xPP} / \Dimensional{\thcc}_2,
\quad 
\UNITS{\Dimensional{\thcc}_1} = \UNITS{\Dimensional{\tPP}},
\quad
\UNITS{\Dimensional{\thcc}_2} = \UNITS{\Dimensional{\xPP}},
\quad
\thcc_1,\thcc_2 \in (0,\infty),
\label{eqn:change_of_var_projectile}
\end{equation}	  

\noindent where $\Dimensional{\thcc}_1$ and $\Dimensional{\thcc}_2$ are characteristic constants of the considered model \eqref{eqn:dimensional_ODE_projectile}. The transformation \eqref{eqn:change_of_var_projectile} leads to
	
\begin{equation}
\frac{d^2\xPPdl}{d\tPPdl^2} = - \frac{\lam_1}{(1 + \lam_2 \, \xPPdl)^2},
\quad
\forall \tPPdl \ge 0,
\quad \WITH \quad
\xPPdl(0) = \lam_3,
\quad
\frac{d\xPPdl}{d\tPPdl}(0) = \lam_4.
\label{eqn:dimensioless_ODE_projectile}
\end{equation}
	
\noindent Here, the dimensionless coefficients $\lam_1,\lam_2,\lam_3,\lam_4 \in (0,\infty)$ are defined as
	
\begin{equation}
\lam_1 \coloneqq \Dimensional{\p}_1 \, \Dimensional{\thcc}_1^2 \, \Dimensional{\thcc}_2^{-1},
\quad
\lam_2 \coloneqq \Dimensional{\p}_2^{-1} \, \Dimensional{\thcc}_2,
\quad  
\lam_3 \coloneqq \Dimensional{\p}_3 \, \Dimensional{\thcc}_2^{-1},
\quad
\lam_4 \coloneqq \Dimensional{\p}_4 \, \Dimensional{\thcc}_1 \, 
\Dimensional{\thcc}_2^{-1},
\label{eqn:lambdas_def_projectile}
\end{equation}

\noindent with

\begin{equation}
\Dimensional{\p}_1 \coloneqq \Dimensional{\gPP},
\quad
\Dimensional{\p}_2 \coloneqq \Dimensional{\rPP},
\quad  
\Dimensional{\p}_3 \coloneqq \NOUGHT{\Dimensional{\hPP}},
\quad
\Dimensional{\p}_4 \coloneqq \NOUGHT{\Dimensional{\vPP}}. 
\label{eqn:PhysParam_def_projectile}
\end{equation}
	
\noindent The scaling \eqref{eqn:change_of_var_projectile} results in $\NUMlam=4$ dimensionless coefficients $\lam_1,\lam_2,\lam_3,\lam_4$ \eqref{eqn:lambdas_def_projectile}, computed as functions of $\NUM{\var}=2$ characteristic constants $\Dimensional{\thcc}_1,\Dimensional{\thcc}_2$ and $\NUM{\p}=4$ predefined physical parameters $\Dimensional{\p}_1,\Dimensional{\p}_2,\Dimensional{\p}_3,\Dimensional{\p}_4$ \eqref{eqn:PhysParam_def_projectile}. One can now optimize some of dimensionless coefficients, i.e., $\lam_1,\lam_2,\lam_3,\lam_4$, as functions of $\Dimensional{\thcc}_1,\Dimensional{\thcc}_2$, which is the path followed by $\OSapproach$ and $\OSCapproach$.

	Alternatively, one can reduce the number of physical parameters to just two, namely 

\begin{equation}
\PI_1 = \Dimensional{\gPP}^{-1/2} \, \Dimensional{\rPP}^{-1/2} \, \NOUGHT{\Dimensional{\vPP}},
\quad
\PI_2 = \Dimensional{\rPP}^{-1} \, \NOUGHT{\Dimensional{\hPP}}.
\end{equation}

\noindent Now it is possible to  optimize, say, the coefficients $\lam_2,\lam_3,\lam_4$ as functions of $\PI_1,\PI_2$ (instead of $\Dimensional{\thcc}_1,\Dimensional{\thcc}_2$, as suggested in \cite{RUSCONI2019106944,RUSCONI2023127756}). If one aims to have $\lam_2,\lam_3,\lam_4$ as close as possible to $1$ (the approach taken by $\OSapproach$), it follows:

\begin{equation}
\lam_2 = \PI_2^{1/2}, 
\quad
\lam_3 = \PI_2^{1/2},  
\quad
\lam_4 = 1.
\end{equation} 

\noindent Then, we will necessarily have that $\lam_1 = \PI_1^{-2} \, \PI_2^{1/2}$. Further, introducing the condition $\lam_1 \ll 1$ as considered by $\OSCapproach$ (which in concrete numerical computations we take to be $\lam_1 \le 10^{-1}$) provides a necessary relationship between $\PI_1$ and $\PI_2$, a threshold below which the $\lam_1$ can be discarded and results in a reduced model.

	It should be noted that the analysis of the $\lam_i$, $i=1,2,3,4$, as functions of $\PI_1,\PI_2$, is based on just dimensional considerations, and is thus equation-free and model independent. Indeed, one needs afterwards a specific study, based at very least on numerical simulations, in order to identify the precise meaning of the approximation of the solutions of \eqref{eqn:dimensioless_ODE_projectile} by the reduced model in which $\lam_1$ is set to be zero.

	We return to the projectile model in \autoref{sec:intro_ProjPb} in the context of the novel scaling method to be introduced below.

\section{Optimal Scaling: Overview}
\label{sec:OS_overview}

	We consider models that can be formulated in terms of
	
\begin{equation}
\Dimensional{\F} \left(
\Dimensional{\var}_1, \dots, \Dimensional{\var}_{\NUM{\var}},
\Dimensional{\p}_1, \dots, \Dimensional{\p}_{\NUM{\p}}
\right) = \, 0 \, \UNITS{\Dimensional{\F}},
\label{eqn:dimensional_model}
\end{equation}

\noindent where $\Dimensional{\F}$ is a function (with dimensions $\UNITS{\Dimensional{\F}}$) of unknown variables and independent quantities $\Dimensional{\var}_1, \dots, \Dimensional{\var}_{\NUM{\var}}$ and physical parameters $\Dimensional{\p}_1, \dots, \Dimensional{\p}_{\NUM{\p}}$, whose numerical values

\begin{equation}
\F, \var_1, \dots, \var_{\NUM{\var}} \in \mathbb{C},
\quad
\p_1, \dots, \p_{\NUM{\p}} \in (0,\infty).
\end{equation}

\noindent Aiming to rewrite \eqref{eqn:dimensional_model} in unitless form, we define the dimensionless counterparts of $\Dimensional{\var}_1, \dots, \Dimensional{\var}_{\NUM{\var}}$ as:	
		
\begin{equation}
\varDL_1 \coloneqq \Dimensional{\var}_1 / \Dimensional{\thcc}_1,
\dots, 
\varDL_{\NUM{\var}} \coloneqq
\Dimensional{\var}_{\NUM{\var}} / \Dimensional{\thcc}_{\NUM{\var}},
\quad \WITH \quad
\UNITS{\Dimensional{\thcc}_j} = \UNITS{\Dimensional{\var}_j},
\quad \thcc_j \in (0,\infty),
\quad \forall j=1,\dots,\NUM{\var},
\label{eqn:change_of_var}
\end{equation}
		
\noindent being $\Dimensional{\thcc}_1,\dots,\Dimensional{\thcc}_{\NUM{\var}}$ the characteristic constants of the dimensional model \eqref{eqn:dimensional_model}. By applying \eqref{eqn:change_of_var}, it is possible to rewrite \eqref{eqn:dimensional_model} in unitless form
	
\begin{equation}
\f \left(
\varDL_1, \dots, \varDL_{\NUM{\var}},
\lam_1, \dots, \lam_{\NUMlam}
\right) = 0, 
\quad 
\f: \mathbb{C}^{\NUM{\var}} \times (0,\infty)^{\NUMlam} \to \mathbb{C},
\label{eqn:dimensionless_model}
\end{equation}
	
\noindent with $\lam_1, \dots, \lam_{\NUMlam} \in (0,\infty)$ being the dimensionless coefficients in \eqref{eqn:dimensionless_model}. The coefficients $\lam_1, \dots, \lam_{\NUMlam}$ in \eqref{eqn:dimensionless_model} depend only on the characteristic constants $\Dimensional{\thcc}_1,\dots,\Dimensional{\thcc}_{\NUM{\var}}$ and physical parameters $\Dimensional{\p}_1, \dots, \Dimensional{\p}_{\NUM{\p}}$, i.e.,

\begin{equation}
\lam_i = 
\lam_i(\Dimensional{\thcc}_1,\dots,\Dimensional{\thcc}_{\NUM{\var}},
\Dimensional{\p}_1, \dots, \Dimensional{\p}_{\NUM{\p}})
\in (0,\infty),
\quad \forall i=1, \dots, \NUMlam.
\label{eqn:lam_coeff_0}
\end{equation}

	Given a fixed choice of physical parameters $\Dimensional{\p}_1, \dots, \Dimensional{\p}_{\NUM{\p}}$, the numerical values $\thcc_1,\dots,\thcc_{\NUM{\var}} \in (0,\infty)$ of the characteristic constants $\Dimensional{\thcc}_1,\dots,\Dimensional{\thcc}_{\NUM{\var}}$ should be determined to complete the definition of the dimensionless model  \eqref{eqn:dimensionless_model}. 
	
	In the literature \cite{Holmes2009_ND,Scaling_Langtangen}, it is often suggested to set $\NUM{\var}$ out of the 
$\NUMlam$ coefficients $\lam_1, \dots, \lam_{\NUMlam}$ equal to $1$. Then, one aims to solve the resulting system of equations with respect to the values $\thcc_1,\dots,\thcc_{\NUM{\var}}$ of the characteristic constants. This approach presents some deficiencies, such as, e.g., at most $\NUM{\var}$ out of $\NUMlam$ independent quantities can be enforced to the value of $1$, with no control on the rest of the dimensionless coefficients. Furthermore, the selection of coefficients enforced to the value of $1$ depends on the particular problem with no general guidelines given so far.
	
	In this paper, we address these issues by proposing techniques for selection of numerical values $\thcc_1,\dots,\thcc_{\NUM{\var}} \in (0,\infty)$ of the characteristic constants $\Dimensional{\thcc}_1,\dots,\Dimensional{\thcc}_{\NUM{\var}}$ to scale the dimensional model \eqref{eqn:dimensional_model} effectively, according to the objectives set for the specific scaling method (see \autoref{tab:summary_OS} from \autoref{sec:achieve_OS_MinParam}). We remark that such procedures do not require a physical insight into the considered phenomenon and corresponding equations. 
		
\subsection{Assumptions and Main Results}	
\label{sec:Prop_OS}

	The scaling approaches presented in this study rely on the \namecrefs{prop:lam_coeff_I} listed below. Such assumptions give rise to the \namecrefs{th:OSC_formula} formulated in \autoref{sec:Th_nondim_setting} to constitute the tools for the nondimensional setting of the dimensional model \eqref{eqn:dimensional_model}. In addition, the \namecrefs{th:OSC_formula} stated in \autoref{sec:Th_opt} indicate how to optimize the coefficients of the resulting dimensionless model \eqref{eqn:dimensionless_model}.
			
\begin{assumption}

	Each of the dimensionless coefficients $\lam_1,\dots,\lam_{\NUMlam}$ \eqref{eqn:lam_coeff_0} can be written as a power-law monomial of $\Dimensional{\thcc}_1,\dots,\Dimensional{\thcc}_{\NUM{\var}}$ multiplied by a factor $\Dimensional{\Klam}_i$:
	
\begin{align}
& \lam_i = \Dimensional{\Klam}_i \,
\Dimensional{\thcc}_1^{\CC_{i,1}} \, 
\Dimensional{\thcc}_2^{\CC_{i,2}} \, 
\cdots \, 
\Dimensional{\thcc}_{\NUM{\var}}^{\CC_{i,\NUM{\var}}},
\quad \Dimensional{\Klam}_i = 
\Dimensional{\Klam}_i(\Dimensional{\p}_1, \dots, \Dimensional{\p}_{\NUM{\p}}),
\quad
\p_1, \dots, \p_{\NUM{\p}} \in (0,\infty),
\nonumber \\
& \Klam_i,\thcc_j \in (0,\infty),
\quad \CC_{i,j} \in \mathbb{R},
\quad \forall i=1,\dots,\NUMlam,
\quad \forall j=1,\dots,\NUM{\var},
\label{eqn:lam_coeff_I}
\end{align}	
	
\noindent where each $\Dimensional{\Klam}_i$ is an explicit function of $\Dimensional{\p}_1, \dots, \Dimensional{\p}_{\NUM{\p}}$ only, and all the exponents $\CC_{i,j}$ in \eqref{eqn:lam_coeff_I} are explicitly known real numbers.
	
\label{prop:lam_coeff_I}
\end{assumption}
	
\begin{assumption}

	Each of the factors $\Dimensional{\Klam}_1,\dots,\Dimensional{\Klam}_{\NUMlam}$ in \eqref{eqn:lam_coeff_I} can be written as a power-law monomial of the physical parameters $\Dimensional{\p}_1, \dots, \Dimensional{\p}_{\NUM{\p}}$ only:

\begin{equation}
\Dimensional{\Klam}_i =
\Dimensional{\p}_1^{\DD_{i,1}} \, 
\Dimensional{\p}_2^{\DD_{i,2}} \, 
\cdots \, 
\Dimensional{\p}_{\NUM{\p}}^{\DD_{i,\NUM{\p}}},
\quad \DD_{i,j} \in \mathbb{R},
\quad \forall i=1,\dots,\NUMlam,
\quad \forall j=1,\dots,\NUM{\p},
\label{eqn:Klam_powlaw_pp}
\end{equation}
	
\noindent with all the exponents $\DD_{i,j}$ in \eqref{eqn:Klam_powlaw_pp} to be explicitly known real numbers.

\label{prop:Klam_powlaw_pp}
\end{assumption}	

\begin{assumption}
	
	Each of the characteristic constants $\Dimensional{\thcc}_1,\dots,\Dimensional{\thcc}_{\NUM{\var}}$ can be written as a power-law monomial of the physical parameters $\Dimensional{\p}_1, \dots, \Dimensional{\p}_{\NUM{\p}}$ only:

\begin{equation}
\Dimensional{\thcc}_j = 
\Dimensional{\p}_1^{\TT_{1,j}} \, 
\Dimensional{\p}_2^{\TT_{2,j}} \, 
\cdots \, 
\Dimensional{\p}_{\NUM{\p}}^{\TT_{\NUM{\p},j}},
\quad \TT_{i,j} \in \mathbb{R},
\quad \forall i=1,\dots,\NUM{\p},
\quad \forall j=1,\dots,\NUM{\var},
\label{eqn:char_const_pow_PhysParam}
\end{equation}
	
\noindent where the exponents $\TT_{i,j}$ in \eqref{eqn:char_const_pow_PhysParam} are unknowns of scaling approaches.
	
\label{prop:char_const}
\end{assumption}	

\begin{assumption}

	The physical parameters $\Dimensional{\p}_1, \dots, \Dimensional{\p}_{\NUM{\p}}$ are independent, i.e., $\Dimensional{\p}_1^{\be_1} \, \Dimensional{\p}_2^{\be_2} \, \cdots \, \Dimensional{\p}_{\NUM{\p}}^{\be_{\NUM{\p}}} = 1 \Leftrightarrow \be_1 = \be_2 = \cdots = \be_{\NUM{\p}} = 0$. 
	
\label{prop:phys_param_indep}	
\end{assumption}	

\begin{assumption}

	The physical parameters $\Dimensional{\p}_1, \dots, \Dimensional{\p}_{\NUM{\p}}$ assume strictly positive numerical values $\p_1, \dots, \p_{\NUM{\p}} \in (0,\infty)$.
			
\label{prop:phys_param_positive}	
\end{assumption}	

\begin{assumption}

	The dimension of each $\Dimensional{\p}_1, \dots, \Dimensional{\p}_{\NUM{\p}}$ is not equal to $1$ and given by a power-law monomial of $\NUMunit$ independent units $\unit_1, \dots, \unit_{\NUMunit}$, i.e., 
	
\begin{equation}
\UNITS{\Dimensional{p}_j} =
\unit_1^{\bp_{1,j}} \, \unit_2^{\bp_{2,j}} \, \cdots \, \unit_{\NUMunit}^{\bp_{\NUMunit,j}} \neq 1, 
\quad \bp_{i,j} \in \mathbb{R},
\quad \forall i=1,\dots,\NUMunit,
\quad \forall j=1,\dots,\NUM{\p},
\label{eqn:phys_param_dimen}
\end{equation}

\noindent with $\unit_1^{\be_1} \, \unit_2^{\be_2} \, \cdots \, \unit_{\NUMunit}^{\be_{\NUMunit}} = 1 \Leftrightarrow \be_1 = \be_2 = \cdots = \be_{\NUMunit} = 0$, and all the exponents $\bp_{i,j}$ in \eqref{eqn:phys_param_dimen} being explicitly known real numbers.	

\label{prop:phys_param_dimen}			
\end{assumption}	

	Aiming to minimize the number of involved quantities, one can also require the independence of the $\NUMlam$ coefficients $\lam_1,\dots,\lam_{\NUMlam}$ \eqref{eqn:lam_coeff_0}, or, equivalently, verify that the rank of the matrix made by the exponents in \eqref{eqn:lam_coeff_I} and \eqref{eqn:Klam_powlaw_pp} is equal to $\NUMlam$, i.e.,

\begin{equation}
\Rank \left(
\begin{pmatrix}
\CC_{1,1} & \cdots & \CC_{1,\NUM{\var}} & \DD_{1,1} & \cdots & \DD_{1,\NUM{\p}} \\ 
\CC_{2,1} & \cdots & \CC_{2,\NUM{\var}} & \DD_{2,1} & \cdots & \DD_{2,\NUM{\p}} \\
\vdots & \ddots & \vdots & \vdots & \ddots & \vdots \\
\CC_{\NUMlam,1} & \cdots & \CC_{\NUMlam,\NUM{\var}} & \DD_{\NUMlam,1} & \cdots & \DD_{\NUMlam,\NUM{\p}}
\end{pmatrix} \right)
= \NUMlam.
\end{equation}

	In \autoref{sec:intro_ProjPb}, we show that \cref{prop:lam_coeff_I,prop:Klam_powlaw_pp,prop:phys_param_indep,prop:phys_param_positive,prop:phys_param_dimen} hold when considering the projectile model presented in \autoref{sec:CaseStudy_ProjPb}. For such a physical model, we also make \cref{prop:char_const}, as explicitly indicated by \eqref{eqn:thcc_ProjPb_pow_PhysParam}. Though such an assumption can be motivated by its dimensional consistency, there exist physical models satisfying \cref{prop:lam_coeff_I,prop:Klam_powlaw_pp,prop:phys_param_indep,prop:phys_param_positive,prop:phys_param_dimen}, while violating \cref{prop:char_const} (see, for example, the equation for energy density (2.49) in \cite{Stewart2004}).
	
\subsubsection{Nondimensional Setting}
\label{sec:Th_nondim_setting}

	\autoref{th:DIMLESS_quant} algebraically formalizes a criterion for dimensionlessness of physical parameters $\Dimensional{\p}_1,\dots,\Dimensional{\p}_{\NUM{\p}}$. The proof is shown in \autoref{sec:proofThII}.

\begin{restatable}[]{theorem}{ThII}
\label{th:DIMLESS_quant}
Given \namecref{prop:phys_param_dimen} \textnormal{\ref{prop:phys_param_dimen}} for physical parameters $\Dimensional{\p}_1,\dots,\Dimensional{\p}_{\NUM{\p}}$, any quantity of the form

\begin{equation}
\Dimensional{\p}_1^{\zz_1} \, 
\Dimensional{\p}_2^{\zz_2} \, 
\cdots \, 
\Dimensional{\p}_{\NUM{\p}}^{\zz_{\NUM{\p}}}, 
\quad \WITH \quad \zz_1,\zz_2,\dots,\zz_{\NUM{\p}}\in\mathbb{R},
\label{eqn:ZZdimless}	
\end{equation}

\noindent is dimensionless if and only if the vector $\VEC{\zz} \coloneqq ( \zz_1, \zz_2, \dots, \zz_{\NUM{\p}} )^{\Transp} \in \mathbb{R}^{\NUM{\p}}$ of exponents in \eqref{eqn:ZZdimless} belongs to the null space (kernel) of the matrix 

	\begin{equation}
	\MM  \coloneqq 	
	\begin{pmatrix}
	\bp_{1,1} & \bp_{1,2} & \cdots & \bp_{1,\NUM{\p}} \\
	\bp_{2,1} & \bp_{2,2} & \cdots & \bp_{2,\NUM{\p}} \\
	\vdots    & \vdots    & \ddots & \vdots      \\
	\bp_{\NUMunit,1} & \bp_{\NUMunit,2} & \cdots & \bp_{\NUMunit,\NUM{\p}}
	\end{pmatrix}
	\in \mathbb{R}^{\NUMunit \times \NUM{\p}},
	\label{eqn:matrixMM}
	\end{equation}

\noindent where the entries of $\MM$ \eqref{eqn:matrixMM} are the exponents in \eqref{eqn:phys_param_dimen} specified by \namecref{prop:phys_param_dimen} \textnormal{\ref{prop:phys_param_dimen}}.
\end{restatable}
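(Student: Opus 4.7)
The plan is to reduce the dimensionless condition to a linear algebra statement by applying \cref{prop:phys_param_dimen} to compute the dimensions of the monomial \eqref{eqn:ZZdimless} in terms of the independent units $\unit_1,\dots,\unit_{\NUMunit}$, and then invoking the independence of those units to translate the resulting condition into $\VEC{\zz}\in\Ker(\MM)$.

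First, I would write out the dimensions of the quantity \eqref{eqn:ZZdimless}. Using $\UNITS{\Dimensional{\p}_j}=\unit_1^{\bp_{1,j}}\cdots \unit_{\NUMunit}^{\bp_{\NUMunit,j}}$ from \eqref{eqn:phys_param_dimen}, and the rules for exponents of products of dimensional quantities, one obtains
\begin{equation}
\UNITS{\Dimensional{\p}_1^{\zz_1}\cdots\Dimensional{\p}_{\NUM{\p}}^{\zz_{\NUM{\p}}}}
=\prod_{j=1}^{\NUM{\p}}\prod_{i=1}^{\NUMunit}\unit_i^{\bp_{i,j}\zz_j}
=\prod_{i=1}^{\NUMunit}\unit_i^{\sum_{j=1}^{\NUM{\p}}\bp_{i,j}\zz_j}.
\end{equation}
The exponent of $\unit_i$ in this product is precisely the $i$-th entry of the vector $\MM\VEC{\zz}\in\mathbb{R}^{\NUMunit}$, with $\MM$ as in \eqref{eqn:matrixMM}.

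Next, I would prove the ``if'' direction. Assuming $\VEC{\zz}\in\Ker(\MM)$, each exponent $\sum_{j=1}^{\NUM{\p}}\bp_{i,j}\zz_j$ equals $0$, so the dimension of \eqref{eqn:ZZdimless} becomes $\unit_1^{0}\cdots\unit_{\NUMunit}^{0}=1$, which is what it means to be dimensionless. For the ``only if'' direction, I would use the independence of the units stated in \cref{prop:phys_param_dimen}: the condition $\unit_1^{\be_1}\cdots\unit_{\NUMunit}^{\be_{\NUMunit}}=1$ implies $\be_1=\cdots=\be_{\NUMunit}=0$. Applied with $\be_i=\sum_{j=1}^{\NUM{\p}}\bp_{i,j}\zz_j$, dimensionlessness forces every component of $\MM\VEC{\zz}$ to vanish, i.e., $\VEC{\zz}\in\Ker(\MM)$.

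The core reasoning is elementary once \cref{prop:phys_param_dimen} is in hand; the only genuine conceptual step is the appeal to the independence of the base units to pass from ``$\UNITS{\cdot}=1$'' to ``every exponent is zero'', and this is exactly the hypothesis stated at the end of \cref{prop:phys_param_dimen}. I would take care to state that the restriction $\UNITS{\Dimensional{\p}_j}\neq 1$ in \cref{prop:phys_param_dimen} is not needed for this theorem (so no degeneracy issues arise), and that the real numbers $\zz_1,\dots,\zz_{\NUM{\p}}$ are arbitrary. Hence no substantial obstacle is expected; the proof is a clean biconditional reduction to membership in the kernel of $\MM$.
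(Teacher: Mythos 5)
Your proposal is correct and follows essentially the same route as the paper: compute the dimension of the monomial via \cref{prop:phys_param_dimen}, observe that the exponents of the independent units are the entries of $\MM\,\VEC{\zz}$, and use the independence of $\unit_1,\dots,\unit_{\NUMunit}$ to conclude that dimensionlessness is equivalent to $\MM\,\VEC{\zz}=\VEC{0}$. The only cosmetic difference is that you split the biconditional into two directions while the paper writes it as a chain of equivalences; your side remark that the hypothesis $\UNITS{\Dimensional{\p}_j}\neq 1$ is not needed here is accurate and consistent with the paper's argument.
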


	\autoref{th:PI_quantities} determines the minimal number of independent dimensionless parameters and indicates a way to compute them in terms of the provided physical parameters $\Dimensional{\p}_1,\dots,\Dimensional{\p}_{\NUM{\p}}$. The proof is shown in \autoref{sec:proofThIII}.	
	
\begin{restatable}[]{theorem}{ThIII}
\label{th:PI_quantities}
Given \namecrefs{prop:phys_param_indep} \textnormal{\ref{prop:phys_param_indep}} to \textnormal{\ref{prop:phys_param_dimen}} for physical parameters $\Dimensional{\p}_1,\dots,\Dimensional{\p}_{\NUM{\p}}$, there exist only $\NUMPI = \NUM{\p} - \Rank(\MM)$ independent dimensionless parameters $\PI_1,\dots,\PI_{\NUMPI} \in (0,\infty)$ of the form \eqref{eqn:ZZdimless}, with $\MM$ as in \eqref{eqn:matrixMM}. Such parameters $\PI_1,\dots,\PI_{\NUMPI}$ are identified by the vectors composing a basis of the null space (kernel) of the matrix $\MM$ \eqref{eqn:matrixMM}:

	\begin{align}
	& \PI_k = \Dimensional{\p}_1^{\BB_{k,1}} \, \Dimensional{\p}_2^{\BB_{k,2}} \, \cdots \, \Dimensional{\p}_{\NUM{\p}}^{\BB_{k,\NUM{\p}}}, 
	\quad \VEC{\BB}_k \coloneqq ( \BB_{k,1}, \BB_{k,2}, \dots, \BB_{k,\NUM{\p}} )^{\Transp} 
	\in \mathbb{R}^{\NUM{\p}},
	\nonumber \\
	& \WITH \quad \VEC{\BB}_1, \VEC{\BB}_2, \dots, \VEC{\BB}_{\NUMPI}  
	\quad \mbox{constituing a} \quad \Basis(\Ker(\MM)),
	\label{eqn:Indep_Dimless_Param}
	\end{align}

\noindent and $\BB_{k,j}$ being the $j$-th entry of the vector $\VEC{\BB}_k$ for any $k=1,\dots,\NUMPI$ and $j=1,\dots,\NUM{\p}$.
\end{restatable}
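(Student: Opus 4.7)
The plan is to reduce the statement to a linear algebra fact about the kernel of $\MM$, using \autoref{th:DIMLESS_quant} as the main ingredient. First I would observe that a parameter of the form \eqref{eqn:ZZdimless} corresponds bijectively to an exponent vector $\VEC{\zz} \in \mathbb{R}^{\NUM{\p}}$, and that by \autoref{th:DIMLESS_quant} this parameter is dimensionless precisely when $\VEC{\zz} \in \Ker(\MM)$. By the rank-nullity theorem, $\Dim(\Ker(\MM)) = \NUM{\p} - \Rank(\MM) = \NUMPI$, so any basis $\VEC{\BB}_1,\dots,\VEC{\BB}_{\NUMPI}$ of $\Ker(\MM)$ has exactly $\NUMPI$ elements. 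The corresponding quantities $\PI_k$ defined by \eqref{eqn:Indep_Dimless_Param} are strictly positive by \cref{prop:phys_param_positive}, since each is a real power-law monomial of strictly positive numbers.

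Next I would verify the independence of $\PI_1,\dots,\PI_{\NUMPI}$ in the sense of \cref{prop:phys_param_indep}. Suppose $\PI_1^{\AAcoeff_1} \cdots \PI_{\NUMPI}^{\AAcoeff_{\NUMPI}} = 1$ for some $\AAcoeff_1,\dots,\AAcoeff_{\NUMPI} \in \mathbb{R}$. Expanding using \eqref{eqn:Indep_Dimless_Param} rewrites this as
\begin{equation}
\Dimensional{\p}_1^{\sum_{k=1}^{\NUMPI} \AAcoeff_k \BB_{k,1}} \,
\Dimensional{\p}_2^{\sum_{k=1}^{\NUMPI} \AAcoeff_k \BB_{k,2}} \,
\cdots \,
\Dimensional{\p}_{\NUM{\p}}^{\sum_{k=1}^{\NUMPI} \AAcoeff_k \BB_{k,\NUM{\p}}} = 1,
\end{equation}
and \cref{prop:phys_param_indep} forces every exponent $\sum_k \AAcoeff_k \BB_{k,j}$ to vanish, i.e., $\sum_k \AAcoeff_k \VEC{\BB}_k = \VEC{0}$. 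Since the $\VEC{\BB}_k$ form a basis of $\Ker(\MM)$, this linear combination can only vanish when $\AAcoeff_1 = \cdots = \AAcoeff_{\NUMPI} = 0$, establishing independence.

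I would then argue that $\PI_1,\dots,\PI_{\NUMPI}$ exhaust all dimensionless power-law monomials of the physical parameters. Given any dimensionless quantity $\Dimensional{\p}_1^{\zz_1} \cdots \Dimensional{\p}_{\NUM{\p}}^{\zz_{\NUM{\p}}}$, \autoref{th:DIMLESS_quant} places $\VEC{\zz}$ in $\Ker(\MM)$, so there exist unique $\CC_1,\dots,\CC_{\NUMPI} \in \mathbb{R}$ with $\VEC{\zz} = \sum_{k=1}^{\NUMPI} \CC_k \VEC{\BB}_k$. Substituting component-wise yields $\Dimensional{\p}_1^{\zz_1} \cdots \Dimensional{\p}_{\NUM{\p}}^{\zz_{\NUM{\p}}} = \PI_1^{\CC_1} \cdots \PI_{\NUMPI}^{\CC_{\NUMPI}}$. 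Hence every dimensionless power-law monomial is expressible through the $\PI_k$.

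Finally, minimality of $\NUMPI$ follows from the two properties just established: if a set of $\NUMPI' < \NUMPI$ dimensionless power-law monomials sufficed to express all others, their exponent vectors would span the $\NUMPI$-dimensional space $\Ker(\MM)$, which is impossible. The main subtlety in the argument is not the computation but keeping the two notions of independence aligned, namely multiplicative independence of the $\PI_k$ as in \cref{prop:phys_param_indep} versus linear independence of their exponent vectors $\VEC{\BB}_k$ in $\mathbb{R}^{\NUM{\p}}$. The bridge between them is precisely \cref{prop:phys_param_indep} applied to the $\Dimensional{\p}_j$, which lets the whole argument be phrased at the level of exponent vectors and reduced to elementary linear algebra on $\Ker(\MM)$.
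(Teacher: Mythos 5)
Your proof is correct and follows essentially the same route as the paper's: both reduce the statement to linear algebra on $\Ker(\MM)$ via \autoref{th:DIMLESS_quant}, establish strict positivity from \cref{prop:phys_param_positive}, derive multiplicative independence of the $\PI_k$ from \cref{prop:phys_param_indep} combined with linear independence of the basis vectors, and obtain exhaustiveness by expanding an arbitrary kernel vector in the basis $\VEC{\BB}_1,\dots,\VEC{\BB}_{\NUMPI}$. The only difference is cosmetic: you spell out the minimality of $\NUMPI$ with an explicit spanning argument, whereas the paper treats exhaustiveness itself as already implying that no further independent quantities exist.
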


	\autoref{th:lamb_pow_PI} states that the dimensionless coefficients $\lam_1,\dots,\lam_{\NUMlam}$ \eqref{eqn:lam_coeff_0} can be written in terms of the independent dimensionless parameters $\PI_1,\dots,\PI_{\NUMPI}$ \eqref{eqn:Indep_Dimless_Param} only. The proof is shown in \autoref{sec:proofThIV}.
	
\begin{restatable}[]{theorem}{ThIV}
\label{th:lamb_pow_PI}
Given \namecrefs{prop:lam_coeff_I} \textnormal{\ref{prop:lam_coeff_I}} to \textnormal{\ref{prop:phys_param_dimen}} and $\PI_1,\dots,\PI_{\NUMPI}$ \eqref{eqn:Indep_Dimless_Param}, each of the dimensionless coefficients $\lam_1,\dots,\lam_{\NUMlam}$ \eqref{eqn:lam_coeff_0} can be written as a power-law monomial of the independent dimensionless parameters $\PI_1,\dots,\PI_{\NUMPI}$ only, i.e.,
	
\begin{equation}
\lam_i = 
\PI_1^{\AAcoeff_{i,1}} \, \PI_2^{\AAcoeff_{i,2}}
\, \cdots \, \PI_{\NUMPI}^{\AAcoeff_{i,\NUMPI}},
\quad \AAcoeff_{i,k} \in \mathbb{R},
\quad \forall i=1,\dots,\NUMlam,
\quad \forall k=1,\dots,\NUMPI,	
\label{eqn:lambdas_fun_PI}
\end{equation}	
\noindent where the exponents $\AAcoeff_{i,k}$ in \eqref{eqn:lambdas_fun_PI} are unknowns of scaling procedures.
\end{restatable}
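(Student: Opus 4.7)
The plan is to reduce the claim to \autoref{th:PI_quantities} by first rewriting each $\lam_i$ as a dimensionless power-law monomial of the physical parameters $\Dimensional{\p}_1,\dots,\Dimensional{\p}_{\NUM{\p}}$ alone. To do this, I would substitute into \eqref{eqn:lam_coeff_I} (from \cref{prop:lam_coeff_I}) the expression for $\Dimensional{\Klam}_i$ supplied by \cref{prop:Klam_powlaw_pp} and the expressions for $\Dimensional{\thcc}_1,\dots,\Dimensional{\thcc}_{\NUM{\var}}$ supplied by \cref{prop:char_const}. Collecting the exponents of each $\Dimensional{\p}_k$ yields
\begin{equation}
\lam_i \, = \, \Dimensional{\p}_1^{\zz_{i,1}} \, \Dimensional{\p}_2^{\zz_{i,2}} \, \cdots \, \Dimensional{\p}_{\NUM{\p}}^{\zz_{i,\NUM{\p}}},
\qquad
\zz_{i,k} \, \coloneqq \, \DD_{i,k} \, + \, \sum_{j=1}^{\NUM{\var}} \CC_{i,j} \, \TT_{k,j},
\qquad
\forall i=1,\dots,\NUMlam, \; \forall k=1,\dots,\NUM{\p}.
\end{equation}

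Next, since each $\lam_i$ is by construction a dimensionless coefficient of the unitless model \eqref{eqn:dimensionless_model}, \autoref{th:DIMLESS_quant} (applicable under \cref{prop:phys_param_dimen}) forces the vector $\VEC{\zz}_i \coloneqq ( \zz_{i,1},\dots,\zz_{i,\NUM{\p}} )^{\Transp}$ to belong to $\Ker(\MM)$, with $\MM$ as in \eqref{eqn:matrixMM}. Because the vectors $\VEC{\BB}_1,\dots,\VEC{\BB}_{\NUMPI}$ chosen in \eqref{eqn:Indep_Dimless_Param} to define $\PI_1,\dots,\PI_{\NUMPI}$ form a basis of $\Ker(\MM)$, one can uniquely expand $\VEC{\zz}_i \, = \, \sum_{k=1}^{\NUMPI} \AAcoeff_{i,k} \, \VEC{\BB}_k$ for some $\AAcoeff_{i,k} \in \mathbb{R}$. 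Plugging this expansion back into the previous display and regrouping factors (exactly as in the final step of the proof of \autoref{th:PI_quantities}) delivers \eqref{eqn:lambdas_fun_PI}.

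Because everything reduces to a substitution followed by a direct invocation of already-established facts, no serious obstacle is expected. The only delicate point is bookkeeping: one must carefully track the indices while collecting the exponents of the $\Dimensional{\p}_k$ in the first step, and then apply \autoref{th:PI_quantities} pointwise, once for each $i=1,\dots,\NUMlam$, to obtain the exponents $\AAcoeff_{i,k}$ asserted in the statement.
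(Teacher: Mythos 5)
Your proposal is correct and follows essentially the same route as the paper's own proof: plug \cref{prop:Klam_powlaw_pp,prop:char_const} into \eqref{eqn:lam_coeff_I} to express each $\lam_i$ as a dimensionless power-law monomial of $\Dimensional{\p}_1,\dots,\Dimensional{\p}_{\NUM{\p}}$, use \autoref{th:DIMLESS_quant} to place the exponent vector in $\Ker(\MM)$, expand it uniquely in the basis $\VEC{\BB}_1,\dots,\VEC{\BB}_{\NUMPI}$ of \eqref{eqn:Indep_Dimless_Param}, and regroup factors. The only cosmetic difference is that you write out the collected exponents $\zz_{i,k} = \DD_{i,k} + \sum_{j} \CC_{i,j}\,\TT_{k,j}$ explicitly at the substitution step, a computation the paper states implicitly here and spells out only later in the proof of \autoref{th:sysYY}.
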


	\autoref{th:sysYY} offers a way to obtain the unknown exponents $\TT_{i,j}$ in \eqref{eqn:char_const_pow_PhysParam}. The proof is shown in \autoref{sec:proofThV}.
	
\begin{restatable}[]{theorem}{ThV}
\label{th:sysYY}
Given \namecrefs{prop:lam_coeff_I} \textnormal{\ref{prop:lam_coeff_I}} to \textnormal{\ref{prop:phys_param_dimen}}, the vector

\begin{equation}
\VEC{\TT} \coloneqq  
\begin{pmatrix}
\VEC{\TT}_1 \\ \VEC{\TT}_2 \\ \vdots \\ \VEC{\TT}_{\NUM{\var}}
\end{pmatrix}
\in \mathbb{R}^{\NUM{\p} \, \NUM{\var}},
\quad \WITH \quad
\VEC{\TT}_j  \coloneqq 
\begin{pmatrix}
\TT_{1,j} \\ \TT_{2,j} \\ \vdots \\ \TT_{\NUM{\p},j}
\end{pmatrix}
\in \mathbb{R}^{\NUM{\p}},
\label{eqn:vecYY_def}
\end{equation}	

\noindent of unknown exponents in \eqref{eqn:char_const_pow_PhysParam} must satisfy the linear system

\begin{equation}
\matrixYY \, \VEC{\TT} = \vectorRHSyy,
\quad \matrixYY \in \mathbb{R}^{(\NUMlam \, \NUM{\p}) \times (\NUM{\p} \, \NUM{\var})},
\quad \vectorRHSyy \in \mathbb{R}^{\NUMlam \, \NUM{\p}},
\label{eqn:SyS_TT}
\end{equation}

\noindent where the entries of $\matrixYY$ are explicitly known real numbers, while the entries of $\vectorRHSyy$ are explicit affine functions of the exponents $\AAcoeff_{i,k}$ in \eqref{eqn:lambdas_fun_PI}, as specified by \eqref{eqn:matYY_vecYY_def} in \autoref{sec:proofThV}.
\end{restatable}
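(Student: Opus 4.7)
The plan is to express each $\lam_i$ as a power-law monomial of the physical parameters $\Dimensional{\p}_1, \ldots, \Dimensional{\p}_{\NUM{\p}}$ in two distinct ways, then match exponents using the independence of physical parameters from \cref{prop:phys_param_indep}.

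First I would substitute \eqref{eqn:Klam_powlaw_pp} (\cref{prop:Klam_powlaw_pp}) for $\Dimensional{\Klam}_i$ and \eqref{eqn:char_const_pow_PhysParam} (\cref{prop:char_const}) for each $\Dimensional{\thcc}_k$ into \eqref{eqn:lam_coeff_I} (\cref{prop:lam_coeff_I}). Collecting exponents on each $\Dimensional{\p}_j$ yields
\begin{equation*}
\lam_i = \prod_{j=1}^{\NUM{\p}} \Dimensional{\p}_j^{\DD_{i,j} + \sum_{k=1}^{\NUM{\var}} \CC_{i,k}\, \TT_{j,k}}, \quad \forall i=1,\ldots,\NUMlam.
\end{equation*}
On the other hand, \autoref{th:lamb_pow_PI} (whose hypotheses hold) gives $\lam_i = \PI_1^{\AAcoeff_{i,1}} \cdots \PI_{\NUMPI}^{\AAcoeff_{i,\NUMPI}}$, and plugging in the definition \eqref{eqn:Indep_Dimless_Param} of each $\PI_k$ produces
\begin{equation*}
\lam_i = \prod_{j=1}^{\NUM{\p}} \Dimensional{\p}_j^{\sum_{k=1}^{\NUMPI} \AAcoeff_{i,k}\, \BB_{k,j}}, \quad \forall i=1,\ldots,\NUMlam.
\end{equation*}

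Next I would equate these two representations. By \cref{prop:phys_param_indep,prop:phys_param_positive}, two such monomials in $\Dimensional{\p}_1, \ldots, \Dimensional{\p}_{\NUM{\p}}$ agree if and only if their exponents match componentwise, yielding
\begin{equation*}
\sum_{k=1}^{\NUM{\var}} \CC_{i,k}\, \TT_{j,k} = \sum_{k=1}^{\NUMPI} \AAcoeff_{i,k}\, \BB_{k,j} - \DD_{i,j}, \quad \forall i=1,\ldots,\NUMlam, \; \forall j=1,\ldots,\NUM{\p}.
\end{equation*}
This is a system of $\NUMlam \cdot \NUM{\p}$ linear equations in the $\NUM{\p} \cdot \NUM{\var}$ unknowns $\TT_{j,k}$ collected in $\VEC{\TT}$ by \eqref{eqn:vecYY_def}. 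Ordering the rows by the pair $(i,j)$ and the columns consistently with the stacking $\VEC{\TT} = (\VEC{\TT}_1^{\Transp}, \ldots, \VEC{\TT}_{\NUM{\var}}^{\Transp})^{\Transp}$, the left-hand side assembles into a matrix $\matrixYY \in \mathbb{R}^{(\NUMlam \NUM{\p}) \times (\NUM{\p} \NUM{\var})}$ whose entries are the explicitly known real numbers $\CC_{i,k}$ of \cref{prop:lam_coeff_I}, while the right-hand side entries $\sum_{k=1}^{\NUMPI} \AAcoeff_{i,k}\, \BB_{k,j} - \DD_{i,j}$ are affine functions of the $\AAcoeff_{i,k}$ with known slopes $\BB_{k,j}$ and known intercepts $-\DD_{i,j}$, producing the vector $\vectorRHSyy \in \mathbb{R}^{\NUMlam \NUM{\p}}$.

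The only genuine obstacle is the bookkeeping required to recognize that the resulting block pattern matches precisely the explicit form announced in the statement (as displayed in \eqref{eqn:matYY_vecYY_def}); no mathematical tool beyond the already established \autoref{th:lamb_pow_PI} and \cref{prop:lam_coeff_I,prop:Klam_powlaw_pp,prop:char_const,prop:phys_param_indep,prop:phys_param_positive,prop:phys_param_dimen} is needed.
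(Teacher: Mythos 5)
Your proposal is correct in substance but takes a genuinely different route from the paper's proof. You equate, exponent by exponent, two monomial representations of each $\lam_i$ in the physical parameters, obtaining $\sum_{k=1}^{\NUM{\var}} \CC_{i,k}\,\TT_{j,k} = \sum_{k=1}^{\NUMPI}\AAcoeff_{i,k}\,\BB_{k,j} - \DD_{i,j}$ for all $i,j$; in block form the coefficient matrix consists of $\NUM{\p}\times\NUM{\p}$ identity blocks scaled by the $\CC_{i,k}$, and the right-hand side of block $i$ is $\BBmatrix\,\VEC{\AAcoeff}_i - \VEC{\DD}_i$. The paper instead characterizes membership of the exponent vectors $\VEC{\LL}_i$ in $\Colsp(\BBmatrix)$: it picks a basis $\VVmatrix$ of $\Ker(\BBmatrix^{\Transp})$ to produce the $\AAcoeff$-free equations $\matrixTT\,\VEC{\TT}=\vectorRHStt$, and extracts coordinates via the pseudo-inverse, $\VEC{\AAcoeff}_i = \BBmatrix^{\PseudoInv}\,\VEC{\LL}_i$, to produce $\aamatrix\,\VEC{\TT} = \VEC{\AAcoeff}-\aavec$; stacking these two gives \eqref{eqn:matYY_vecYY_def}. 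The two systems are equivalent: yours is carried to the paper's by left-multiplying each block $i$ with the invertible $\NUM{\p}\times\NUM{\p}$ matrix obtained by stacking $\BBmatrix^{\PseudoInv}$ over $\VVmatrix^{\Transp}$ (injective because $\Ker(\VVmatrix^{\Transp})=\Colsp(\BBmatrix)$, on which $\BBmatrix^{\PseudoInv}$ is injective), so both have the same solution set, explicitly known coefficient matrices, and right-hand sides affine in the $\AAcoeff_{i,k}$. Your argument is more elementary and transparent; what the paper's construction buys is the explicit separation of solvability conditions (independent of $\AAcoeff$) from coordinate equations (where $\AAcoeff$ enters with identity coefficients), a structure used verbatim in the proof of \autoref{th:sysAA} and in the realization procedure of \autoref{tab:summary_OS}. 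One caveat: your closing claim that your block pattern ``matches precisely'' the form in \eqref{eqn:matYY_vecYY_def} is not accurate --- your pair $(\matrixYY,\vectorRHSyy)$ is not the one displayed there --- so, read literally, the theorem's final clause (``as specified by \eqref{eqn:matYY_vecYY_def}'') requires either the paper's construction or an added remark that your system is an invertible transformation of it.
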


	\autoref{th:sysAA} provides a way for finding unknown exponents $\AAcoeff_{i,k}$ in \eqref{eqn:lambdas_fun_PI} (\autoref{th:lamb_pow_PI}). The proof is shown in \autoref{sec:proofThVI}.
	 
\begin{restatable}[]{theorem}{ThVI}
\label{th:sysAA}
Given \namecrefs{prop:lam_coeff_I} \textnormal{\ref{prop:lam_coeff_I}} to \textnormal{\ref{prop:phys_param_dimen}} and the consequent expression \eqref{eqn:lambdas_fun_PI} of coefficients $\lam_1,\dots,\lam_{\NUMlam}$ \eqref{eqn:lam_coeff_0}, the vector

\begin{equation}
\VEC{\AAcoeff}  \coloneqq 
\begin{pmatrix}
\VEC{\AAcoeff}_1 \\ \VEC{\AAcoeff}_2 \\ \vdots \\ \VEC{\AAcoeff}_{\NUMlam}
\end{pmatrix}
\in \mathbb{R}^{\NUMlam \, \NUMPI},
\quad \WITH \quad
\VEC{\AAcoeff}_i  \coloneqq  
\begin{pmatrix}
\AAcoeff_{i,1} \\ \AAcoeff_{i,2} \\ \vdots \\ \AAcoeff_{i,\NUMPI}
\end{pmatrix}
\in \mathbb{R}^{\NUMPI},
\label{eqn:AAdef}
\end{equation}
	
\noindent of exponents in \eqref{eqn:lambdas_fun_PI} must satisfy the linear system

\begin{equation}
\matrixSysAA \, \VEC{\AAcoeff} = \lhsSysAA,
\quad
\matrixSysAA \in \mathbb{R}^{\NUMrowsAA \times \NUMcolsAA},
\quad
\lhsSysAA \in \mathbb{R}^{\NUMrowsAA},
\quad
\NUMcolsAA \coloneqq \NUMlam \, \NUMPI,
\quad
\NUMrowsAA \coloneqq \NUMlam \, \NUM{\p} - \Rank(\matrixYY),
\label{eqn:constr_SysAA}
\end{equation}		

\noindent where the entries of $\matrixSysAA$ and $\lhsSysAA$ are explicitly known real numbers identified by \eqref{eqn:SysAA_MatrCoeffdef}-\eqref{eqn:SysAA_RHSdef} in \autoref{sec:proofThVI}. The matrix $\matrixYY \in \mathbb{R}^{(\NUMlam \, \NUM{\p}) \times (\NUM{\p} \, \NUM{\var})}$ is explicitly known, as it is given in \eqref{eqn:SyS_TT} and specified by \eqref{eqn:matYY_vecYY_def} in \autoref{sec:proofThV}.
\end{restatable}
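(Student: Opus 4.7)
The plan is to apply the Rouché--Capelli solvability criterion to the linear system $\matrixYY \, \VEC{\TT} = \vectorRHSyy$ established in \autoref{th:sysYY}, and extract from its consistency condition the linear constraints that $\VEC{\AAcoeff}$ must satisfy. The key observation is that $\vectorRHSyy$ depends affinely on $\VEC{\AAcoeff}$ through the block structure \eqref{eqn:matYY_vecYY_def}, so any requirement of the form ``$\vectorRHSyy$ lies in $\Colsp(\matrixYY)$'' automatically translates into a linear condition on $\VEC{\AAcoeff}$.

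First I would invoke \autoref{th:sysYY}: since the exponents $\TT_{i,j}$ in \eqref{eqn:char_const_pow_PhysParam} exist as real numbers by \cref{prop:char_const}, the vector $\VEC{\TT}$ they form is a genuine solution of $\matrixYY \, \VEC{\TT} = \vectorRHSyy$. Hence $\vectorRHSyy \in \Colsp(\matrixYY)$, which by the fundamental theorem of linear algebra coincides with the orthogonal complement of the left null space $\Ker(\matrixYY^{\Transp}) \subseteq \mathbb{R}^{\NUMlam \, \NUM{\p}}$. The latter has dimension $\NUMlam \, \NUM{\p} - \Rank(\matrixYY) = \NUMrowsAA$, matching the promised number of rows of $\matrixSysAA$.

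Next I would pick any basis $\VEC{\WW}_1, \dots, \VEC{\WW}_{\NUMrowsAA}$ of $\Ker(\matrixYY^{\Transp})$ and assemble it as columns of a matrix $\WWmatrix \in \mathbb{R}^{(\NUMlam \, \NUM{\p}) \times \NUMrowsAA}$. Consistency of $\matrixYY \, \VEC{\TT} = \vectorRHSyy$ is then equivalent to the $\NUMrowsAA$ scalar conditions $\WWmatrix^{\Transp} \, \vectorRHSyy = \VEC{0}$. Partitioning $\WWmatrix^{\Transp} = ( \WWmatrix_1^{\Transp} \mid \WWmatrix_2^{\Transp} )$ conformally with the block decomposition $\vectorRHSyy = ( (\VEC{\AAcoeff} - \aavec)^{\Transp}, \vectorRHStt^{\Transp} )^{\Transp}$ from \eqref{eqn:matYY_vecYY_def}, this rearranges to $\WWmatrix_1^{\Transp} \, \VEC{\AAcoeff} = \WWmatrix_1^{\Transp} \, \aavec - \WWmatrix_2^{\Transp} \, \vectorRHStt$, which is exactly of the form $\matrixSysAA \, \VEC{\AAcoeff} = \lhsSysAA$ with $\NUMrowsAA$ rows and $\NUMcolsAA = \NUMlam \, \NUMPI$ columns. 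All entries are explicit, because $\matrixYY$ has explicit entries (so a basis $\WWmatrix$ can be obtained by Gaussian elimination on $\matrixYY^{\Transp}$), while $\aavec$ and $\vectorRHStt$ were already given in closed form in \eqref{eqn:aamat_aavec_def} and \eqref{eqn:TTtau_def}.

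The main obstacle I anticipate is conceptual rather than computational: one must resist the temptation to treat $\VEC{\AAcoeff}$ as a free parameter, since $\VEC{\AAcoeff}$ is tightly coupled to $\VEC{\TT}$ through \eqref{eqn:lambdas_pow_param_p}--\eqref{eqn:LL_lin_combo_ZZ}, and the constraints on $\VEC{\AAcoeff}$ only emerge indirectly, as the Fredholm-type solvability condition of the system that $\VEC{\TT}$ must satisfy. A minor subtlety is that the basis of $\Ker(\matrixYY^{\Transp})$ is not canonical, so $\matrixSysAA$ and $\lhsSysAA$ are determined only up to invertible row operations; any specific construction (for instance, reducing $\matrixYY$ to row echelon form before identifying the redundant rows) yields the explicit formulas \eqref{eqn:SysAA_MatrCoeffdef}--\eqref{eqn:SysAA_RHSdef} stated in the theorem, and the solution set for $\VEC{\AAcoeff}$ is independent of this choice.
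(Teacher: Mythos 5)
Your proposal is correct and follows essentially the same route as the paper's proof: both invoke \autoref{th:sysYY} together with \cref{prop:char_const} to deduce $\vectorRHSyy \in \Colsp(\matrixYY) \equiv \left( \Ker(\matrixYY^{\Transp}) \right)^{\Ortho}$, assemble a basis of $\Ker(\matrixYY^{\Transp})$ into $\WWmatrix$, impose $\WWmatrix^{\Transp} \vectorRHSyy = \VEC{0}$, and split $\WWmatrix^{\Transp}$ conformally with the blocks $\VEC{\AAcoeff} - \aavec$ and $\vectorRHStt$ of \eqref{eqn:matYY_vecYY_def} to obtain exactly \eqref{eqn:SysAA_MatrCoeffdef}--\eqref{eqn:SysAA_RHSdef}. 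Your Rouché--Capelli framing and the remark on non-uniqueness of the basis are just explicit namings of what the paper does implicitly.
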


\subsubsection{Optimization}
\label{sec:Th_opt}

	\autoref{th:OSC_formula} enables the computation of numerical values $\thcc_1,\dots,\thcc_{\NUM{\var}}$ of the characteristic constants $\Dimensional{\thcc}_1,\dots,\Dimensional{\thcc}_{\NUM{\var}}$ that minimize the deviation $\costOS_{\UNITscale}$ from unity of the magnitudes of $\lam_i$ \eqref{eqn:lam_coeff_0}, with $i \in \UNITscale \subseteq \{1,\dots,\NUMlam\}$. The proof is shown in \autoref{sec:proofThI}.	 

\begin{restatable}[]{theorem}{ThI}
\label{th:OSC_formula}
Given \namecref{prop:lam_coeff_I} \textnormal{\ref{prop:lam_coeff_I}} for coefficients $\lam_1,\dots,\lam_{\NUMlam}$ \eqref{eqn:lam_coeff_0}, $\UNITscale \subseteq \{1,\dots,\NUMlam\}$ and any fixed choice of $\Dimensional{\p}_1, \dots, \Dimensional{\p}_{\NUM{\p}}$, the values of $(\thcc_1,\dots,\thcc_{\NUM{\var}}) \in (0,\infty)^{\NUM{\var}}$ attaining the global minimum of the function 

\begin{equation}
\costOS_{\UNITscale}(\thcc_1,\dots,\thcc_{\NUM{\var}}) \coloneqq 
\sum_{i \in \UNITscale} \left( 
\log_{10}( 
\lam_i(\Dimensional{\thcc}_1,\dots,\Dimensional{\thcc}_{\NUM{\var}},
\Dimensional{\p}_1, \dots, \Dimensional{\p}_{\NUM{\p}}) ) 
\right)^2,
\quad \costOS_{\UNITscale} : (0,\infty)^{\NUM{\var}} \to [0,\infty),
\end{equation}

\noindent correspond to the solutions of the following linear system of $\NUM{\var}$ equations with $\NUM{\var}$ unknowns $\Rthcc_1,\dots,\Rthcc_{\NUM{\var}} \in \mathbb{R}$:

	\begin{equation}
	\sum_{k=1}^{\NUM{\var}}
	\left( \sum_{i \in \UNITscale} \, \CC_{i,k} \, \CC_{i,j} \right)
	\Rthcc_k
	=
	- \sum_{i \in \UNITscale} \, \CC_{i,j} \, \log_{10}(\Klam_i),
	\quad
	\forall j=1,\dots,\NUM{\var},	
	\label{eqn:lin_sys_rho}
	\end{equation}	
	 
\noindent where $\Rthcc_k  \coloneqq  \log_{10}(\thcc_k)$ and the numerical value $\Klam_i \in (0,\infty)$ of each $\Dimensional{\Klam}_i=\Dimensional{\Klam}_i(\Dimensional{\p}_1, \dots, \Dimensional{\p}_{\NUM{\p}})$ is fixed by the choice of $\Dimensional{\p}_1, \dots, \Dimensional{\p}_{\NUM{\p}}$.
\end{restatable}

	\autoref{th:SysToSolveOSAA} enables the computation of numerical values of the exponents $\AAcoeff_{i,k}$ in \eqref{eqn:lambdas_fun_PI} that minimize the deviation $\costOS_{\UNITscale}$ from unity of the magnitudes of $\lam_i$, with $i \in \UNITscale \subseteq \{1,\dots,\NUMlam\}$. The proof is shown in \autoref{sec:proofThVII}.

\begin{restatable}[]{theorem}{ThVII}
\label{th:SysToSolveOSAA}
Given any fixed choice of $\PI_1,\dots,\PI_{\NUMPI} \in (0,\infty)$ and set $\UNITscale \subseteq \{1,\dots,\NUMlam\}$, the values of $\VEC{\AAcoeff} \in \mathbb{R}^{\NUMcolsAA}$ \eqref{eqn:AAdef}, $\NUMcolsAA \coloneqq \NUMlam \, \NUMPI$, attaining the global minimum of the function 

\begin{equation}
\costOS_{\UNITscale}(\VEC{\AAcoeff}) \coloneqq \sum_{i \in \UNITscale} \left( \log_{10}( \lam_i ) \right)^2,
\quad \lam_i = \PI_1^{\AAcoeff_{i,1}} \, \PI_2^{\AAcoeff_{i,2}} \, \cdots \, \PI_{\NUMPI}^{\AAcoeff_{i,\NUMPI}},
\quad \costOS_{\UNITscale} : \mathbb{R}^{\NUMcolsAA} \to [0,\infty), 
\label{eqn:CostOSC_def}
\end{equation}

\noindent subject to a constraint of the form 

\begin{equation}
\helpMatSysAA \, \VEC{\AAcoeff} = \helpRhsSysAA,
\quad \helpMatSysAA \in \mathbb{R}^{\helpNrSysAA \times \NUMcolsAA},
\quad \helpRhsSysAA \in \mathbb{R}^{\helpNrSysAA},
\quad \helpNrSysAA \in \mathbb{N},
\label{eqn:Constr_SysOSAA}
\end{equation}

\noindent satisfy the system of equations

\begin{equation}
\helpMatSysAA \, \VEC{\AAcoeff} = \helpRhsSysAA, \quad
\grad \costOS_{\UNITscale}(\VEC{\AAcoeff}) \, + \, 
\helpMatSysAA^{\Transp} \, \helpVecSysAA = \VEC{0},
\label{eqn:SysToSolveOSAA}
\end{equation}

\noindent assuming that

\begin{equation}
\Rank(\helpMatSysAA) = \helpNrSysAA < \NUMcolsAA,
\label{eqn:HP_OS_AA_analytic_form}
\end{equation}
	
\noindent and there is a vector $\helpVecSysAA \in \mathbb{R}^{\helpNrSysAA}$ such that \eqref{eqn:SysToSolveOSAA} holds.

\end{restatable}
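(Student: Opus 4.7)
The plan is to view the problem \eqref{eqn:CostOSC_def}--\eqref{eqn:Constr_SysOSAA} as a convex quadratic program with affine equality constraints and derive \eqref{eqn:SysToSolveOSAA} as the associated Lagrangian (KKT) stationarity system.

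First I would fix strictly positive $\PI_1,\dots,\PI_{\NUMPI}$ and note that $\log_{10}(\lam_i) = \sum_{k=1}^{\NUMPI} \AAcoeff_{i,k} \log_{10}(\PI_k)$ is a linear functional of $\VEC{\AAcoeff}$. Consequently, $\costOS_{\UNITscale}(\VEC{\AAcoeff})$ is a sum of squares of linear functionals, i.e.\ a convex quadratic on $\mathbb{R}^{\NUMcolsAA}$ whose Hessian is positive semidefinite at every point. This convexity is the structural fact that later turns first-order stationarity into a sufficient condition for global optimality.

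Next I would form the Lagrangian $L(\VEC{\AAcoeff}, \helpVecSysAA) = \costOS_{\UNITscale}(\VEC{\AAcoeff}) + \helpVecSysAA^{\Transp}(\helpMatSysAA \, \VEC{\AAcoeff} - \helpRhsSysAA)$ and impose $\grad_{\VEC{\AAcoeff}} L = \VEC{0}$ together with primal feasibility $\helpMatSysAA \, \VEC{\AAcoeff} = \helpRhsSysAA$. Using the standard identity $\grad_{\VEC{\AAcoeff}}(\helpVecSysAA^{\Transp} \helpMatSysAA \VEC{\AAcoeff}) = \helpMatSysAA^{\Transp} \helpVecSysAA$ then reproduces the coupled system \eqref{eqn:SysToSolveOSAA} exactly. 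By hypothesis \eqref{eqn:HP_OS_AA_analytic_form}, the rows of $\helpMatSysAA$ are linearly independent, so the Linear Independence Constraint Qualification holds throughout the feasible set; this yields necessity of the KKT conditions at any local minimizer (e.g.\ Theorem 12.1 of \cite{Nocedal_book_2006}). The convexity of $\costOS_{\UNITscale}$ combined with affineness of the constraints (Theorem 12.4 of \cite{Nocedal_book_2006}) upgrades these conditions to sufficiency, so any pair $(\VEC{\AAcoeff}, \helpVecSysAA)$ solving \eqref{eqn:SysToSolveOSAA} has its primal component equal to a global minimizer of $\costOS_{\UNITscale}$ over the feasible affine subspace.

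The main obstacle—mild but genuine—is existence of a minimizer: a convex quadratic restricted to an unbounded affine subspace need not be bounded below, and when the Hessian is only positive semidefinite the infimum may fail to be attained. This is precisely the role of the auxiliary hypothesis that some $\helpVecSysAA$ satisfying \eqref{eqn:SysToSolveOSAA} exists, which directly produces a stationary point of the Lagrangian and, via the sufficiency argument above, a global minimizer on the feasible set. The remaining bookkeeping—matching the sign convention of the Lagrangian with the ``$+\,\helpMatSysAA^{\Transp}\,\helpVecSysAA$'' appearing in \eqref{eqn:SysToSolveOSAA} and verifying the elementary gradient identities—is routine.
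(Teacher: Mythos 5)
Your proof is correct and takes essentially the same route as the paper's: both cast \eqref{eqn:CostOSC_def}--\eqref{eqn:Constr_SysOSAA} as minimization of a convex quadratic over an affine set and characterize global optimality by the Lagrangian/KKT system \eqref{eqn:SysToSolveOSAA}, the paper by citing Section 10.1 of \cite{boyd_vandenberghe_2004} where you derive necessity from LICQ and sufficiency from convexity via \cite{Nocedal_book_2006}. Your explicit handling of attainment through the assumed existence of the multiplier $\helpVecSysAA$ coincides with the theorem's own hypothesis, so no gap remains.
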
	
	
\subsection{Optimal Scaling}
\label{sec:achieve_OS}

	Optimal Scaling ($\OSapproach$) proposed in \cite{RUSCONI2019106944} enables such a rational choice of $\thcc_1,\dots,\thcc_{\NUM{\var}} \in (0,\infty)$, that ensures the deviation of magnitudes of $\lam_1,\dots,\lam_{\NUMlam}$ \eqref{eqn:lam_coeff_0} from unity to be minimal for any fixed choice of physical parameters $\Dimensional{\p}_1, \dots, \Dimensional{\p}_{\NUM{\p}}$. This is advantageous for numerical computations as the floating-point numbers have the highest density in the interval $(0,1)$ and their density decreases when moving further away from this interval. So, it is always convenient to manipulate numerically the quantities with magnitude $\approx 1$, in order to avoid severe round-off errors and to improve the conditioning of the problem at hand.			
		
	In \cite{RUSCONI2019106944}, such a methodology was first applied to liquid crystal and quantum models. The reported data demonstrate that $\OSapproach$ ensures a strong reduction of the range of orders of magnitude assumed by involved quantities. Despite such a strong transformation, $\OSapproach$ allows for an accurate estimation of characteristic features of considered systems. As an example, we showed that $\OSapproach$ correctly describes the large-body limit of Landau-de Gennes model for liquid crystals \cite{Gartland2018}. Moreover, for Schrödinger Equation \cite{PhysRevA.53.2135}, we proved that the characteristic constants, provided by $\OSapproach$, follow the same trend as the often used atomic units. 
	
	Further, it was demonstrated that $\OSapproach$ was also beneficial for the case of the Population Balance Model \cite{Ramkrishna2000} for Latex Particles Morphology Formation \cite{DDPM_2016,PhDThesis_Rusconi_PMCQS}. When compared with other scaling approaches \cite{Holmes2009_ND,Scaling_Langtangen}, it assured the smallest variation $\Max_{i=1,\dots,\NUMlam} \lam_i / \Min_{i=1,\dots,\NUMlam} \lam_i$ of coefficients \eqref{eqn:lam_coeff_0} and allowed reducing round-off and numerical errors, as discussed in \cite{RUSCONI2019106944}.

\subsubsection{Formulation}
\label{sec:OptScal}
	
	Given any fixed choice of physical parameters $\Dimensional{\p}_1, \dots, \Dimensional{\p}_{\NUM{\p}}$, $\OSapproach$ proposed in \cite{RUSCONI2019106944} finds $\VEC{\thcc}=\thOS$ in such a way that the deviation of magnitudes of $\lam_1,\dots,\lam_{\NUMlam}$ \eqref{eqn:lam_coeff_0} from unity is minimal, i.e., 
	
\begin{align}
& \thOS \coloneqq
\argmin_{\VEC{\thcc} \in (0,\infty)^{\NUM{\var}}}
\costOS(\VEC{\thcc}),
\quad
\costOS(\VEC{\thcc}) \coloneqq 	
\sum_{i=1}^{\NUMlam}
\left( \log_{10}( \lam_i(\Dimensional{\thcc}_1,\dots,\Dimensional{\thcc}_{\NUM{\var}},
\Dimensional{\p}_1, \dots, \Dimensional{\p}_{\NUM{\p}}) ) - \Thcc_i \right)^2,
\nonumber \\
& \WHERE \quad 
\VEC{\thcc} \coloneqq (\thcc_1,\dots,\thcc_{\NUM{\var}}) \in (0,\infty)^{\NUM{\var}}
\quad \mbox{and the constants} \quad 
\Thcc_1=\cdots=\Thcc_{\NUMlam}=0.
\label{eqn:thOS_def}
\end{align}
		
\noindent In \eqref{eqn:thOS_def}, the order of magnitude of $\lam_i$ is computed as $\log_{10}(\lam_i)$, while $\Thcc_i = 0$ to target the desired order $\log_{10}(1) = 0$, $\forall i=1,\dots,\NUMlam$.

	With $\UNITscale = \{1,\dots,\NUMlam\}$, the function $\costOS$ \eqref{eqn:thOS_def} is equivalent to $\costOS_{\UNITscale}$ in \autoref{th:OSC_formula} (\autoref{sec:Th_opt}). Then, given \cref{prop:lam_coeff_I} (\autoref{sec:Prop_OS}) and any fixed choice of $\Dimensional{\p}_1, \dots, \Dimensional{\p}_{\NUM{\p}}$, the points $\thOS$ of global minimum of $\costOS \equiv \costOS_{\UNITscale}$ can be computed by means of \eqref{eqn:lin_sys_rho}, as follows from \autoref{th:OSC_formula} with $\UNITscale = \{1,\dots,\NUMlam\}$. In this way, $\OSapproach$ provides scaled coefficients $\lam_1,\dots,\lam_{\NUMlam}$ \eqref{eqn:lam_coeff_0} with magnitudes as close as possible to the targeted value $1$. In other words, $\OSapproach$ attempts to make
			
\begin{equation}
\lam_i(\Dimensional{\thcc}_1,\dots,\Dimensional{\thcc}_{\NUM{\var}},
\Dimensional{\p}_1, \dots, \Dimensional{\p}_{\NUM{\p}}) \approx 1,
\quad \WHEN \quad
\Dimensional{\thcc}_j = \thOScomp{j} \, \UNITS{\Dimensional{\thcc}_j},
\quad \forall i,j,
\label{eqn:OS_lam_regime}
\end{equation}	
	
\noindent with $\thOScomp{j}$ being the $j$-th component of the vector $\thOS$ \eqref{eqn:thOS_def}.
	
	By its very design, $\OSapproach$ finds the numerical values of the characteristic constants $\Dimensional{\thcc}_1,\dots,\Dimensional{\thcc}_{\NUM{\var}}$ minimizing the Euclidean norm $\costOS$ \eqref{eqn:thOS_def} of the vector $\left( \log_{10}(\lam_1),\log_{10}(\lam_2),\dots,\log_{10}(\lam_{\NUMlam}) \right)$. As discussed in \cite{RUSCONI2019106944}, the Euclidean norm $\costOS$ \eqref{eqn:thOS_def} is not the unique choice for the function to minimize. By considering different objective functions, we show in \cite{RUSCONI2019106944} that the performance of $\OSapproach$ is not affected by the particular choice of the function in use. This observation can be motivated by the equivalence of norms on the finite dimensional space $(-\infty,\infty)^{\NUMlam}$. Nevertheless, the availability of an explicitly computable solution makes $\costOS$ \eqref{eqn:thOS_def} the metric of choice for Optimal Scaling.
		
\subsection{Optimal Scaling with Constraints}
\label{sec:achieve_OSC}

	Because of the similar values achieved by $\lam_1,\dots,\lam_{\NUMlam}$, the Optimal Scaling ($\OSapproach$) summarized above does not allow examining regimes of $\lam_1,\dots,\lam_{\NUMlam}$ \eqref{eqn:lam_coeff_0} in \eqref{eqn:dimensionless_model} which exhibit a large difference in magnitudes. The Optimal Scaling with Constraints ($\OSCapproach$) proposed in \cite{RUSCONI2023127756} addresses scenarios where (at least) one coefficient is much smaller than the others, i.e., most of $\lam_1,\dots,\lam_{\NUMlam}$ in \eqref{eqn:dimensionless_model} are as close as possible to $1$, while (at least) one coefficient is $\ll 1$. The method quantifies those $\thcc_1,\dots,\thcc_{\NUM{\var}} \in (0,\infty)$ that lead to such a regime.  	
		
	As demonstrated in \cite{RUSCONI2023127756} on the example of the Population Balance Model \cite{Ramkrishna2000} for Latex Particles Morphology Formation \cite{DDPM_2016,PhDThesis_Rusconi_PMCQS}, $\OSCapproach$ can also determine a threshold on the values of physical parameters $\Dimensional{\p}_1,\dots,\Dimensional{\p}_{\NUM{\p}}$ that allow dropping the terms relative to the $\lam_i \ll 1$ in the considered equations without significantly changing their solutions, but dramatically reducing computational effort.	
		
\subsubsection{Formulation}
\label{sec:OptScalConstr}

	Let us denote $\SMALLscaleIDX \in \{1,\dots,\NUMlam\}$ as the index of the coefficient $\lam_\SMALLscaleIDX$ being much smaller than $1$ and $\UNITscale_\SMALLscaleIDX \coloneqq \{1,\dots,\NUMlam\} \setminus \{\SMALLscaleIDX\}$ as the set of indexes of the coefficients $\lam_1,\dots,\lam_{\NUMlam}$ that are as close as possible to $1$. In particular, one aims to have	
		
\begin{equation}
\lam_\SMALLscaleIDX \ll 1,
\quad
\lam_i \approx 1,
\quad
\forall i \in \UNITscale_\SMALLscaleIDX 
\coloneqq \{1,\dots,\NUMlam\} \setminus \{\SMALLscaleIDX\},
\quad \WITH \quad
\SMALLscaleIDX \in \{1,\dots,\NUMlam\}.
\label{eqn:scale_sep_lam}
\end{equation}		

\noindent Given any fixed choice of $\Dimensional{\p}_1, \dots, \Dimensional{\p}_{\NUM{\p}}$ and following the procedure proposed in \cite{RUSCONI2023127756}, it is possible to show that the vector $\VEC{\thcc}=\thOSC$ targeting the regime \eqref{eqn:scale_sep_lam} of $\lam_1,\dots,\lam_{\NUMlam}$ \eqref{eqn:lam_coeff_0} can be found as 

\begin{align}
& \thOSC \coloneqq
\argmin_{\VEC{\thcc} \in (0,\infty)^{\NUM{\var}}}
\costOS_{\UNITscale_\SMALLscaleIDX}(\VEC{\thcc}),
\quad
\costOS_{\UNITscale_\SMALLscaleIDX}(\VEC{\thcc}) \coloneqq 
\sum_{i \in \UNITscale_\SMALLscaleIDX } 
\left( \log_{10}( \lam_i(\Dimensional{\thcc}_1,\dots,\Dimensional{\thcc}_{\NUM{\var}},
\Dimensional{\p}_1, \dots, \Dimensional{\p}_{\NUM{\p}}) ) - \Thcc_i \right)^2,
\nonumber \\
& \WHERE \quad 
\VEC{\thcc} \coloneqq (\thcc_1,\dots,\thcc_{\NUM{\var}}) \in (0,\infty)^{\NUM{\var}},
\quad \Thcc_1=\cdots=\Thcc_{\NUMlam}=0,
\label{eqn:OS_scale_sep}	
\end{align}

\noindent and $\costOS_{\UNITscale_\SMALLscaleIDX}$ accounts only for the coefficients $\lam_i \approx 1$, as specified by $\UNITscale_\SMALLscaleIDX$ \eqref{eqn:scale_sep_lam}. Moreover, we remark that \eqref{eqn:OS_scale_sep} quantifies the order of magnitude of $\lam_i$ as $\log_{10}(\lam_i)$, while $\Thcc_i = 0$ to target the desired order $\log_{10}(1) = 0$, $\forall i=1,\dots,\NUMlam$. By this means, the cost function $\costOS_{\UNITscale_\SMALLscaleIDX}$ \eqref{eqn:OS_scale_sep} measures the distance between the vector holding the orders of magnitude of coefficients $\lam_i$, with $i \in \UNITscale_\SMALLscaleIDX$, and the corresponding vector of desired orders $\Thcc_i=0$, with $i \in \UNITscale_\SMALLscaleIDX$. Made \cref{prop:lam_coeff_I} (\autoref{sec:Prop_OS}), provided $\UNITscale=\UNITscale_\SMALLscaleIDX$ \eqref{eqn:scale_sep_lam} and given any fixed $\Dimensional{\p}_1, \dots, \Dimensional{\p}_{\NUM{\p}}$, the points $\thOSC$ of global minimum of $\costOS_{\UNITscale_\SMALLscaleIDX}$ can be computed by means of \eqref{eqn:lin_sys_rho}, as follows from \autoref{th:OSC_formula} (\autoref{sec:Th_opt}).

	Then, the following features can be attained by the $\OSCapproach$ methodology \cite{RUSCONI2023127756} and the computed $\thOSC$ \eqref{eqn:OS_scale_sep}. For any given $\Dimensional{\p}_1, \dots, \Dimensional{\p}_{\NUM{\p}}$, the coefficients $\lam_i$, with $i \in \UNITscale_\SMALLscaleIDX$ \eqref{eqn:scale_sep_lam}, result to be as close as possible to the targeted value $1$, i.e., $\OSCapproach$ attempts to make
	
\begin{equation}
\lam_i(\Dimensional{\thcc}_1,\dots,\Dimensional{\thcc}_{\NUM{\var}},
\Dimensional{\p}_1, \dots, \Dimensional{\p}_{\NUM{\p}}) \approx 1,
\quad \WHEN \quad
\Dimensional{\thcc}_j = \thOSCcomp{j} \, \UNITS{\Dimensional{\thcc}_j},
\quad \forall i \in \UNITscale_\SMALLscaleIDX,
\quad \forall j=1,\dots,\NUM{\var},
\label{eqn:threshold_OSC_unitary_lam}
\end{equation}	
	
\noindent with $\thOSCcomp{j}$ being the $j$-th component of the vector $\thOSC$ \eqref{eqn:OS_scale_sep}. Hence, the scaled model \eqref{eqn:dimensionless_model} accounting only for the terms relative to $\lam_i$, with $i \in \UNITscale_\SMALLscaleIDX$ \eqref{eqn:scale_sep_lam}, inherits the features of the $\OSapproach$ approach \cite{RUSCONI2019106944} discussed in \autoref{sec:achieve_OS}. 

	On the other hand, the inequality 
	
\begin{equation}
\lam_\SMALLscaleIDX
(\Dimensional{\thcc}_1,\dots,\Dimensional{\thcc}_{\NUM{\var}}, 
\Dimensional{\p}_1, \dots, \Dimensional{\p}_{\NUM{\p}}) \ll 1,
\quad \WITH \quad
\Dimensional{\thcc}_j = \thOSCcomp{j} \, \UNITS{\Dimensional{\thcc}_j},
\quad \forall j=1,\dots,\NUM{\var},
\label{eqn:threshold_OSC_ll}
\end{equation}

\noindent can be used together with \eqref{eqn:threshold_OSC_unitary_lam} to identify a threshold for the values of physical parameters $\Dimensional{\p}_1, \dots, \Dimensional{\p}_{\NUM{\p}}$ leading to a large difference in magnitudes between $\lam_\SMALLscaleIDX$ and all the remaining $\lam_i$, with $i \in \UNITscale_\SMALLscaleIDX$ \eqref{eqn:scale_sep_lam}.

 	Finally, we remark that $\OSapproach$ presented in \autoref{sec:achieve_OS} can be viewed as a special case of $\OSCapproach$ discussed here. Indeed, the optimization problem \eqref{eqn:thOS_def} is equivalent to \eqref{eqn:OS_scale_sep} with $\SMALLscaleIDX=0$ and $\UNITscale_0$ defined as in \eqref{eqn:scale_sep_lam}, i.e., $\UNITscale_0 \coloneqq \{1,\dots,\NUMlam\} \setminus \{0\} = \{1,\dots,\NUMlam\}$. For this reason, from now on we will focus on $\OSCapproach$ only, with the non-negative integer $\SMALLscaleIDX \le \NUMlam$ assuming the use of $\OSapproach$ when $\SMALLscaleIDX=0$.
			
\section{Generalized Optimal Scaling}
\label{sec:achieve_OS_MinParam}

	Next, we intend to enrich the $\OSCapproach$ method with an extra feature, namely the ability to reduce to its minimum the number of parameters employed in a model. With this addition, $\OSCapproach$ can optionally control over-parametrization and should be beneficial for real-life models with a large number of physical parameters. This can be achieved by applying ideas and concepts relative to dimensional analysis and the Buckingham $\pi$ Theorem \cite{PhysRev.4.345} in particular. From now on, we will refer to $\OSCapproach$ with such a functionality as Generalized Optimal Scaling, or $\OSCmin$. 	
	
\subsection{Formulation}
\label{sec:OptScal_MinimalParam}

	By requiring \cref{prop:lam_coeff_I,prop:Klam_powlaw_pp,prop:char_const,prop:phys_param_indep,prop:phys_param_positive,prop:phys_param_dimen} (\autoref{sec:Prop_OS}) and applying \cref{th:DIMLESS_quant,th:PI_quantities,th:lamb_pow_PI,th:sysYY,th:sysAA,th:SysToSolveOSAA} (\autoref{sec:Th_nondim_setting} and \autoref{sec:Th_opt}), we formulate here the $\OSCmin$ methodology that assures all the features of $\OSCapproach$, while enables the minimization of the number of parameters employed. 

	Given \cref{prop:phys_param_indep,prop:phys_param_positive,prop:phys_param_dimen} for physical parameters $\Dimensional{\p}_1,\dots,\Dimensional{\p}_{\NUM{\p}}$, \autoref{th:DIMLESS_quant} and \autoref{th:PI_quantities} assure that there exist only $\NUMPI$ independent dimensionless parameters $\PI_1,\dots,\PI_{\NUMPI} \in (0,\infty)$ being a power-law monomial of $\Dimensional{\p}_1, \dots, \Dimensional{\p}_{\NUM{\p}}$. This to say that any dimensionless power-law monomial of $\Dimensional{\p}_1, \dots, \Dimensional{\p}_{\NUM{\p}}$ can be written in terms of $\PI_1,\dots,\PI_{\NUMPI}$ only and, thus, $\NUMPI$ is the minimal number of dimensionless power-law monomials of $\Dimensional{\p}_1, \dots, \Dimensional{\p}_{\NUM{\p}}$ one can consider.
	
	By invoking \namecrefs{prop:lam_coeff_I} \ref{prop:lam_coeff_I}, \ref{prop:Klam_powlaw_pp} and \ref{prop:char_const}, one reveals that each coefficient $\lam_1,\dots,\lam_{\NUMlam}$ \eqref{eqn:lam_coeff_0} of model \eqref{eqn:dimensionless_model} is a dimensionless power-law monomial of $\Dimensional{\p}_1, \dots, \Dimensional{\p}_{\NUM{\p}}$ and, thus, it can be written in terms of $\PI_1,\dots,\PI_{\NUMPI}$ only, as stated by \eqref{eqn:lambdas_fun_PI} in \autoref{th:lamb_pow_PI} under \cref{prop:lam_coeff_I,prop:Klam_powlaw_pp,prop:char_const,prop:phys_param_indep,prop:phys_param_positive,prop:phys_param_dimen}. This implies that \eqref{eqn:dimensional_model} and \eqref{eqn:dimensionless_model} are equivalent to a model with governing coefficients given by $\PI_1,\dots,\PI_{\NUMPI}$ only. For this reason, it is possible to reformulate the scaling approach in terms of the independent dimensionless parameters $\PI_1,\dots,\PI_{\NUMPI}$, whose number $\NUMPI$ is minimal.
	
	The formulation of the $\OSCmin$ methodology is based on replacing $\lam_i$ in \eqref{eqn:OS_scale_sep} with the expression \eqref{eqn:lambdas_fun_PI} provided by \autoref{th:lamb_pow_PI} under \cref{prop:lam_coeff_I,prop:Klam_powlaw_pp,prop:char_const,prop:phys_param_indep,prop:phys_param_positive,prop:phys_param_dimen}. Then, the optimization problem \eqref{eqn:OS_scale_sep} is considered in terms of the vector $\VEC{\AAcoeff} \in \mathbb{R}^{\NUMlam \, \NUMPI}$ \eqref{eqn:AAdef} collecting the exponents $\AAcoeff_{i,k}$ in \eqref{eqn:lambdas_fun_PI}. As dictated by \autoref{th:sysYY} and \autoref{th:sysAA} when \cref{prop:lam_coeff_I,prop:Klam_powlaw_pp,prop:char_const,prop:phys_param_indep,prop:phys_param_positive,prop:phys_param_dimen} hold, such a vector $\VEC{\AAcoeff} \in \mathbb{R}^{\NUMlam \, \NUMPI}$ \eqref{eqn:AAdef} must satisfy the linear system $\matrixSysAA \, \VEC{\AAcoeff} = \lhsSysAA$ \eqref{eqn:constr_SysAA}. Then, the optimization of $\VEC{\AAcoeff}$ must take into account such a constraint, as specified in what follows.
		
	Given any fixed choice of $\PI_1,\dots,\PI_{\NUMPI} \in (0,\infty)$, it is possible to target values of $\VEC{\AAcoeff} \in \mathbb{R}^{\NUMlam \, \NUMPI}$ \eqref{eqn:AAdef} satisfying $\matrixSysAA \, \VEC{\AAcoeff} = \lhsSysAA$ \eqref{eqn:constr_SysAA} and minimizing the deviation $\costOS_{\UNITscale_\SMALLscaleIDX} : \mathbb{R}^{\NUMlam \, \NUMPI} \to [0,\infty)$ from unity of the magnitudes of $\lam_i$, with $i \in \UNITscale_\SMALLscaleIDX \coloneqq \{1,\dots,\NUMlam\} \setminus \{ \SMALLscaleIDX \}$ for a fixed $\SMALLscaleIDX \in \{0,1,\dots,\NUMlam\}$:
		
\begin{equation}
\AAcoeffvecOS \coloneqq
\argmin_{\substack{ \VEC{\AAcoeff} \in \mathbb{R}^{\NUMlam \, \NUMPI} \\ 
\matrixSysAA \, \VEC{\AAcoeff} = \lhsSysAA }}
\costOS_{\UNITscale_\SMALLscaleIDX}(\VEC{\AAcoeff}),
\quad
\costOS_{\UNITscale_\SMALLscaleIDX}(\VEC{\AAcoeff}) \coloneqq 
\sum_{i \in \UNITscale_\SMALLscaleIDX} \left( \log_{10}( \lam_i ) \right)^2,
\quad
\lam_i = \PI_1^{\AAcoeff_{i,1}} \, \PI_2^{\AAcoeff_{i,2}} \, \cdots \, \PI_{\NUMPI}^{\AAcoeff_{i,\NUMPI}},
\label{eqn:AAcoeffOPT_def}
\end{equation}

\noindent where $\SMALLscaleIDX = 0$ encodes the use of Optimal Scaling ($\OSapproach$), while fixing $\SMALLscaleIDX \in \{1,\dots,\NUMlam\}$ corresponds to the application of Optimal Scaling with Constraints ($\OSCapproach$).

	Given $\UNITscale = \UNITscale_\SMALLscaleIDX$, let $\helpMatSysAA = \matrixSysAA$ and $\helpRhsSysAA = \lhsSysAA$ satisfy \eqref{eqn:HP_OS_AA_analytic_form}. Then, \autoref{th:SysToSolveOSAA} allows computing $\AAcoeffvecOS$ \eqref{eqn:AAcoeffOPT_def} by means of \eqref{eqn:SysToSolveOSAA}. If $\helpMatSysAA = \matrixSysAA$ and $\helpRhsSysAA = \lhsSysAA$ do not fulfill \eqref{eqn:HP_OS_AA_analytic_form}, but such an assumption is accomplished by the row echelon form of $\matrixSysAA \, \VEC{\AAcoeff} = \lhsSysAA$, it is possible to assign $\helpMatSysAA$ and $\helpRhsSysAA$ as the corresponding matrices of such a reduced system. 
	
%	With $\VEC{\AAcoeff}=\AAcoeffvecOS$ \eqref{eqn:AAcoeffOPT_def}, the coefficients $\lam_1,\dots,\lam_{\NUMlam}$ \eqref{eqn:lam_coeff_0} can be computed by using \eqref{eqn:lambdas_fun_PI}, while the characteristic constants $\Dimensional{\thcc}_1, \dots, \Dimensional{\thcc}_{\NUM{\var}}$ are determined by solving the linear system \eqref{eqn:SyS_TT}, with $\VEC{\AAcoeff}=\AAcoeffvecOS$ in \eqref{eqn:matYY_vecYY_def} (see \autoref{def:sysTT} in \autoref{sec:TH_OS}), and, then, by plugging the obtained solutions \eqref{eqn:vecYY_def} into \eqref{eqn:char_const_pow_PhysParam}. 
	
	By setting $\VEC{\AAcoeff}=\AAcoeffvecOS$ \eqref{eqn:AAcoeffOPT_def}, the following features can be attained by the resulting coefficients $\lam_1,\dots,\lam_{\NUMlam}$ \eqref{eqn:lam_coeff_0}. For any fixed choice of $\PI_1,\dots,\PI_{\NUMPI} \in (0,\infty)$, the coefficients $\lam_i$, with $i \in \UNITscale_\SMALLscaleIDX$, result to be as close as possible to the targeted value $1$, i.e., $\OSCmin$ attempts to make
	
\begin{equation}
\lam_i = \PI_1^{\AAcoeff_{i,1}} \, \PI_2^{\AAcoeff_{i,2}} \, \cdots \, \PI_{\NUMPI}^{\AAcoeff_{i,\NUMPI}} \approx 1,
\quad \WHEN \quad
\VEC{\AAcoeff} \coloneqq (\AAcoeff_{1,1},\dots,\AAcoeff_{\NUMlam,\NUMPI})^\Transp = \AAcoeffvecOS,
\label{eqn:regime_unity_GOS}
\end{equation}	
	
\noindent for all $i \in \UNITscale_\SMALLscaleIDX \coloneqq \{1,\dots,\NUMlam\} \setminus \{ \SMALLscaleIDX \}$ and a fixed $\SMALLscaleIDX \in \{0,1,\dots,\NUMlam\}$. 

	When $\SMALLscaleIDX \in \{1,\dots,\NUMlam\}$, the inequality 
	
\begin{equation}
\lam_\SMALLscaleIDX = 
\PI_1^{\AAcoeff_{\SMALLscaleIDX,1}} \, 
\PI_2^{\AAcoeff_{\SMALLscaleIDX,2}} \, 
\cdots \, 
\PI_{\NUMPI}^{\AAcoeff_{\SMALLscaleIDX,\NUMPI}} \ll 1,
\quad \WITH \quad
\VEC{\AAcoeff} \coloneqq (\AAcoeff_{1,1},\dots,\AAcoeff_{\NUMlam,\NUMPI})^\Transp = \AAcoeffvecOS,
\label{eqn:threshold_GOS}
\end{equation}

\noindent can be used together with \eqref{eqn:regime_unity_GOS} to identify a threshold for the values of parameters $\PI_1,\dots,\PI_{\NUMPI}$ leading to a large difference in magnitudes between $\lam_\SMALLscaleIDX$ and all the remaining $\lam_i$, with $i \in \UNITscale_\SMALLscaleIDX$.

	The objectives and features of presented scaling procedures are highlighted in \autoref{tab:summary_OS}.	

\begin{table}[!h]
\begin{small}
{\renewcommand{\arraystretch}{2}
\begin{tabular}{p{2.5cm} p{7cm} p{7cm}}
    
\toprule[0.4mm]
& \textbf{Optimal Scaling with Constraints} 
& \textbf{Generalized Optimal Scaling} \\
\bottomrule[0.4mm]
	
\rowcolor[HTML]{EFEFEF}		
\textbf{Objectives} 
&
Finds coefficients $\lam_1,\dots,\lam_{\NUMlam}$ in \eqref{eqn:dimensionless_model} that meet conditions: $\lam_\SMALLscaleIDX \ll 1$, $\lam_i \approx 1$, $\forall i \neq \SMALLscaleIDX \in \{1,\dots,\NUMlam\}$
&
Finds coefficients $\lam_1,\dots,\lam_{\NUMlam}$ in \eqref{eqn:dimensionless_model} that meet conditions: $\lam_\SMALLscaleIDX \ll 1$, $\lam_i \approx 1$, $\forall i \neq \SMALLscaleIDX \in \{1,\dots,\NUMlam\}$
\newline \newline
Controls over-parametrization
\\ 	

\textbf{Employed Assumptions}
&
\cref{prop:lam_coeff_I}
&
\cref{prop:lam_coeff_I,prop:Klam_powlaw_pp,prop:char_const,prop:phys_param_indep,prop:phys_param_positive,prop:phys_param_dimen}
\\
	
\rowcolor[HTML]{EFEFEF}
\textbf{Features}
&
Provides the conditions \eqref{eqn:threshold_OSC_unitary_lam}-\eqref{eqn:threshold_OSC_ll} that allow discarding the terms in \eqref{eqn:dimensionless_model} related to $\lam_\SMALLscaleIDX \ll 1$, without significantly modifying the solutions to \eqref{eqn:dimensionless_model}
\newline \newline
The model \eqref{eqn:dimensionless_model} with $\lam_\SMALLscaleIDX=0$ shows minimal variability $\Max_{i\neq \SMALLscaleIDX} \lam_i / \Min_{i\neq \SMALLscaleIDX} \lam_i$ and reduced round-off errors 
&
Inherits all the features of Optimal Scaling with Constraints, with the conditions  \eqref{eqn:threshold_OSC_unitary_lam}-\eqref{eqn:threshold_OSC_ll} replaced by \eqref{eqn:regime_unity_GOS}-\eqref{eqn:threshold_GOS}
\newline \newline
Employs the minimal number of governing parameters
\\

\textbf{Procedure}
& 
Given any fixed choice of physical parameters $\Dimensional{\p}_1, \dots, \Dimensional{\p}_{\NUM{\p}}$, finds $\VEC{\thcc} \coloneqq (\thcc_1,\dots,\thcc_{\NUM{\var}}) \in (0,\infty)^{\NUM{\var}}$ that minimizes the deviation $\costOS_{\UNITscale_\SMALLscaleIDX}$ \eqref{eqn:OS_scale_sep} from unity of the magnitudes of $\lam_i$, $\forall i \in \UNITscale_\SMALLscaleIDX \coloneqq \{1,\dots,\NUMlam\} \setminus \{\SMALLscaleIDX\}$
&
Given any fixed choice of $\PI_1,\dots,\PI_{\NUMPI} \in (0,\infty)$, finds $\VEC{\AAcoeff}$ \eqref{eqn:AAdef} that satisfies \eqref{eqn:constr_SysAA} and minimizes the deviation $\costOS_{\UNITscale_\SMALLscaleIDX}$ \eqref{eqn:AAcoeffOPT_def} from unity of the magnitudes of $\lam_i$, $\forall i \in \UNITscale_\SMALLscaleIDX \coloneqq \{1,\dots,\NUMlam\} \setminus \{\SMALLscaleIDX\}$
\\ 

\rowcolor[HTML]{EFEFEF}
\textbf{Realization} 
&
The points $\VEC{\thcc}$ of global minimum of $\costOS_{\UNITscale_\SMALLscaleIDX}$ \eqref{eqn:OS_scale_sep} can be computed by means of \eqref{eqn:lin_sys_rho} provided $\UNITscale = \UNITscale_\SMALLscaleIDX$, as follows from \autoref{th:OSC_formula}	
&
With $\UNITscale = \UNITscale_\SMALLscaleIDX$, $\helpMatSysAA = \matrixSysAA$, $\helpRhsSysAA = \lhsSysAA$ \eqref{eqn:constr_SysAA} and assuming \eqref{eqn:HP_OS_AA_analytic_form}, \autoref{th:SysToSolveOSAA} allows computing $\VEC{\AAcoeff}=\AAcoeffvecOS$ \eqref{eqn:AAcoeffOPT_def} by means of \eqref{eqn:SysToSolveOSAA}
\newline \newline
Plugs $\VEC{\AAcoeff}=\AAcoeffvecOS$ \eqref{eqn:AAcoeffOPT_def} into \eqref{eqn:matYY_vecYY_def}, then solves for $\VEC{\TT}$ \eqref{eqn:vecYY_def} the system \eqref{eqn:SyS_TT} and uses the found $\VEC{\TT}$ to compute $\Dimensional{\thcc}_1, \dots, \Dimensional{\thcc}_{\NUM{\var}}$ as in \eqref{eqn:char_const_pow_PhysParam}
\\	

\toprule[0.4mm]	
    
\end{tabular}}
\caption{Summary and features of Optimal Scaling with Constraints \cite{RUSCONI2023127756} and Generalized Optimal Scaling ($\OSCmin$). The Optimal Scaling \cite{RUSCONI2019106944} is equivalent to $\OSCmin$ with $\SMALLscaleIDX=0$.}
\label{tab:summary_OS}
\end{small}
\end{table}		

\section{A Case Study: the classical projectile model}
\label{sec:intro_ProjPb}

	Aiming to test the scaling techniques discussed above, we apply them to the projectile model introduced in \autoref{sec:CaseStudy_ProjPb}. 
	
	First, we use the \namecrefs{th:OSC_formula} from \autoref{sec:Th_nondim_setting} to formulate the projectile model with the minimal number of parameters. \cref{prop:phys_param_indep,prop:phys_param_positive,prop:phys_param_dimen} are satisfied by the $\NUM{\p}=4$ physical parameters $\Dimensional{\p}_1,\Dimensional{\p}_2,\Dimensional{\p}_3,\Dimensional{\p}_4$ \eqref{eqn:PhysParam_def_projectile}, with
	
\begin{equation}
\UNITS{\Dimensional{\p}_1} = \UNITS{\Dimensional{\gPP}} = \meter \, \second^{-2},
\quad
\UNITS{\Dimensional{\p}_2} = \UNITS{\Dimensional{\rPP}} = \meter,
\quad
\UNITS{\Dimensional{\p}_3} = \UNITS{\NOUGHT{\Dimensional{\hPP}}} = \meter,
\quad
\UNITS{\Dimensional{\p}_4} = \UNITS{\NOUGHT{\Dimensional{\vPP}}} = \meter \, \second^{-1},
\end{equation}
	
\noindent being $\meter$ meters and $\second$ seconds. Then, \autoref{th:PI_quantities} guarantees that there exist only $\NUMPI = \NUM{\p} - \Rank(\MM) = 2$ independent dimensionless parameters, where $\MM$ is the matrix

\begin{equation}
\MM =
\begin{pmatrix}
1 & 1 & 1 & 1 \\
-2 & 0 & 0 & -1
\end{pmatrix},
\end{equation}

\noindent and the $\NUMPI = 2$ independent dimensionless parameters can be computed as in \eqref{eqn:Indep_Dimless_Param}, i.e., 

\begin{equation}
\PI_1 = \Dimensional{\gPP}^{-1/2} \, \Dimensional{\rPP}^{-1/2} \, \NOUGHT{\Dimensional{\vPP}},
\quad
\PI_2 = \Dimensional{\rPP}^{-1} \, \NOUGHT{\Dimensional{\hPP}},
\quad
\PI_1,\PI_2 \in (0,\infty).
\label{eqn:PI_ProjPb}
\end{equation}

\noindent \cref{prop:lam_coeff_I,prop:Klam_powlaw_pp} are fulfilled by $\NUMlam=4$ coefficients $\lam_1,\lam_2,\lam_3,\lam_4$ \eqref{eqn:lambdas_def_projectile}, with

\begin{equation}
\Dimensional{\Klam}_1 = \Dimensional{\p}_1 = \Dimensional{\gPP},
\quad
\Dimensional{\Klam}_2 = \Dimensional{\p}_2^{-1} = \Dimensional{\rPP}^{-1},
\quad
\Dimensional{\Klam}_3 = \Dimensional{\p}_3 = \NOUGHT{\Dimensional{\hPP}},
\quad
\Dimensional{\Klam}_4 = \Dimensional{\p}_4 = \NOUGHT{\Dimensional{\vPP}},
\label{eqn:Kappas_ProjPb}
\end{equation}

\noindent and

\begin{align}
\CC_{1,1} = 2, & \quad \CC_{1,2} = -1,
\nonumber \\  
\CC_{2,1} = 0, & \quad \CC_{2,2} = 1, 
\nonumber \\  
\CC_{3,1} = 0, & \quad \CC_{3,2} = -1, 
\nonumber \\  
\CC_{4,1} = 1, & \quad \CC_{4,2} = -1.
\label{eqn:CC_ProjPb}
\end{align}

\noindent Then, by also bearing \cref{prop:char_const} for $\Dimensional{\thcc}_1,\Dimensional{\thcc}_2$ \eqref{eqn:change_of_var_projectile}, i.e.,

\begin{align}
\Dimensional{\thcc}_1 
& = 
\Dimensional{\p}_1^{\TT_{1,1}} \, 
\Dimensional{\p}_2^{\TT_{2,1}} \, 
\Dimensional{\p}_3^{\TT_{3,1}} \, 
\Dimensional{\p}_4^{\TT_{4,1}}
=
\Dimensional{\gPP}^{\TT_{1,1}} \, 
\Dimensional{\rPP}^{\TT_{2,1}} \, 
\NOUGHT{\Dimensional{\hPP}}^{\TT_{3,1}} \, 
\NOUGHT{\Dimensional{\vPP}}^{\TT_{4,1}},
\nonumber \\
\Dimensional{\thcc}_2 
& = 
\Dimensional{\p}_1^{\TT_{1,2}} \, 
\Dimensional{\p}_2^{\TT_{2,2}} \, 
\Dimensional{\p}_3^{\TT_{3,2}} \, 
\Dimensional{\p}_4^{\TT_{4,2}}
=
\Dimensional{\gPP}^{\TT_{1,2}} \, 
\Dimensional{\rPP}^{\TT_{2,2}} \, 
\NOUGHT{\Dimensional{\hPP}}^{\TT_{3,2}} \, 
\NOUGHT{\Dimensional{\vPP}}^{\TT_{4,2}},
\label{eqn:thcc_ProjPb_pow_PhysParam}
\end{align}

\noindent \autoref{th:lamb_pow_PI} guarantees that the coefficients $\lam_1,\lam_2,\lam_3,\lam_4$ \eqref{eqn:lambdas_def_projectile} can be written as power-law monomials of $\PI_1,\PI_2$ \eqref{eqn:PI_ProjPb} only, i.e.,

\begin{equation}
\lam_1 = \PI_1^{\AAcoeff_{1,1}} \, \PI_2^{\AAcoeff_{1,2}}, 
\quad
\lam_2 = \PI_1^{\AAcoeff_{2,1}} \, \PI_2^{\AAcoeff_{2,2}}, 
\quad
\lam_3 = \PI_1^{\AAcoeff_{3,1}} \, \PI_2^{\AAcoeff_{3,2}}, 
\quad
\lam_4 = \PI_1^{\AAcoeff_{4,1}} \, \PI_2^{\AAcoeff_{4,2}}. 
\label{eqn:lam_ProjPb_pow_PI}
\end{equation}

\noindent Finally, \autoref{th:sysAA} indicates under \cref{prop:lam_coeff_I,prop:Klam_powlaw_pp,prop:char_const,prop:phys_param_indep,prop:phys_param_positive,prop:phys_param_dimen} that the vector

\begin{equation}
\VEC{\AAcoeff} =
\begin{pmatrix}
\AAcoeff_{1,1} & \AAcoeff_{1,2} &
\AAcoeff_{2,1} & \AAcoeff_{2,2} &
\AAcoeff_{3,1} & \AAcoeff_{3,2} &
\AAcoeff_{4,1} & \AAcoeff_{4,2}
\end{pmatrix}^{\Transp}
\in \mathbb{R}^{8}
\label{eqn:vecAA_ProjPb}
\end{equation}

\noindent of exponents in \eqref{eqn:lam_ProjPb_pow_PI} must satisfy the linear system $\matrixSysAA \, \VEC{\AAcoeff} = \lhsSysAA$ \eqref{eqn:constr_SysAA}, which, for the projectile model, reads in row echelon form as

\begin{equation}
\begin{pmatrix}
-1/2 & 0 & 1/2 & 0 & 0 & 0 & 1 & 0 \\
0 & -1/2 & 0 & 1/2 & 0 & 0 & 0 & 1 \\
0 & 0 & 1 & 0 & 1 & 0 & 0 & 0 \\
0 & 0 & 0 & 1 & 0 & 1 & 0 & 0
\end{pmatrix}
\VEC{\AAcoeff}
=
\begin{pmatrix}
1 \\ 0 \\ 0 \\ 1
\end{pmatrix}.
\label{eqn:AAsys_ProjPb}
\end{equation}

\subsection{Optimally Scaled Coefficients}
\label{sec:ProjPb_coeff_OS_OSC}

	We compute the values of coefficients $\lam_1,\lam_2,\lam_3,\lam_4$ \eqref{eqn:lambdas_def_projectile} using the Generalized Optimal Scaling approach presented in \autoref{sec:achieve_OS_MinParam}. In particular, we provide and discuss the values of coefficients $\lam_1,\lam_2,\lam_3,\lam_4$ \eqref{eqn:lam_ProjPb_pow_PI} when the vector $\VEC{\AAcoeff}$ \eqref{eqn:vecAA_ProjPb} of exponents in \eqref{eqn:lam_ProjPb_pow_PI} is optimized according to \eqref{eqn:AAcoeffOPT_def}. In this case, $\UNITscale = \UNITscale_\SMALLscaleIDX \coloneqq \{1,2,3,4\} \setminus \{ \SMALLscaleIDX \}$, where $\SMALLscaleIDX = 0$ means the use of $\OSapproach$, while fixing $\SMALLscaleIDX \in \{1,2,3,4\}$ corresponds to the application of $\OSCapproach$.
	
	\autoref{th:SysToSolveOSAA} allows solving the optimization problem \eqref{eqn:AAcoeffOPT_def} for any fixed choice of $\PI_1,\PI_2$ \eqref{eqn:PI_ProjPb}, since one can set \eqref{eqn:AAsys_ProjPb} as the constraint $\helpMatSysAA \, \VEC{\AAcoeff} = \helpRhsSysAA$ \eqref{eqn:Constr_SysOSAA} to satisfy the assumption \eqref{eqn:HP_OS_AA_analytic_form}. In particular, the system \eqref{eqn:SysToSolveOSAA} can be solved to find optimal values of $\VEC{\AAcoeff}$ \eqref{eqn:vecAA_ProjPb} for plugging into \eqref{eqn:lam_ProjPb_pow_PI}. Then, the system \eqref{eqn:SysToSolveOSAA} reads as

\begin{equation}
\begin{pmatrix}
\helpMatSysAA & \MATRmn{0}{4}{4} \\
\Gmatr & \helpMatSysAA^{\Transp}
\end{pmatrix}
\begin{pmatrix} 
\VEC{\AAcoeff} \\ \helpVecSysAA
\end{pmatrix}
=
\begin{pmatrix}
\helpRhsSysAA \\ \MATRmn{0}{8}{1} 
\end{pmatrix},
\label{eqn:SysAAopt1_ProjPb}
\end{equation}

\noindent with $\VEC{\AAcoeff} \in \mathbb{R}^8$ given by \eqref{eqn:vecAA_ProjPb}, $\helpVecSysAA = (\hVSysAAcomp_1,\hVSysAAcomp_2,\hVSysAAcomp_3,\hVSysAAcomp_4)^{\Transp} \in \mathbb{R}^4$, 

\begin{equation}
\helpMatSysAA =
\begin{pmatrix}
-1/2 & 0 & 1/2 & 0 & 0 & 0 & 1 & 0 \\
0 & -1/2 & 0 & 1/2 & 0 & 0 & 0 & 1 \\
0 & 0 & 1 & 0 & 1 & 0 & 0 & 0 \\
0 & 0 & 0 & 1 & 0 & 1 & 0 & 0
\end{pmatrix},
\quad
\helpRhsSysAA =
\begin{pmatrix}
1 \\ 0 \\ 0 \\ 1
\end{pmatrix},
\label{eqn:SysAAopt2_ProjPb}
\end{equation}

\begin{equation}
\Gmatr =
\begin{pmatrix}
\indicator{\SMALLscaleIDX}{1} \, \Pmatr & \MATRmn{0}{2}{2} & \MATRmn{0}{2}{2} & \MATRmn{0}{2}{2} \\
\MATRmn{0}{2}{2} & \indicator{\SMALLscaleIDX}{2} \, \Pmatr & \MATRmn{0}{2}{2} & \MATRmn{0}{2}{2} \\
\MATRmn{0}{2}{2} & \MATRmn{0}{2}{2} & \indicator{\SMALLscaleIDX}{3} \, \Pmatr & \MATRmn{0}{2}{2} \\
\MATRmn{0}{2}{2} & \MATRmn{0}{2}{2} & \MATRmn{0}{2}{2} & \indicator{\SMALLscaleIDX}{4} \, \Pmatr
\end{pmatrix},
\quad
\Pmatr =
\begin{pmatrix}
2 (\log_{10}(\PI_1))^2 & 2 \log_{10}(\PI_1) \log_{10}(\PI_2) \\
2 \log_{10}(\PI_1) \log_{10}(\PI_2) & 2 (\log_{10}(\PI_2))^2   
\end{pmatrix},
\label{eqn:SysAAopt3_ProjPb}
\end{equation}

\noindent where $\indicator{\SMALLscaleIDX}{i}=1$ if $i \neq \SMALLscaleIDX$, $\indicator{\SMALLscaleIDX}{i}=0$ if $i = \SMALLscaleIDX$, and $\MATRmn{0}{m}{n}$ is a matrix with $m$ rows and $n$ columns containing all zeros. By plugging the solutions to \eqref{eqn:SysAAopt1_ProjPb}-\eqref{eqn:SysAAopt3_ProjPb} into \eqref{eqn:lam_ProjPb_pow_PI}, one obtains

\begin{align}
& \lam_1 = \PI_1^{-4/11} \, \PI_2^{1/11}, 
&& \lam_2 = \PI_1^{2/11} \, \PI_2^{5/11},  
&& \lam_3 = \PI_1^{-2/11} \, \PI_2^{6/11}, 
&& \lam_4 = \PI_1^{8/11} \, \PI_2^{-2/11}, 
&& \IF \quad \SMALLscaleIDX = 0,
\label{eqn:OS_ProgPb_lam_PI} \\	
& \lam_1 = \PI_1^{-2} \, \PI_2^{1/2}, 
&& \lam_2 = \PI_2^{1/2}, 
&& \lam_3 = \PI_2^{1/2},  
&& \lam_4 = 1, 
&& \IF \quad \SMALLscaleIDX = 1,
\label{eqn:OSC_ProgPb_s1_lam_PI} \\	
& \lam_1 = \PI_1^{-1/3} \, \PI_2^{1/6}, 
&& \lam_2 = \PI_1^{1/3} \, \PI_2^{5/6},  
&& \lam_3 = \PI_1^{-1/3} \, \PI_2^{1/6}, 
&& \lam_4 = \PI_1^{2/3} \, \PI_2^{-1/3},
&& \IF \quad \SMALLscaleIDX = 2,
\label{eqn:OSC_ProgPb_s2_lam_PI} \\
& \lam_1 = \PI_1^{-1/3},  
&& \lam_2 = \PI_1^{1/3},  
&& \lam_3 = \PI_1^{-1/3} \, \PI_2,  
&& \lam_4 = \PI_1^{2/3}, 
&& \IF \quad \SMALLscaleIDX = 3,
\label{eqn:OSC_ProgPb_s3_lam_PI} \\
& \lam_1 = 1,  
&& \lam_2 = \PI_2^{1/2},  
&& \lam_3 = \PI_2^{1/2},  
&& \lam_4 = \PI_1 \, \PI_2^{-1/4},  
&& \IF \quad \SMALLscaleIDX = 4.
\label{eqn:OSC_ProgPb_s4_lam_PI} 
\end{align}

\noindent The coefficients $\lam_1,\lam_2,\lam_3,\lam_4$ \eqref{eqn:OS_ProgPb_lam_PI}-\eqref{eqn:OSC_ProgPb_s4_lam_PI} are depicted in \autoref{fig:compare_lambdas_OS_ProjPb} as functions of the parameters $\PI_1,\PI_2$ \eqref{eqn:PI_ProjPb}. For all tested $\SMALLscaleIDX=0,1,2,3,4$, $\OSCmin$ identifies the values of $\PI_1,\PI_2$ which satisfy the regime \eqref{eqn:regime_unity_GOS}-\eqref{eqn:threshold_GOS}. Such values lie in the areas indicated in \autoref{fig:compare_lambdas_OS_ProjPb} by magenta dashed lines.
			
\newgeometry{top=0.5cm,bottom=1.5cm,left=0.75cm,right=0.75cm}

\begin{figure}[!hp]
\centering
\begin{subfigure}[c]{1\linewidth}
\centering
\includegraphics[scale=0.82]{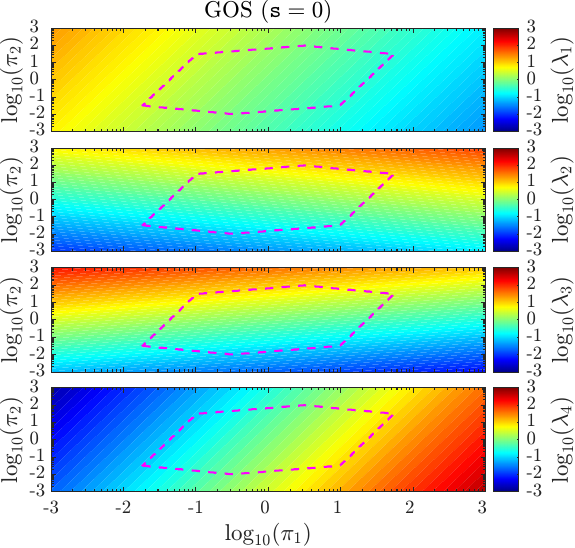}
\caption{Optimally scaled coefficients $\lam_1,\lam_2,\lam_3,\lam_4$ \eqref{eqn:OS_ProgPb_lam_PI} with $\SMALLscaleIDX=0$.}
\label{fig:lambdas_OS_ProjPb}
\end{subfigure}
\newline \\
\vspace{0.5cm}
\begin{subfigure}[c]{.495\linewidth}
\centering
\includegraphics[scale=0.82]{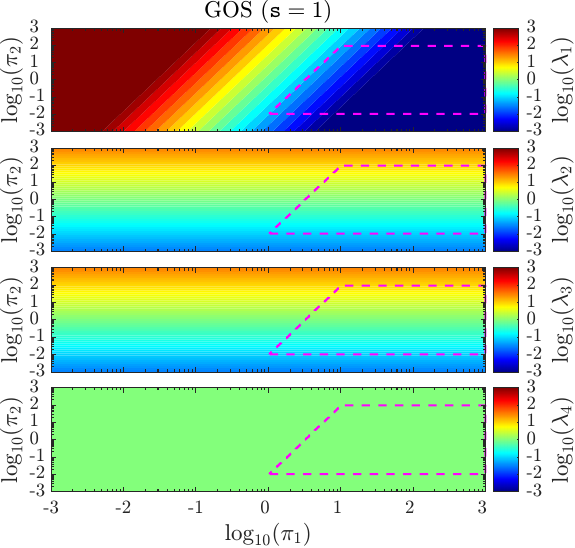}
\caption{Optimally scaled coefficients $\lam_1,\lam_2,\lam_3,\lam_4$ \eqref{eqn:OSC_ProgPb_s1_lam_PI} with $\SMALLscaleIDX=1$.}
\label{fig:lambdas_OSC1_ProjPb}
\end{subfigure}	
\begin{subfigure}[c]{.495\linewidth}
\centering
\includegraphics[scale=0.82]{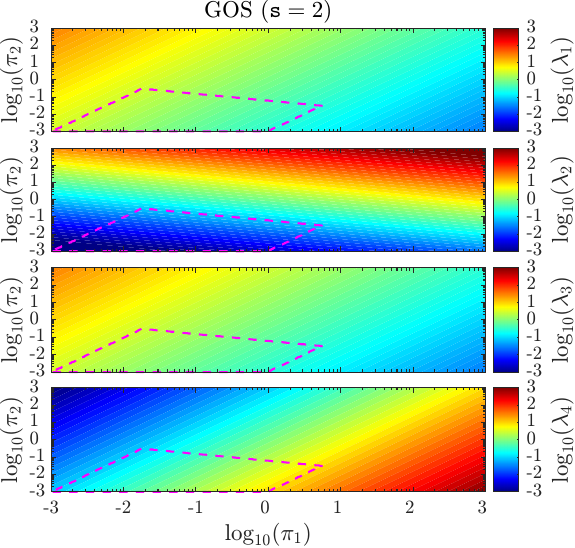}
\caption{Optimally scaled coefficients $\lam_1,\lam_2,\lam_3,\lam_4$ \eqref{eqn:OSC_ProgPb_s2_lam_PI} with $\SMALLscaleIDX=2$.}
\label{fig:lambdas_OSC2_ProjPb}
\end{subfigure}	
\newline \\
\vspace{0.5cm}
\begin{subfigure}[c]{.495\linewidth}
\centering
\includegraphics[scale=0.82]{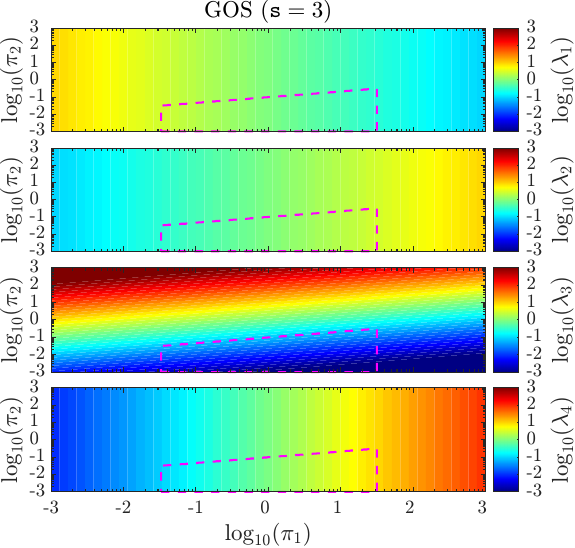}
\caption{Optimally scaled coefficients $\lam_1,\lam_2,\lam_3,\lam_4$ \eqref{eqn:OSC_ProgPb_s3_lam_PI} with $\SMALLscaleIDX=3$.}
\label{fig:lambdas_OSC3_ProjPb}
\end{subfigure}	
\begin{subfigure}[c]{.495\linewidth}
\centering
\includegraphics[scale=0.82]{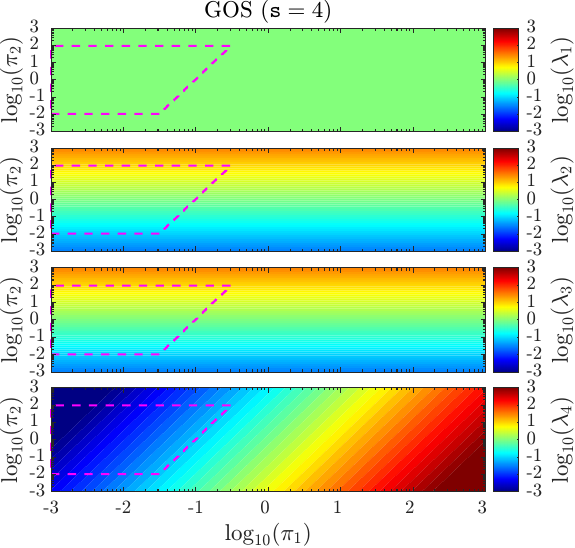}
\caption{Optimally scaled coefficients $\lam_1,\lam_2,\lam_3,\lam_4$ \eqref{eqn:OSC_ProgPb_s4_lam_PI} with $\SMALLscaleIDX=4$.}
\label{fig:lambdas_OSC4_ProjPb}
\end{subfigure}	
\caption{Coefficients \eqref{eqn:OS_ProgPb_lam_PI}-\eqref{eqn:OSC_ProgPb_s4_lam_PI} provided by $\OSCmin$ as functions of $\PI_1,\PI_2$ \eqref{eqn:PI_ProjPb}. The regime \eqref{eqn:regime_unity_GOS}-\eqref{eqn:threshold_GOS} is achieved in the areas delimited by magenta dashed lines.}
\label{fig:compare_lambdas_OS_ProjPb}
\end{figure}	

\restoregeometry

\subsection{Generalized Optimal Scaling $(\SMALLscaleIDX=0)$: Results}
\label{sec:OS_ProgPb}

	Next, we illustrate the advantages provided by the coefficients $\lam_1,\lam_2,\lam_3,\lam_4$ \eqref{eqn:OS_ProgPb_lam_PI} resulting from application of Generalized Optimal Scaling with $\SMALLscaleIDX=0$ to the projectile model.
		
	\autoref{tab:summary_OS} states that such coefficients allow minimizing the variability $\Max_i \lam_i / \Min_i \lam_i$. In order to verify such a claim, we compare the values achieved by coefficients \eqref{eqn:OS_ProgPb_lam_PI} with the obtained ones by the following approach suggested in the literature \cite{Holmes2009_ND,Scaling_Langtangen}. Provided the values of physical parameters $\Dimensional{\p}_1,\Dimensional{\p}_2,\Dimensional{\p}_3,\Dimensional{\p}_4$ \eqref{eqn:PhysParam_def_projectile}, it is possible to impose $\NUM{\var}=2$ out of the $\NUMlam=4$ coefficients $\lam_1,\lam_2,\lam_3,\lam_4$ \eqref{eqn:lambdas_def_projectile} to be equal to $1$. Then, one aims to solve the resulting system of equations with respect to $\Dimensional{\thcc}_1$ and $\Dimensional{\thcc}_2$. Finally, the corresponding values of $\lam_1,\lam_2,\lam_3,\lam_4$ \eqref{eqn:lambdas_def_projectile} are obtained by plugging the computed $\Dimensional{\thcc}_1$ and $\Dimensional{\thcc}_2$ into \eqref{eqn:lambdas_def_projectile}.
	
	Aiming to perform such a comparison, we set $\Dimensional{\gPP} = 9.81 \, \meter \, \second^{-2}$, $\Dimensional{\rPP} = 6.3781 \times 10^{6} \, \meter$, $\NOUGHT{\Dimensional{\hPP}} = 1 \, \meter$ and $\NOUGHT{\Dimensional{\vPP}} = 1 \, \meter \, \second^{-1}$, being $\Dimensional{\gPP}$ the gravitational acceleration on the planet Earth and $\Dimensional{\rPP}$ the radius of the Earth. Though the chosen values of initial height $\NOUGHT{\Dimensional{\hPP}}$ and initial speed $\NOUGHT{\Dimensional{\vPP}}$ have no particular physical meaning, they are reasonable for a ball thrown vertically upward from the surface of the Earth. The comparison of $\lam_i$ values for different scaling methods and the corresponding variability $\Max_i \lam_i / \Min_i \lam_i$ are shown in \autoref{fig:lambdas_val_PP_(a)} and \autoref{fig:lambdas_val_PP_(b)}, respectively. In the performed experiment, $\OSCmin$ allows reducing the variability of coefficients $\lam_1,\lam_2,\lam_3,\lam_4$ by at least two more orders of magnitude compared to any of other traditional scaling approaches, as illustrated by \autoref{fig:lambdas_val_PP_(b)}.

\begin{figure}[!h]
\centering
\begin{subfigure}[c]{.495\linewidth}
\centering
\includegraphics[scale=0.65]{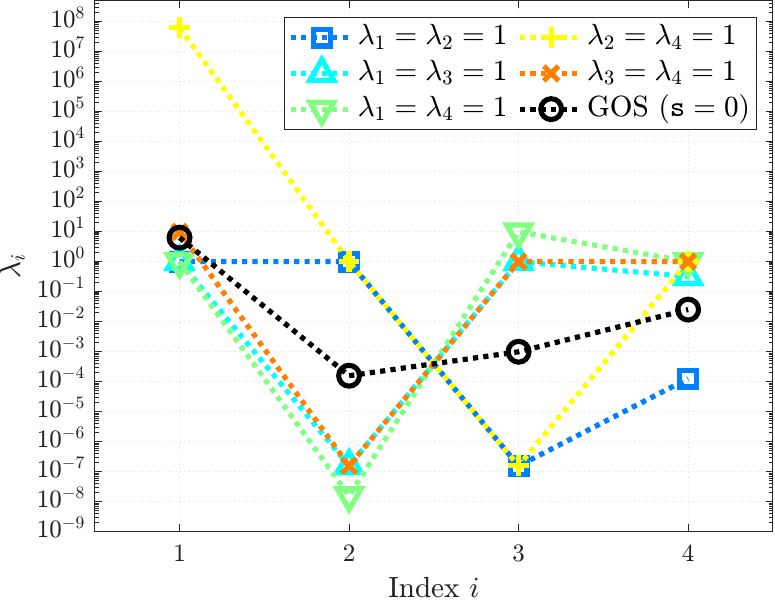}
\caption{Coefficients $\lam_1,\lam_2,\lam_3,\lam_4$ \eqref{eqn:lambdas_def_projectile} are shown by colored symbols, while coefficients $\lam_1,\lam_2,\lam_3,\lam_4$ \eqref{eqn:OS_ProgPb_lam_PI} are depicted with black circles.}
\label{fig:lambdas_val_PP_(a)}
\end{subfigure}	
\captionsetup[subfigure]{oneside,margin={0.5cm,0cm}}
\begin{subfigure}[c]{.495\linewidth}
\centering
\includegraphics[scale=0.65]{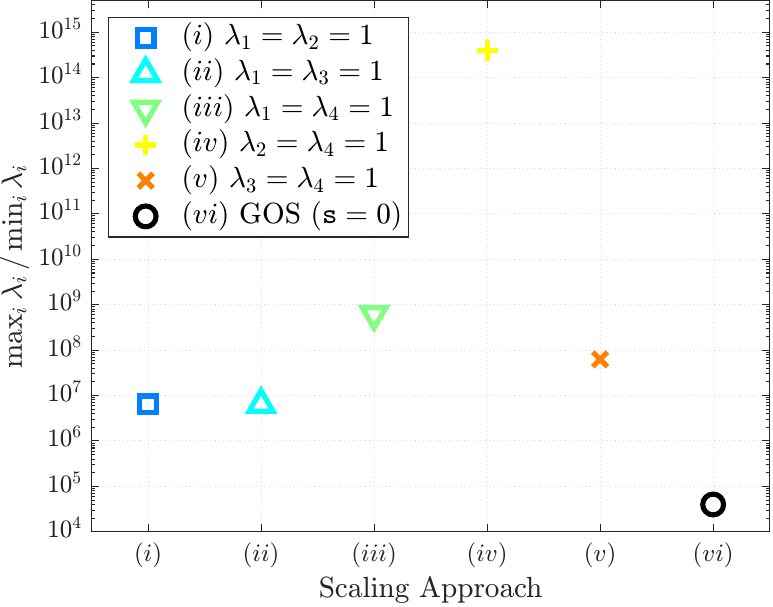}
\caption{Variability $\Max_i \lam_i / \Min_i \lam_i$, with $\lam_1,\lam_2,\lam_3,\lam_4$ \eqref{eqn:lambdas_def_projectile} and \eqref{eqn:OS_ProgPb_lam_PI} shown by colored symbols and black circles, respectively.}
\label{fig:lambdas_val_PP_(b)}
\end{subfigure}	
\caption{Values attained by coefficients $\lam_1,\lam_2,\lam_3,\lam_4$ \eqref{eqn:lambdas_def_projectile} and \eqref{eqn:OS_ProgPb_lam_PI} (\autoref{fig:lambdas_val_PP_(a)}) and corresponding variability $\Max_i \lam_i / \Min_i \lam_i$ (\autoref{fig:lambdas_val_PP_(b)}), when $\Dimensional{\gPP} = 9.81 \, \meter \, \second^{-2}$, $\Dimensional{\rPP} = 6.3781 \times 10^{6} \, \meter$, $\NOUGHT{\Dimensional{\hPP}} = 1 \, \meter$, $\NOUGHT{\Dimensional{\vPP}} = 1 \, \meter \, \second^{-1}$ and $\PI_1,\PI_2$ \eqref{eqn:PI_ProjPb}. In both \autoref{fig:lambdas_val_PP_(a)} and \autoref{fig:lambdas_val_PP_(b)}, the black circles provide the numerical values obtained by Generalized Optimal Scaling ($\OSCmin$) with $\SMALLscaleIDX=0$, i.e., the coefficients $\lam_1,\lam_2,\lam_3,\lam_4$ \eqref{eqn:OS_ProgPb_lam_PI}. \autoref{fig:lambdas_val_PP_(a)} and \autoref{fig:lambdas_val_PP_(b)} also show the values achieved by $\lam_1,\lam_2,\lam_3,\lam_4$ \eqref{eqn:lambdas_def_projectile} if $\NUM{\var}=2$ out of the $\NUMlam=4$ are imposed to be equal to $1$, as indicated by the legend in \autoref{fig:lambdas_val_PP_(a)} and \autoref{fig:lambdas_val_PP_(b)}. The \figuresname do not show the data corresponding to $\lam_2=\lam_3=1$, since it leads to $\Dimensional{\p}_2^{-1} \, \Dimensional{\thcc}_2 = \Dimensional{\p}_3 \, \Dimensional{\thcc}_2^{-1} = 1$ and there are no $\Dimensional{\thcc}_1$ and $\Dimensional{\thcc}_2$ solving such a system of equations.}
\label{fig:lambdas_val_ProjPb}
\end{figure}	

	The minimal variability of the coefficients $\lam_1,\lam_2,\lam_3,\lam_4$ is beneficial for reducing round-off errors, as claimed in \autoref{tab:summary_OS}. In order to illustrate the goodness of the $\OSCmin$ approach for numerical computations, we perform the following experiment.

	The Matlab solver $\odeFF$ \cite{ShampineReicheltODEMatlab} is employed to compute the numerical solution $\xPPdl=\xPPNdig$ to the Ordinary Differential Equation \eqref{eqn:dimensioless_ODE_projectile} in the time interval $[0,\TmaxPP/\thcc_1]$. The subscript $\Ndigit$ indicates that the numerical values of coefficients $\lam_1,\lam_2,\lam_3,\lam_4$ and any computed quantity at each time step are rounded to $\Ndigit$ decimal places. The relative error is estimated as 

\begin{equation}
\errRoundPP \coloneqq
\frac{1}{\TmaxPP}
\int_{0}^{\TmaxPP}
\! \frac{ \left| \thcc_2 \, \xPPNdig( \tPPdl/\thcc_1 ) - \xPPref(\tPPdl) \right|  }
{ \left| \xPPref(\tPPdl) \right| }
\, d\tPPdl,
\label{eqn:error_Round_PP} 
\end{equation}

\noindent where $\xPPref$ is the numerical solution to the Ordinary Differential Equation \eqref{eqn:dimensioless_ODE_projectile} with $\lam_1 = \gPP$, $\lam_2 = \rPP^{-1}$, $\lam_3=\NOUGHT{\hPP}$, $\lam_4=\NOUGHT{\vPP}$, i.e., $\thcc_1=\thcc_2=1$, that has been taken as a reference in the time interval $[0,\TmaxPP]$. Such a solution $\xPPref$ is obtained by using the Matlab solver $\odeFF$ \cite{ShampineReicheltODEMatlab} with the default 16 significant digits of precision for numerical computations. A relative error tolerance of $10^{-5}$ and an absolute error tolerance of $10^{-8}$ have been required for all the computations performed by $\odeFF$ \cite{ShampineReicheltODEMatlab}, being $2$ orders of magnitude smaller than the default tolerances. The Simpson's quadrature rule \cite{DG.2024.simps} has been used to compute the integral in \eqref{eqn:error_Round_PP}.

	\autoref{fig:ref_sol_Round_PP} shows the reference solution $\xPPref$, while \autoref{fig:en_Round_PP} compares the error $\errRoundPP$ \eqref{eqn:error_Round_PP} when $\xPPNdig$ is computed with several choices of coefficients $\lam_1,\lam_2,\lam_3,\lam_4$, as indicated in the legend of \autoref{fig:en_Round_PP}. The colored symbols denote the error corresponding to $\xPPNdig$ calculated by using the numerical values of $\lam_1,\lam_2,\lam_3,\lam_4$ obtained by imposing $\NUM{\var}=2$ out of the $\NUMlam=4$ coefficients $\lam_1,\lam_2,\lam_3,\lam_4$ \eqref{eqn:lambdas_def_projectile} to be equal to $1$, as detailed above. It is possible to notice that the numerical values of $\lam_1,\lam_2,\lam_3,\lam_4$ \eqref{eqn:OS_ProgPb_lam_PI} provided by the $\OSCmin$ approach allow reducing the error $\errRoundPP$ (black circles) up to two more orders of magnitude, compared to the other traditional scaling approaches.

\begin{figure}[!h]
\centering
\begin{subfigure}[c]{.495\linewidth}
\centering
\includegraphics[scale=0.65]{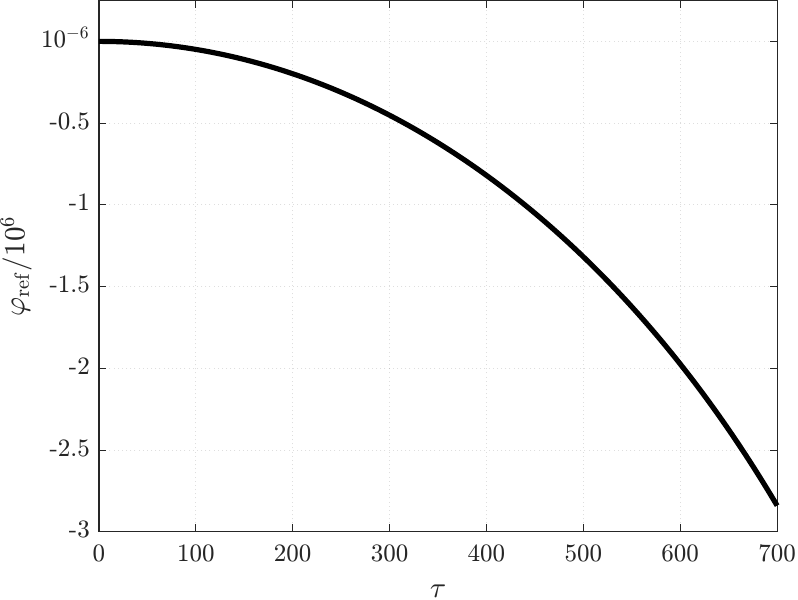}
\caption{Reference solution $\xPPdl=\xPPref$ to \eqref{eqn:dimensioless_ODE_projectile}.}
\label{fig:ref_sol_Round_PP}
\end{subfigure}	
\begin{subfigure}[c]{.495\linewidth}
\centering
\includegraphics[scale=0.65]{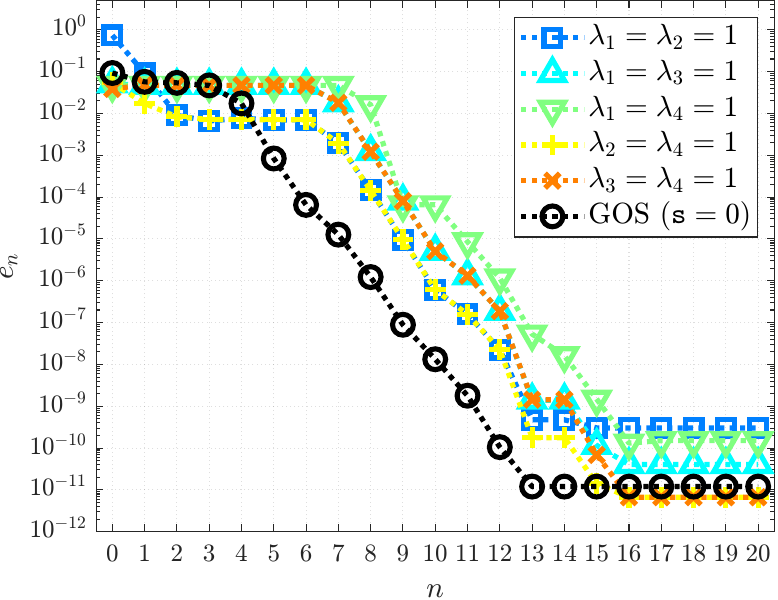}
\caption{Error $\errRoundPP$ \eqref{eqn:error_Round_PP}.}
\label{fig:en_Round_PP}
\end{subfigure}	
\caption{\autoref{fig:ref_sol_Round_PP} shows the numerical solution $\xPPdl=\xPPref$ to the Ordinary Differential Equation \eqref{eqn:dimensioless_ODE_projectile} in the time interval $[0,\TmaxPP]$, with $\lam_1 = \gPP$, $\lam_2 = \rPP^{-1}$, $\lam_3=\NOUGHT{\hPP}$, $\lam_4=\NOUGHT{\vPP}$ and $\TmaxPP=700$, being $\Dimensional{\gPP} = 9.81 \, \meter \, \second^{-2}$, $\Dimensional{\rPP} = 6.3781 \times 10^{6} \, \meter$, $\NOUGHT{\Dimensional{\hPP}} = 1 \, \meter$, $\NOUGHT{\Dimensional{\vPP}} = 1 \, \meter \, \second^{-1}$. Given the shown $\xPPref$ and $\Dimensional{\gPP}$, $\Dimensional{\rPP}$, $\NOUGHT{\Dimensional{\hPP}}$, $\NOUGHT{\Dimensional{\vPP}}$ as above, \autoref{fig:en_Round_PP} depicts the error $\errRoundPP$ \eqref{eqn:error_Round_PP} corresponding to several choices of the coefficients $\lam_1,\lam_2,\lam_3,\lam_4$, as indicated in the legend. The black circles show the error made by using $\lam_1,\lam_2,\lam_3,\lam_4$ \eqref{eqn:OS_ProgPb_lam_PI} with $\PI_1,\PI_2$ \eqref{eqn:PI_ProjPb}, as provided by Generalized Optimal Scaling ($\OSCmin$) with $\SMALLscaleIDX=0$. On the other hand, the colored symbols indicate the error corresponding to numerical values of $\lam_1,\lam_2,\lam_3,\lam_4$ obtained by imposing $\NUM{\var}=2$ out of the $\NUMlam=4$ coefficients $\lam_1,\lam_2,\lam_3,\lam_4$ \eqref{eqn:lambdas_def_projectile} to be equal to $1$. \autoref{fig:en_Round_PP} does not show the data corresponding to $\lam_2=\lam_3=1$, since it leads to $\Dimensional{\p}_2^{-1} \, \Dimensional{\thcc}_2 = \Dimensional{\p}_3 \, \Dimensional{\thcc}_2^{-1} = 1$ and there are no $\Dimensional{\thcc}_1$ and $\Dimensional{\thcc}_2$ solving such a system of equations.}
\label{fig:error_Round_PP}
\end{figure}	

\subsection{Generalized Optimal Scaling $(\SMALLscaleIDX=1,2,3,4)$: Results}
\label{sec:OSC_ProgPb}

	Finally, we show that, in the case of the projectile model, the Generalized Optimal Scaling method is able to provide a threshold on values of parameters that allow discarding terms in the equation \eqref{eqn:dimensioless_ODE_projectile} without significantly modifying its solution.
				
	By applying the $\OSCmin$ approach with a fixed $\SMALLscaleIDX \in \{1,2,3,4\}$ to the projectile model, one aims to identify a quantitative criterion for allowing the approximation of $\xPPdl$ in \eqref{eqn:dimensioless_ODE_projectile} by the solutions $\xPPdl=\xPPdl_\SMALLscaleIDX$ of \eqref{eqn:dimensioless_ODE_projectile} with $\lam_\SMALLscaleIDX=0$, i.e.,
		
\begin{equation}
\xPPdl_\SMALLscaleIDX(\tPPdl) =
\begin{cases}
\lam_4 \, \tPPdl \, + \, \lam_3, 
\quad \forall \tPPdl \ge 0,
& \IF \quad \SMALLscaleIDX=1, 
\\
(-\lam_1/2) \, \tPPdl^2 \, + \, \lam_4 \, \tPPdl \, + \, \lam_3,
\quad \forall \tPPdl \ge 0,
& \IF \quad \SMALLscaleIDX=2,
\\
\mbox{solution to} \quad
\frac{d^2\xPPdl_\SMALLscaleIDX}{d\tPPdl^2} = - \frac{\lam_1}{(1 + \lam_2 \, \xPPdl_\SMALLscaleIDX)^2},
\quad \forall \tPPdl \ge 0,
\quad \xPPdl_\SMALLscaleIDX(0) = 0, 
\quad \frac{d\xPPdl_\SMALLscaleIDX}{d\tPPdl}(0) = \lam_4,
& \IF \quad \SMALLscaleIDX=3,
\\
\mbox{solution to} \quad
\frac{d^2\xPPdl_\SMALLscaleIDX}{d\tPPdl^2} = - \frac{\lam_1}{(1 + \lam_2 \, \xPPdl_\SMALLscaleIDX)^2},
\quad \forall \tPPdl \ge 0,
\quad \xPPdl_\SMALLscaleIDX(0) = \lam_3, 
\quad \frac{d\xPPdl_\SMALLscaleIDX}{d\tPPdl}(0) = 0,
& \IF \quad \SMALLscaleIDX=4.
\end{cases}
\label{eqn:OSC_ProgPb_approx_sol} 
\end{equation}

\noindent In order to quantify the goodness of approximation, we first set the coefficients $\lam_1,\lam_2,\lam_3,\lam_4$ as in \eqref{eqn:OSC_ProgPb_s1_lam_PI}, \eqref{eqn:OSC_ProgPb_s2_lam_PI}, \eqref{eqn:OSC_ProgPb_s3_lam_PI}, \eqref{eqn:OSC_ProgPb_s4_lam_PI} for $\SMALLscaleIDX=1,2,3,4$, respectively, and then, estimate the approximation error as

\begin{align}
\errReducePP & \coloneqq
\frac{ \int_{0}^{\TmaxPP} \! \left| \xPPdl_\SMALLscaleIDX(\tPPdl) - \xPPref(\tPPdl) \right|^2 \, d\tPPdl }
{ \int_{0}^{\TmaxPP} \! \left| \xPPref(\tPPdl) \right|^2 \, d\tPPdl },
\quad \WITH \quad
\TmaxPP \coloneqq \Min \left\{ \TsingRedPP, \TmaxRedPP \right\},
\quad 
\TmaxRedPP \in (0,\infty),
\nonumber \\
\TsingRedPP & \coloneqq
\begin{cases}
\Min
\left\{ \tPPdl \ge 0 \, : \, \left| 1 \, + \, \lam_2 \, \xPPref(\tPPdl) \right| \le 10^{-2} \right\},
& \IF \quad \SMALLscaleIDX=1,2, \\
\Min
\left\{ \tPPdl \ge 0 \, : 
\, \left| 1 \, + \, \lam_2 \, \xPPdl_\SMALLscaleIDX(\tPPdl) \right| \le 10^{-2}
\, \lor 
\, \left| 1 \, + \, \lam_2 \, \xPPref(\tPPdl) \right| \le 10^{-2} \right\},
& \IF \quad \SMALLscaleIDX=3,4, \\
\end{cases}
\label{eqn:OSC_ProgPb_approx_error}
\end{align}

\noindent where $\xPPdl_\SMALLscaleIDX(\tPPdl)$ is the approximating solution \eqref{eqn:OSC_ProgPb_approx_sol}, while $\xPPref(\tPPdl)$ is the numerical solution to \eqref{eqn:dimensioless_ODE_projectile} provided by the Matlab solver $\odeFF$ \cite{ShampineReicheltODEMatlab}, that has been taken as a reference since it is the solution one aims to approximate. If $\SMALLscaleIDX=3,4$, the approximating solution \eqref{eqn:OSC_ProgPb_approx_sol} is also numerically computed by means of the Matlab solver $\odeFF$ \cite{ShampineReicheltODEMatlab}. A relative error tolerance of $10^{-6}$ and an absolute error tolerance of $10^{-9}$ have been required for all the computations performed by $\odeFF$ \cite{ShampineReicheltODEMatlab}, i.e., $3$ orders of magnitude smaller than the default tolerances. The Simpson's quadrature rule \cite{DG.2024.simps} has been used to compute the integrals in \eqref{eqn:OSC_ProgPb_approx_error}. The time $\TmaxPP$ in \eqref{eqn:OSC_ProgPb_approx_error} is chosen with the aim of stopping the numerical integration just before singularities are attained in \eqref{eqn:dimensioless_ODE_projectile} or \eqref{eqn:OSC_ProgPb_approx_sol}. Such irregularities appear when the computed solutions of \eqref{eqn:dimensioless_ODE_projectile} or \eqref{eqn:OSC_ProgPb_approx_sol} are negative, i.e., equal to $-1/\lam_2<0$. In other words, the time interval $[0,\TmaxPP]$ in \eqref{eqn:OSC_ProgPb_approx_error} allows investigating the dynamics of the projectile further than it reaches the ground level. The resulting error $\errReducePP$ \eqref{eqn:OSC_ProgPb_approx_error} is shown in \autoref{fig:OSC_approx_err_lambdas_small} as a function of $\PI_1$ and $\PI_2$ \eqref{eqn:PI_ProjPb}. 

	As explained in \autoref{sec:OptScal_MinimalParam}, $\OSCmin$ identifies \eqref{eqn:regime_unity_GOS}-\eqref{eqn:threshold_GOS} as conditions for a large difference in magnitudes between $\lam_\SMALLscaleIDX$ and all the remaining $\lam_i$, with $i \neq \SMALLscaleIDX \in \{1,\dots,\NUMlam\}$. Such requirements can be written for some $\DeltaLam \in (-\infty,-1]$ and $\GammaLam \in (0,1]$ as
	
\begin{equation}
\lam_\SMALLscaleIDX \le 10^{\DeltaLam}
\quad \wedge \quad
10^{-\GammaLam} < \lam_i < 10^{\GammaLam},
\quad
\forall i \neq \SMALLscaleIDX.
\label{eqn:scale_sep_delta_gamma}
\end{equation}
	
\noindent By using \eqref{eqn:OSC_ProgPb_s1_lam_PI}, \eqref{eqn:OSC_ProgPb_s2_lam_PI}, \eqref{eqn:OSC_ProgPb_s3_lam_PI}, \eqref{eqn:OSC_ProgPb_s4_lam_PI} for $\SMALLscaleIDX=1,2,3,4$, respectively, the inequalities \eqref{eqn:scale_sep_delta_gamma} give

\begin{align}
& \lam_\SMALLscaleIDX = \PI_1^{-2} \, \PI_2^{1/2} \le 10^{\DeltaLam} 
&& \wedge && -2\GammaLam < \log_{10}(\PI_2) < 2\GammaLam, 
&& \IF \quad \SMALLscaleIDX=1,
\nonumber \\
& \lam_\SMALLscaleIDX = \PI_1^{1/3} \, \PI_2^{5/6} \le 10^{\DeltaLam} 
&& \wedge && -3\GammaLam < \log_{10}(\PI_2) - 2\log_{10}(\PI_1) < 3\GammaLam,
&& \IF \quad \SMALLscaleIDX=2,
\nonumber \\
& \lam_\SMALLscaleIDX = \PI_1^{-1/3} \, \PI_2 \le 10^{\DeltaLam} 
&& \wedge && -3\GammaLam < 2\log_{10}(\PI_1) < 3\GammaLam,
&& \IF \quad \SMALLscaleIDX=3,
\nonumber \\
& \lam_\SMALLscaleIDX = \PI_1 \, \PI_2^{-1/4} \le 10^{\DeltaLam} 
&& \wedge && -2\GammaLam < \log_{10}(\PI_2) < 2\GammaLam, 
&& \IF \quad \SMALLscaleIDX=4,
\label{eqn:OSC_ProgPb_lams_small}
\end{align}

\noindent with $\DeltaLam \in (-\infty,-1]$ and $\GammaLam \in (0,1]$. \autoref{fig:OSC_approx_err_lambdas_small} shows that requiring \eqref{eqn:OSC_ProgPb_lams_small} with $\DeltaLam=-2$ and $\GammaLam=1/2$ guarantees a relative error of at most $10\%$, i.e., $\errReducePP \le 10^{-1}$, for all simulated intervals of time $\TmaxPP$ and all $\SMALLscaleIDX=1,2,3,4$. In \autoref{fig:OSC_approx_err_lambdas_small}, the threshold $\errReducePP = 10^{-1}$ is highlighted by white solid lines, while the conditions \eqref{eqn:OSC_ProgPb_lams_small} with $\DeltaLam=-2$ and $\GammaLam=1/2$ are satisfied by the values of $\PI_1,\PI_2$ lying in the areas delimited by magenta dashed lines. Finally, as clearly demonstrated by \autoref{fig:OSC_approx_err_lambdas_small}, the approximation by an asymptotic model is improving, i.e., the value of $\errReducePP$ is decreasing for all tested $\SMALLscaleIDX$ and $\TmaxPP$, by choosing a smaller $\DeltaLam \in (-\infty,-2]$ at the fixed $\GammaLam=1/2$ in \eqref{eqn:OSC_ProgPb_lams_small} to obtain the ranges of appropriate physical parameters for such a model.   

\section{Conclusions \& Discussion}
\label{sec:concl_discuss}

	We reviewed methodologies for scaling of dimensional models and presented a novel approach which we called Generalized Optimal Scaling, or $\OSCmin$. $\OSCmin$ inherits all the features of its predecessors, i.e., Optimal Scaling ($\OSapproach$) and Optimal Scaling with Constraints ($\OSCapproach$), including the ability to quantify conditions for asymptotic models and, in addition, provides control of over-parametrization - the property beneficial for complex multiscale models. Both $\OSapproach$ and $\OSCapproach$ can be recovered from $\OSCmin$ with the special choices of the $\OSCmin$' parameter $\SMALLscaleIDX$. Such a non-negative integer $\SMALLscaleIDX$ is the index of the coefficient $\lam_\SMALLscaleIDX$ vanishing to $0$ when an asymptotic model is considered.	 The $\OSCmin$ method was rigorously formulated and carefully tested on the classical projectile model.

	By setting $\SMALLscaleIDX=0$ in $\OSCmin$ (see \autoref{sec:achieve_OS_MinParam}), one makes use of the Optimal Scaling methodology \cite{RUSCONI2019106944} presented in \autoref{sec:achieve_OS}. For the projectile model, we demonstrated that $\OSapproach$ allows reducing the variability of the involved coefficients by at least two more orders of magnitude compared to other scaling approaches traditionally applied to this model. Such a reduction is beneficial for diminishing round-off and numerical errors. Indeed, we showed that the coefficients provided by the Optimal Scaling approach allow reducing round-off errors by two more orders of magnitude compared to the traditional scaling techniques.
			
	When $\OSCmin$ is applied with $\SMALLscaleIDX \ge 1$ (see \autoref{sec:achieve_OS_MinParam}), one performs the Optimal Scaling with Constraints \cite{RUSCONI2023127756} explained in \autoref{sec:achieve_OSC}. Such an approach provides a threshold on values of parameters that allow discarding terms in the considered equations, without significantly modifying the solutions. Thus, one can drop some terms and consider a simpler model, if the provided criterion is met. For the projectile model, $\OSCapproach$ is able to identify values of parameters yielding to a relative approximation error of at most $10\%$ for all simulated intervals of time and all $\SMALLscaleIDX \ge 1$.	

	We remark that $\OSCmin$ is applicable to models of arbitrary levels of complexity, which will be presented in upcoming studies. Moreover, a natural future step is an incorporation of $\OSCmin$ in an automated tool to streamline model reduction and scaling, optimizing parameter selection before further modeling steps. This would facilitate the use of $\OSCmin$ in complex, high-dimensional models, even for users with less mathematical expertise. 

\newgeometry{top=0.5cm,bottom=1.5cm,left=0.75cm,right=0.75cm}

\begin{figure}[!hp]
\centering
\includegraphics[scale=1.2]{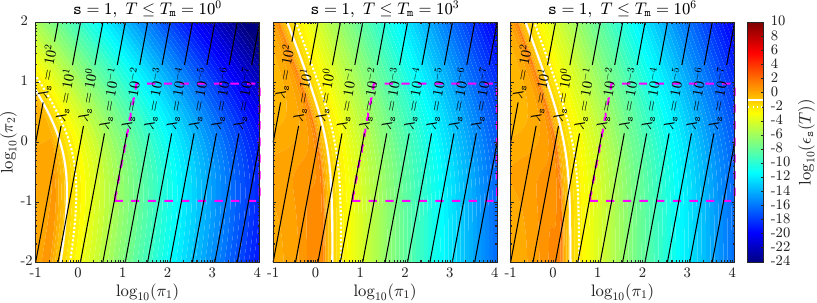} \\
\vspace{0.1in}
\includegraphics[scale=1.2]{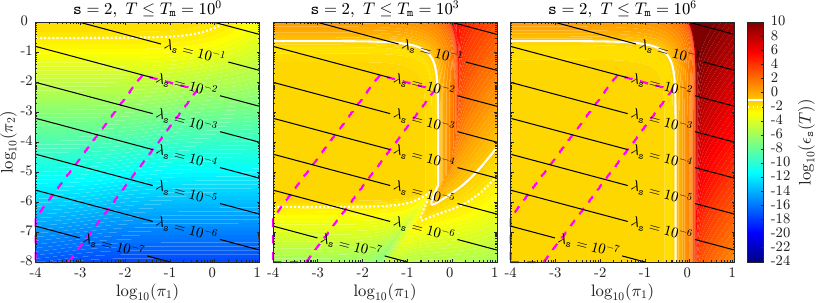} \\
\vspace{0.1in}
\includegraphics[scale=1.2]{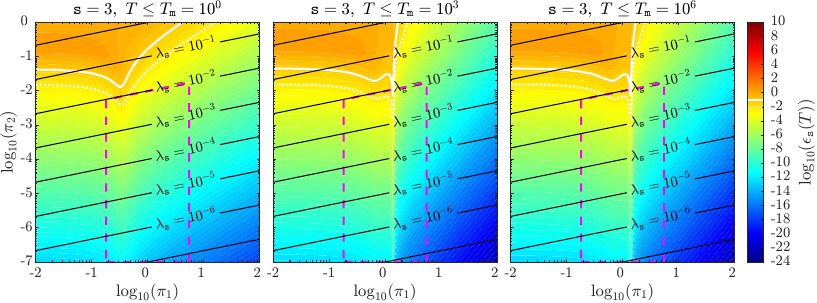} \\
\vspace{0.1in}
\includegraphics[scale=1.2]{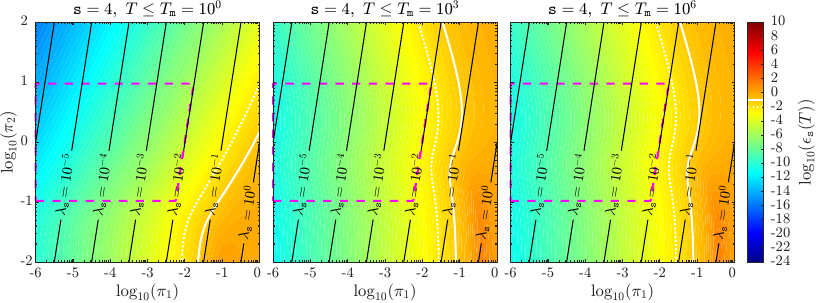} 
\caption{Error $\errReducePP$ \eqref{eqn:OSC_ProgPb_approx_error} and coefficient $\lam_\SMALLscaleIDX$ \eqref{eqn:OSC_ProgPb_lams_small} as functions of $\PI_1,\PI_2$ \eqref{eqn:PI_ProjPb}. Along the white solid and dotted lines, the error $\errReducePP$ is equal to $10^{-1}$ and $10^{-2}$, respectively. The inequalities \eqref{eqn:OSC_ProgPb_lams_small} are met with $\DeltaLam=-2$ and $\GammaLam=1/2$ in the areas delimited by magenta dashed lines.}
\label{fig:OSC_approx_err_lambdas_small}
\end{figure}	

\restoregeometry	

\section*{Acknowledgements}

	This research was supported by MICIU/AEI/10.13039/501100011033 and by ERDF A way for Europe under Grant PID2022-136585NB-C22; funded by MICIU/AEI/10.13039/501100011033 and cofunded by the European Union under Grant   PID2023-146764NB-I00; financed by the Basque Government through ELKARTEK Programme under Grant KK-2024/00062 and through the BERC 2022-2025 program.  
	
	We acknowledge the financial support by the Ministry of Science and Innovation through BCAM Severo Ochoa accreditation CEX2021-001142-S/MICIU/AEI/10.13039/501100011033 and ``PLAN COMPLEMENTARIO MATERIALES AVANZADOS 2022-2025'', PROYECTO N$^{\mbox{\underline{\footnotesize{o}}}}$ 1101288.
	
	The authors acknowledge the financial support received from the grant BCAM-IKUR, funded by the Basque Government, by the IKUR Strategy and by the European Union NextGenerationEU/PRTR.
	
	This work has been possible thanks to the support of the computing infrastructure of Barcelona Supercomputing Center (RES, QHS-2025-1-0027) and DIPC Computer Center.
	
	This work was  partially supported by a grant of the Ministry of Research, Innovation and Digitization, CNCS-UEFISCDI, project number PN-IV-P2-2.1-T-TE-2023-1704, within PNCDI IV.

\appendix

\counterwithin{equation}{section}
\renewcommand{\theequation}{\thesection.\arabic{equation}}

\renewcommand{\thesection}{A}		
\section{Proof of \autoref{th:DIMLESS_quant}}
\label{sec:proofThII}
	
\ThII*

\begin{proof}

	Any quantity of the form \eqref{eqn:ZZdimless} is unitless if and only if its dimensions are 
	
\begin{equation}
\UNITS{\Dimensional{\p}_1}^{\zz_1} \, 
\UNITS{\Dimensional{\p}_2}^{\zz_2} \, 
\cdots \, 
\UNITS{\Dimensional{\p}_{\NUM{\p}}}^{\zz_{\NUM{\p}}} = 1,
\end{equation}

\noindent or, equivalently, by using \eqref{eqn:phys_param_dimen} (\cref{prop:phys_param_dimen}),

\begin{equation}
\UNITS{\Dimensional{\p}_1}^{\zz_1} \, \UNITS{\Dimensional{\p}_2}^{\zz_2} \, \cdots \, \UNITS{\Dimensional{\p}_{\NUM{\p}}}^{\zz_{\NUM{\p}}} 
=
\unit_1^{\sum_{j=1}^{\NUM{\p}} \bp_{1,j} \, \zz_j} \, 
\unit_2^{\sum_{j=1}^{\NUM{\p}} \bp_{2,j} \, \zz_j} \, 
\cdots \, 
\unit_{\NUMunit}^{\sum_{j=1}^{\NUM{\p}} \bp_{\NUMunit,j} \, \zz_j}
= 
1.
\label{eqn:Dimensions_UNITlessQ}
\end{equation}	

\noindent From independence of $\unit_1, \dots, \unit_{\NUMunit}$ assured by \cref{prop:phys_param_dimen}, \eqref{eqn:Dimensions_UNITlessQ} is equivalent to

\begin{equation}
\sum_{j=1}^{\NUM{\p}} \bp_{i,j} \, \zz_j = 0,
\quad \forall i=1,\dots,\NUMunit,
\quad \IdEst \quad
\MM \, \VEC{\zz} = \VEC{0},
\end{equation}

\noindent for $\MM$ defined in \eqref{eqn:matrixMM} and $\VEC{\zz} \coloneqq ( \zz_1, \zz_2, \dots, \zz_{\NUM{\p}} )^{\Transp} \in \mathbb{R}^{\NUM{\p}}$. In other words, any quantity of the form \eqref{eqn:ZZdimless} is dimensionless if and only if the vector $\VEC{\zz}$ of exponents in \eqref{eqn:ZZdimless} belongs to the null space (kernel) of matrix $\MM$ \eqref{eqn:matrixMM}, i.e., $\VEC{\zz} \in \Ker(\MM)$.

\end{proof}

\renewcommand{\thesection}{B}		
\section{Proof of \autoref{th:PI_quantities}}
\label{sec:proofThIII}

\ThIII*

\begin{proof}

	First, we show that \cref{prop:phys_param_indep,prop:phys_param_positive,prop:phys_param_dimen} imply that the parameters $\PI_1,\dots,\PI_{\NUMPI}$ \eqref{eqn:Indep_Dimless_Param} are dimensionless, strictly positive and independent. Then, we conclude the proof of \autoref{th:PI_quantities} by demonstrating that any dimensionless quantity of the form \eqref{eqn:ZZdimless} can be written in terms of such $\PI_1,\dots,\PI_{\NUMPI}$ only, as this means that there are no further independent quantities one can consider.
		
	Given the matrix $\MM$ \eqref{eqn:matrixMM} of exponents in \eqref{eqn:phys_param_dimen} (\cref{prop:phys_param_dimen}), any basis of $\Ker(\MM)$ is composed by $ \NUMPI \coloneqq \Dim(\Ker(\MM)) = \NUM{\p} - \Rank(\MM) $ independent vectors $\VEC{\BB}_1, \dots, \VEC{\BB}_{\NUMPI} \in \Ker(\MM) \subseteq \mathbb{R}^{\NUM{\p}}$. The corresponding $\PI_1,\dots,\PI_{\NUMPI}$ \eqref{eqn:Indep_Dimless_Param} are dimensionless since $\VEC{\BB}_1, \dots, \VEC{\BB}_{\NUMPI} \in \Ker(\MM)$, as guaranteed by \autoref{th:DIMLESS_quant} under \cref{prop:phys_param_dimen}.
	
	Then, by employing the notation \eqref{eqn:notation_dimensional_quant} in \eqref{eqn:Indep_Dimless_Param}, one has
	
\begin{equation}
\PI_k = \Dimensional{\p}_1^{\BB_{k,1}} \, \cdots \, \Dimensional{\p}_{\NUM{\p}}^{\BB_{k,\NUM{\p}}} =
\p_1^{\BB_{k,1}} \, \UNITS{\Dimensional{\p}_1}^{\BB_{k,1}}
\, \cdots \,
\p_{\NUM{\p}}^{\BB_{k,{\NUM{\p}}}} \, \UNITS{\Dimensional{\p}_{\NUM{\p}}}^{\BB_{k,{\NUM{\p}}}},
\quad
\forall k=1,\dots,\NUMPI.
\end{equation}

\noindent As $\PI_1,\dots,\PI_{\NUMPI}$ are dimensionless, i.e., 

\begin{equation}
\UNITS{\PI_k}
=
\UNITS{\Dimensional{\p}_1}^{\BB_{k,1}}
\, \cdots \,
\UNITS{\Dimensional{\p}_{\NUM{\p}}}^{\BB_{k,{\NUM{\p}}}}
= 1,
\quad \forall k=1,\dots,\NUMPI,
\end{equation}

\noindent and $\p_1, \dots, \p_{\NUM{\p}} \in (0,\infty)$ (\cref{prop:phys_param_positive}), it follows the strict positiveness of $\PI_1,\dots,\PI_{\NUMPI}$ \eqref{eqn:Indep_Dimless_Param}, i.e.,

\begin{equation}
\PI_k = \p_1^{\BB_{k,1}} \, \cdots \, \p_{\NUM{\p}}^{\BB_{k,\NUM{\p}}}
\in (0,\infty), \quad \forall k=1,\dots,\NUMPI.
\end{equation}

	Moreover, such $\PI_1,\dots,\PI_{\NUMPI}$ \eqref{eqn:Indep_Dimless_Param} are independent since
	
\begin{equation}
\PI_1^{\bbexp_1} \, \PI_2^{\bbexp_2} \, \cdots \, \PI_{\NUMPI}^{\bbexp_{\NUMPI}} = 1 
\Leftrightarrow 
\bbexp_1 = \bbexp_2 = \cdots = \bbexp_{\NUMPI} = 0.
\label{eqn:PI_INDEP}
\end{equation}	
	
\noindent On the one hand, $\bbexp_1 = \bbexp_2 = \cdots = \bbexp_{\NUMPI} = 0$ trivially implies that $\PI_1^{\bbexp_1} \, \PI_2^{\bbexp_2} \, \cdots \, \PI_{\NUMPI}^{\bbexp_{\NUMPI}} = 1 $. On the other hand, given $\PI_1^{\bbexp_1} \, \PI_2^{\bbexp_2} \, \cdots \, \PI_{\NUMPI}^{\bbexp_{\NUMPI}} = 1$, it follows from \eqref{eqn:Indep_Dimless_Param} that

\begin{equation}
\PI_1^{\bbexp_1} \, \PI_2^{\bbexp_2} \, \cdots \, \PI_{\NUMPI}^{\bbexp_{\NUMPI}}
=
\Dimensional{\p}_1^{\sum_{k=1}^{\NUMPI} \bbexp_k \, \BB_{k,1} } \,
\Dimensional{\p}_2^{\sum_{k=1}^{\NUMPI} \bbexp_k \, \BB_{k,2} } \,
\cdots \,
\Dimensional{\p}_{\NUM{\p}}^{\sum_{k=1}^{\NUMPI} \bbexp_k \, \BB_{k,\NUM{\p}} } \,
=
1, 
\end{equation} 
	
\noindent which implies, from independence of $\Dimensional{\p}_1, \dots, \Dimensional{\p}_{\NUM{\p}}$ assured by \cref{prop:phys_param_indep},

\begin{equation}
\sum_{k=1}^{\NUMPI} \bbexp_k \, \BB_{k,j} = 0,
\quad
\forall j=1,\dots,\NUM{\p},
\quad \IdEst \quad
\bbexp_1 \, \VEC{\BB}_1
\, + \,
\bbexp_2 \, \VEC{\BB}_2
\, + \, \cdots \,
\bbexp_{\NUMPI} \, \VEC{\BB}_{\NUMPI}
=
\VEC{0}.
\label{eqn:PI_BB_INDEP_}
\end{equation} 

\noindent As $\VEC{\BB}_1, \dots, \VEC{\BB}_{\NUMPI}$ constitute a basis of $\Ker(\MM)$, they are linearly independent vectors. Thus, \eqref{eqn:PI_BB_INDEP_} gives $\bbexp_1 = \bbexp_2 = \cdots = \bbexp_{\NUMPI} = 0$ and \eqref{eqn:PI_INDEP} holds, showing that $\PI_1,\dots,\PI_{\NUMPI}$ \eqref{eqn:Indep_Dimless_Param} are independent.
	
	Finally, any dimensionless quantity of the form $\Dimensional{\p}_1^{\zz_1} \, \Dimensional{\p}_2^{\zz_2} \, \cdots \, \Dimensional{\p}_{\NUM{\p}}^{\zz_{\NUM{\p}}} $, with $\zz_1, \zz_2, \dots, \zz_{\NUM{\p}} \in \mathbb{R}$, is such that $\VEC{\zz} \coloneqq ( \zz_1, \zz_2, \dots, \zz_{\NUM{\p}} )^{\Transp} \in \Ker(\MM)$, as guaranteed by \autoref{th:DIMLESS_quant} under \cref{prop:phys_param_dimen}. Then, the vector $\VEC{\zz}$ can be written as a linear combination of the vectors $\VEC{\BB}_1, \dots, \VEC{\BB}_{\NUMPI}$ composing a basis of $\Ker(\MM)$. By using \eqref{eqn:Indep_Dimless_Param}, it follows that

\begin{equation}
\Dimensional{\p}_1^{\zz_1} \,
\Dimensional{\p}_2^{\zz_2} \,
\cdots \,
\Dimensional{\p}_{\NUM{\p}}^{\zz_{\NUM{\p}}} 
=
\Dimensional{\p}_1^{\sum_{k=1}^{\NUMPI} \bbexp_k \, \BB_{k,1} } \,
\Dimensional{\p}_2^{\sum_{k=1}^{\NUMPI} \bbexp_k \, \BB_{k,2} } \,
\cdots \,
\Dimensional{\p}_{\NUM{\p}}^{\sum_{k=1}^{\NUMPI} \bbexp_k \, \BB_{k,\NUM{\p}} } \,
=
\PI_1^{\bbexp_1} \, \PI_2^{\bbexp_2} \, \cdots \, \PI_{\NUMPI}^{\bbexp_{\NUMPI}}, 
\end{equation} 

\noindent for some $\bbexp_1,\dots,\bbexp_{\NUMPI} \in \mathbb{R}$. In conclusion, any dimensionless quantity of the form \eqref{eqn:ZZdimless} can be written in terms of $\PI_1,\dots,\PI_{\NUMPI}$ \eqref{eqn:Indep_Dimless_Param} only.
\end{proof}

\renewcommand{\thesection}{C}		
\section{Proof of \autoref{th:lamb_pow_PI}}
\label{sec:proofThIV}

\ThIV*

\begin{proof}

	As it does for $\PI_1,\dots,\PI_{\NUMPI}$ \eqref{eqn:Indep_Dimless_Param} (see the proof of \autoref{th:PI_quantities} in \autoref{sec:proofThIII}), \cref{prop:phys_param_positive} ensures that the numerical values $\Klam_i$ and $\thcc_j$ of all $\Dimensional{\Klam}_i$ in \eqref{eqn:Klam_powlaw_pp} and $\Dimensional{\thcc}_j$ in \eqref{eqn:char_const_pow_PhysParam} are strictly positive. Then, it is possible to plug \eqref{eqn:Klam_powlaw_pp} and \eqref{eqn:char_const_pow_PhysParam} into \eqref{eqn:lam_coeff_I} (see \cref{prop:lam_coeff_I,prop:Klam_powlaw_pp,prop:char_const}). This uncovers that each of the dimensionless coefficients $\lam_1,\dots,\lam_{\NUMlam}$ \eqref{eqn:lam_coeff_0} can be written as a power-law monomial of the physical parameters $\Dimensional{\p}_1,\dots,\Dimensional{\p}_{\NUM{\p}}$ only, i.e., in the form \eqref{eqn:ZZdimless}:
	
\begin{equation}
\exists \, \VEC{\LL}_i \coloneqq
( \LL_{i,1}, \LL_{i,2}, \dots, \LL_{i,\NUM{\p}} )^{\Transp}
\in \mathbb{R}^{\NUM{\p}}
\quad \ST \quad
\lam_i =
\Dimensional{\p}_1^{\LL_{i,1}} \, 
\Dimensional{\p}_2^{\LL_{i,2}} \, 
\cdots \, 
\Dimensional{\p}_{\NUM{\p}}^{\LL_{i,\NUM{\p}}},
\label{eqn:lambdas_pow_param_p}
\end{equation}

\noindent being $\LL_{i,j}$ the $j$-th component of the vector $\VEC{\LL}_i$ for any $i=1,\dots,\NUMlam$ and $j=1,\dots,\NUM{\p}$.

	Each of the coefficients $\lam_1,\dots,\lam_{\NUMlam}$ in \eqref{eqn:lambdas_pow_param_p} is dimensionless and, thus, \autoref{th:DIMLESS_quant} guarantees, under \cref{prop:phys_param_dimen}, that each of the vectors $\VEC{\LL}_1,\dots,\VEC{\LL}_{\NUMlam}$ of exponents in \eqref{eqn:lambdas_pow_param_p} belongs to the kernel of the matrix $\MM$ \eqref{eqn:matrixMM}, i.e., $\VEC{\LL}_1,\dots,\VEC{\LL}_{\NUMlam} \in \Ker(\MM)$. Recalling \autoref{th:PI_quantities} under \cref{prop:phys_param_indep,prop:phys_param_positive,prop:phys_param_dimen}, one considers the basis' vectors $\VEC{\BB}_1, \dots, \VEC{\BB}_{\NUMPI}$ \eqref{eqn:Indep_Dimless_Param} for $\Ker(\MM)$. Then, any vector $\VEC{\LL}_i$ \eqref{eqn:lambdas_pow_param_p} belonging to $\Ker(\MM)$ can be written (in a unique way) as a linear combination of such $\VEC{\BB}_1, \dots, \VEC{\BB}_{\NUMPI}$ \eqref{eqn:Indep_Dimless_Param}, i.e.,
		
\begin{equation}
\forall \, \VEC{\LL}_i \, \eqref{eqn:lambdas_pow_param_p}
\quad \exists! \, 
\VEC{\AAcoeff}_i \coloneqq
(\AAcoeff_{i,1},\AAcoeff_{i,2},\dots,\AAcoeff_{i,\NUMPI})^{\Transp} \in \mathbb{R}^{\NUMPI}
\quad \ST \quad
\VEC{\LL}_i = \sum_{k=1}^{\NUMPI} \AAcoeff_{i,k} \, \VEC{\BB}_k,
\label{eqn:LL_lin_combo_ZZ}
\end{equation}	 

\noindent where $\AAcoeff_{i,k}$ is the $k$-th component of the vector $\VEC{\AAcoeff}_i$, with $i=1,\dots,\NUMlam$ and $k=1,\dots,\NUMPI$. 

	By plugging \eqref{eqn:LL_lin_combo_ZZ} into \eqref{eqn:lambdas_pow_param_p} and, then, using \eqref{eqn:Indep_Dimless_Param} (\autoref{th:PI_quantities}) under \cref{prop:phys_param_indep,prop:phys_param_positive,prop:phys_param_dimen}, it follows that each of the coefficients $\lam_1,\dots,\lam_{\NUMlam}$ \eqref{eqn:lam_coeff_0} can be written as a power-law monomial of the independent dimensionless parameters $\PI_1,\dots,\PI_{\NUMPI}$ \eqref{eqn:Indep_Dimless_Param} only, i.e.,
	
\begin{equation}
\lam_i 
= 
\Dimensional{\p}_1^{\LL_{i,1}} \, 
\cdots \, 
\Dimensional{\p}_{\NUM{\p}}^{\LL_{i,\NUM{\p}}}
=
\Dimensional{\p}_1^{\sum_{k=1}^{\NUMPI} \AAcoeff_{i,k} \, \BB_{k,1}} \, 
\cdots \,
\Dimensional{\p}_{\NUM{\p}}^{\sum_{k=1}^{\NUMPI} \AAcoeff_{i,k} \, \BB_{k,\NUM{\p}}} 
=
\PI_1^{\AAcoeff_{i,1}} \, 
\cdots \, 
\PI_{\NUMPI}^{\AAcoeff_{i,\NUMPI}},
\label{eqn:lam_PI_AA}
\end{equation}		
		
\noindent being $\LL_{i,j}$ and $\BB_{k,j}$ the $j$-th components of the vectors $\VEC{\LL}_i$ and $\VEC{\BB}_k$, respectively, for any $i=1,\dots,\NUMlam$, $j=1,\dots,\NUM{\p}$ and $k=1,\dots,\NUMPI$. We remark that the values of the exponents $\AAcoeff_{i,k}$ in \eqref{eqn:lam_PI_AA} are not explicitly known and, in this sense, we state that they are unknowns of scaling procedures.		
\end{proof}

\renewcommand{\thesection}{D}		
\section{Proof of \autoref{th:sysYY}}
\label{sec:proofThV}

\ThV*

\begin{proof}
	
	We consider the vectors $\VEC{\LL}_1,\dots,\VEC{\LL}_{\NUMlam} \in \mathbb{R}^{\NUM{\p}}$ of exponents in \eqref{eqn:lambdas_pow_param_p}. As explained in the proof of \autoref{th:lamb_pow_PI} (\autoref{sec:proofThIV}), we have that \namecrefs{th:DIMLESS_quant} \ref{th:DIMLESS_quant} and \ref{th:PI_quantities} combined with \cref{prop:phys_param_indep,prop:phys_param_positive,prop:phys_param_dimen} ensure \eqref{eqn:LL_lin_combo_ZZ} for each vector $\VEC{\LL}_1,\dots,\VEC{\LL}_{\NUMlam}$. This can be understood as follows. Let us denote $\BBmatrix$ as the matrix whose columns are the vectors $\VEC{\BB}_1,\dots,\VEC{\BB}_{\NUMPI}$ \eqref{eqn:Indep_Dimless_Param}, i.e.,
	
\begin{equation}
\BBmatrix \coloneqq
\begin{pmatrix}
\VEC{\BB}_1 & \VEC{\BB}_2 & \cdots & \VEC{\BB}_{\NUMPI} 
\end{pmatrix}
\in \mathbb{R}^{\NUM{\p} \times \NUMPI},
\quad \WITH \quad
\NUMPI<\NUM{\p}.
\label{eqn:BB_def}
\end{equation}

\noindent Each of the vectors $\VEC{\LL}_1,\dots,\VEC{\LL}_{\NUMlam}$ in \eqref{eqn:LL_lin_combo_ZZ} belongs to the column space of such a matrix $\BBmatrix$ \eqref{eqn:BB_def}, which is equivalent to the orthogonal complement of the null space (kernel) of $\BBmatrix^{\Transp}$, i.e.,

\begin{equation}
\VEC{\LL}_i \in 
\Colsp(\BBmatrix) \equiv \left( \Ker(\BBmatrix^{\Transp}) \right)^{\Ortho},
\quad \forall i=1,\dots,\NUMlam.
\label{eqn:LLvec_OrthoCompl}
\end{equation}

\noindent One has $\NUMPI=\NUM{\p}-\Rank(\MM)<\NUM{\p}$ in \eqref{eqn:BB_def} owing to $\Rank(\MM)\ge1$. The latter inequality follows from \cref{prop:phys_param_dimen} guaranteeing that $\MM$ \eqref{eqn:matrixMM} cannot be a matrix whose entries are all equal to zero. Then, since $\Rank(\BBmatrix^{\Transp})=\Rank(\BBmatrix)=\NUMPI$ and $\Dim \left( \Ker(\BBmatrix^{\Transp}) \right) = \NUM{\p}-\NUMPI > 0$, we can consider $\NUM{\p}-\NUMPI$ vectors $\VEC{\VV}_1,\dots,\VEC{\VV}_{\NUM{\p}-\NUMPI} \in \mathbb{R}^{\NUM{\p}}$ to form a basis of $\Ker(\BBmatrix^{\Transp})$. The assertion \eqref{eqn:LLvec_OrthoCompl} is equivalently stated as each of $\VEC{\LL}_1,\dots,\VEC{\LL}_{\NUMlam}$ being perpendicular to each of the basis' vectors $\VEC{\VV}_1,\dots,\VEC{\VV}_{\NUM{\p}-\NUMPI}$, i.e.,

\begin{equation}
\VVmatrix^{\Transp} \, \VEC{\LL}_i = \VEC{0},
\quad
\VVmatrix \coloneqq
\begin{pmatrix}
\VEC{\VV}_1 & \VEC{\VV}_2 & \cdots & \VEC{\VV}_{\NUM{\p}-\NUMPI}
\end{pmatrix}
\in \mathbb{R}^{\NUM{\p} \times (\NUM{\p}-\NUMPI)},
\quad 
\forall i=1,\dots,\NUMlam.
\label{eqn:LLvec_eqVV}
\end{equation}

\noindent \cref{prop:phys_param_positive} guarantees that the numerical values $\Klam_i$ and $\thcc_j$ of all $\Dimensional{\Klam}_i$ in \eqref{eqn:Klam_powlaw_pp} and $\Dimensional{\thcc}_j$ in \eqref{eqn:char_const_pow_PhysParam} are strictly positive. Then, one can plug \eqref{eqn:Klam_powlaw_pp} and \eqref{eqn:char_const_pow_PhysParam} into \eqref{eqn:lam_coeff_I} to get 

\begin{equation}
\lam_i =
\Dimensional{\p}_1^{\sum_{j=1}^{\NUM{\var}} \CC_{i,j} \, \TT_{1,j} \, + \, \DD_{i,1}} \, 
\Dimensional{\p}_2^{\sum_{j=1}^{\NUM{\var}} \CC_{i,j} \, \TT_{2,j} \, + \, \DD_{i,2}} \, 
\cdots \, 
\Dimensional{\p}_{\NUM{\p}}^{\sum_{j=1}^{\NUM{\var}} \CC_{i,j} \, \TT_{\NUM{\p},j} \, + \, \DD_{i,\NUM{\p}}},
\quad \forall i = 1,\dots,\NUMlam.
\label{eqn:lam_exponents_}
\end{equation}

\noindent By using \eqref{eqn:lam_exponents_} and the independence of physical parameters $\Dimensional{\p}_1,\dots,\Dimensional{\p}_{\NUM{\p}}$ (\cref{prop:phys_param_indep}), the vectors $\VEC{\LL}_1,\dots,\VEC{\LL}_{\NUMlam}$ of exponents in \eqref{eqn:lambdas_pow_param_p} can be computed in terms of $\VEC{\TT}_1,\dots,\VEC{\TT}_{\NUM{\var}}$ \eqref{eqn:vecYY_def}:

\begin{equation}
\VEC{\LL}_i
=
\CC_{i,1} \, \VEC{\TT}_1
\, + \,
\CC_{i,2} \, \VEC{\TT}_2
\, + \, \cdots \, + \,
\CC_{i,\NUM{\var}}	\, \VEC{\TT}_{\NUM{\var}}
\, + \,
\VEC{\DD}_i,
\quad
\VEC{\DD}_i \coloneqq ( \DD_{i,1}, \DD_{i,2}, \dots, \DD_{i,{\NUM{\p}}} )^{\Transp},
\quad 
\forall i=1,\dots,\NUMlam.
\label{eqn:LinSys1_TT}
\end{equation}

\noindent Inserting $\VEC{\LL}_i$ \eqref{eqn:LinSys1_TT} into \eqref{eqn:LLvec_eqVV} gives

\begin{equation}
\CC_{i,1} \, \VVmatrix^{\Transp} \, \VEC{\TT}_1
\, + \,
\CC_{i,2} \, \VVmatrix^{\Transp} \, \VEC{\TT}_2
\, + \, \cdots \, + \,
\CC_{i,\NUM{\var}}	\, \VVmatrix^{\Transp} \, \VEC{\TT}_{\NUM{\var}}
= - \VVmatrix^{\Transp} \, \VEC{\DD}_i,
\quad \forall i=1,\dots,\NUMlam,
\end{equation}

\noindent that reads in terms of the vector $\VEC{\TT}$ \eqref{eqn:vecYY_def} of unknowns as

\begin{equation}
\matrixTT \, \VEC{\TT} = \vectorRHStt,
\quad
\matrixTT \in \mathbb{R}^{\NUM{\TT} \times (\NUM{\p} \, \NUM{\var})},
\quad
\vectorRHStt \in \mathbb{R}^{\NUM{\TT}},
\quad
\NUM{\TT} \coloneqq \NUMlam \, ( \NUM{\p} - \NUMPI ),
\label{eqn:LinSys_0}
\end{equation}

\begin{equation}
\matrixTT =
\begin{pmatrix}
\CC_{1,1} \, \VVmatrix^{\Transp} & \CC_{1,2} \, \VVmatrix^{\Transp} &
\cdots & \CC_{1,\NUM{\var}} \, \VVmatrix^{\Transp} \\
\CC_{2,1} \, \VVmatrix^{\Transp} & \CC_{2,2} \, \VVmatrix^{\Transp} &
\cdots & \CC_{2,\NUM{\var}} \, \VVmatrix^{\Transp} \\
\vdots & \vdots & \ddots & \vdots \\
\CC_{\NUMlam,1} \, \VVmatrix^{\Transp} & \CC_{\NUMlam,2} \, \VVmatrix^{\Transp} &
\cdots & \CC_{\NUMlam,\NUM{\var}} \, \VVmatrix^{\Transp} 
\end{pmatrix},
\quad 
\vectorRHStt =
- \begin{pmatrix}
\VVmatrix^{\Transp} \, \VEC{\DD}_1 \\
\VVmatrix^{\Transp} \, \VEC{\DD}_2 \\
\vdots \\
\VVmatrix^{\Transp} \, \VEC{\DD}_{\NUMlam} 
\end{pmatrix},
\label{eqn:TTtau_def}
\end{equation}

\noindent with $\VVmatrix$ and $\VEC{\DD}_1,\dots,\VEC{\DD}_{\NUMlam}$ defined as in \eqref{eqn:LLvec_eqVV} and \eqref{eqn:LinSys1_TT}, respectively. By using any solution $\VEC{\TT} \coloneqq (\VEC{\TT}_1,\VEC{\TT}_2,\dots,\VEC{\TT}_{\NUM{\var}})^{\Transp}$ to \eqref{eqn:LinSys_0}-\eqref{eqn:TTtau_def} in \eqref{eqn:LinSys1_TT}, the resulting vectors $\VEC{\LL}_1,\dots,\VEC{\LL}_{\NUMlam}$ \eqref{eqn:LinSys1_TT} fulfill \eqref{eqn:LLvec_eqVV} that implies \eqref{eqn:LLvec_OrthoCompl}. In other words, such $\VEC{\LL}_1,\dots,\VEC{\LL}_{\NUMlam}$ belong to the column space of the matrix $\BBmatrix$ \eqref{eqn:BB_def} generated by the basis' vectors $\VEC{\BB}_1,\dots,\VEC{\BB}_{\NUMPI}$ \eqref{eqn:Indep_Dimless_Param}. Thus, each of the computed $\VEC{\LL}_1,\dots,\VEC{\LL}_{\NUMlam}$ admits the unique vector $\VEC{\AAcoeff}_i$ \eqref{eqn:LL_lin_combo_ZZ} of coordinates with respect to the basis $\VEC{\BB}_1,\dots,\VEC{\BB}_{\NUMPI}$ \eqref{eqn:Indep_Dimless_Param}. That is to say, each of the $\NUMlam$ linear systems written in \eqref{eqn:LL_lin_combo_ZZ}, i.e., $\BBmatrix \, \VEC{\AAcoeff}_i = \VEC{\LL}_i$, with $i=1,\dots,\NUMlam$, can be solved providing the unique solution $\VEC{\AAcoeff}_i \in \mathbb{R}^{\NUMPI}$:

\begin{equation}
\VEC{\AAcoeff}_i = \BBmatrix^{\PseudoInv} \, \VEC{\LL}_i,
\quad
\BBmatrix^{\PseudoInv} \coloneqq
\left( \BBmatrix^{\Transp} \, \BBmatrix \, \right)^{\Inv}
\BBmatrix^{\Transp} \in \mathbb{R}^{\NUMPI \times \NUM{\p}},
\quad 
\forall i=1,\dots,\NUMlam,
\label{eqn:AAi_solved}
\end{equation}

\noindent where $\BBmatrix^{\PseudoInv}$ is the pseudo-inverse of the rectangular matrix $\BBmatrix$ \eqref{eqn:BB_def}. The inverse matrix $\left( \BBmatrix^{\Transp} \, \BBmatrix \, \right)^{\Inv} \in \mathbb{R}^{\NUMPI \times \NUMPI}$ in \eqref{eqn:AAi_solved} is well defined because $\Rank(\BBmatrix)=\NUMPI<\NUM{\p}$ implies that $\Rank(\BBmatrix^{\Transp} \, \BBmatrix ) = \NUMPI$. By plugging $\VEC{\LL}_i$ \eqref{eqn:LinSys1_TT} into \eqref{eqn:AAi_solved}, it follows

\begin{equation}
\CC_{i,1} \, \BBmatrix^{\PseudoInv} \, \VEC{\TT}_1
\, + \,
\CC_{i,2} \, \BBmatrix^{\PseudoInv} \, \VEC{\TT}_2
\, + \, \cdots \, + \,
\CC_{i,\NUM{\var}}	\, \BBmatrix^{\PseudoInv} \, \VEC{\TT}_{\NUM{\var}}
= 
\VEC{\AAcoeff}_i 
\, - \, 
\BBmatrix^{\PseudoInv} \, \VEC{\DD}_i,
\quad \forall i=1,\dots,\NUMlam,
\end{equation}

\noindent that reads in terms of the vector $\VEC{\TT}$ \eqref{eqn:vecYY_def} of unknowns as

\begin{equation}
\aamatrix \, \VEC{\TT} = \VEC{\AAcoeff} \, - \, \aavec,
\quad
\aamatrix \in \mathbb{R}^{(\NUMlam \, \NUMPI)\times(\NUM{\p} \, \NUM{\var})},
\quad
\VEC{\AAcoeff},\aavec \in \mathbb{R}^{\NUMlam \, \NUMPI}, 
\label{eqn:LinSys_I}
\end{equation}

\begin{equation}
\aamatrix =
\begin{pmatrix}
\CC_{1,1} \, \BBmatrix^{\PseudoInv} & \CC_{1,2} \, \BBmatrix^{\PseudoInv} &
\cdots & \CC_{1,\NUM{\var}} \, \BBmatrix^{\PseudoInv} \\
\CC_{2,1} \, \BBmatrix^{\PseudoInv} & \CC_{2,2} \, \BBmatrix^{\PseudoInv} &
\cdots & \CC_{2,\NUM{\var}} \, \BBmatrix^{\PseudoInv} \\
\vdots & \vdots & \ddots & \vdots \\
\CC_{\NUMlam,1} \, \BBmatrix^{\PseudoInv} & \CC_{\NUMlam,2} \, \BBmatrix^{\PseudoInv} &
\cdots & \CC_{\NUMlam,\NUM{\var}} \, \BBmatrix^{\PseudoInv}
\end{pmatrix},
\quad 
\aavec =
\begin{pmatrix}
\BBmatrix^{\PseudoInv} \, \VEC{\DD}_1 \\
\BBmatrix^{\PseudoInv} \, \VEC{\DD}_2 \\
\vdots \\
\BBmatrix^{\PseudoInv} \, \VEC{\DD}_{\NUMlam} 
\end{pmatrix},
\label{eqn:aamat_aavec_def}
\end{equation}

\noindent where $\VEC{\AAcoeff}$ is defined as in \eqref{eqn:AAdef}, being the vector $\in \mathbb{R}^{\NUMlam \, \NUMPI}$ whose entries are the exponents $\AAcoeff_{i,k}$ in \eqref{eqn:lambdas_fun_PI}. The matrix $\BBmatrix^{\PseudoInv}$ and the vectors $\VEC{\DD}_1,\dots,\VEC{\DD}_{\NUMlam}$ are defined as in \eqref{eqn:AAi_solved} and \eqref{eqn:LinSys1_TT}, respectively. Both linear systems \eqref{eqn:LinSys_0}-\eqref{eqn:TTtau_def} and \eqref{eqn:LinSys_I}-\eqref{eqn:aamat_aavec_def} must be satisfied by the vector $\VEC{\TT}$ \eqref{eqn:vecYY_def} of unknowns, yielding to the matrix equation $\matrixYY \, \VEC{\TT} = \vectorRHSyy$ \eqref{eqn:SyS_TT} for $\VEC{\TT}$, with $\matrixYY$ and $\vectorRHSyy$ given by 

\begin{equation}
\matrixYY =
\begin{pmatrix} 
\aamatrix \\
\matrixTT
\end{pmatrix} \in \mathbb{R}^{(\NUMlam \, \NUM{\p}) \times (\NUM{\p} \, \NUM{\var})},
\quad
\vectorRHSyy =
\begin{pmatrix} 	 
\VEC{\AAcoeff} - \aavec \\
\vectorRHStt
\end{pmatrix} \in \mathbb{R}^{\NUMlam \, \NUM{\p}}.
\label{eqn:matYY_vecYY_def}
\end{equation}

\noindent The entries of $\matrixYY$ \eqref{eqn:matYY_vecYY_def} are explicitly known real numbers, as given by those of $\aamatrix$ \eqref{eqn:aamat_aavec_def} and $\matrixTT$ \eqref{eqn:TTtau_def}. Since the components of $\aavec$ \eqref{eqn:aamat_aavec_def} and $\vectorRHStt$ \eqref{eqn:TTtau_def} are explicitly known, the entries of $\vectorRHSyy$ \eqref{eqn:matYY_vecYY_def} are explicit affine functions of the exponents $\AAcoeff_{i,k}$ in \eqref{eqn:lambdas_fun_PI}, collected by the vector $\VEC{\AAcoeff}$ \eqref{eqn:AAdef}.

\end{proof}

\renewcommand{\thesection}{E}		
\section{Proof of \autoref{th:sysAA}}
\label{sec:proofThVI}

\ThVI*

\begin{proof}

	Bearing \cref{prop:char_const}, there exists at least a vector $\VEC{\TT} \coloneqq \left( \TT_{1,1}, \dots,\TT_{\NUM{\p},\NUM{\var}} \right)^{\Transp} \in \mathbb{R}^{\NUM{\p} \, \NUM{\var}}$ of exponents that ensures \eqref{eqn:char_const_pow_PhysParam}. Under \cref{prop:lam_coeff_I,prop:Klam_powlaw_pp,prop:char_const,prop:phys_param_indep,prop:phys_param_positive,prop:phys_param_dimen}, \autoref{th:sysYY} guarantees that such a vector $\VEC{\TT}$ \eqref{eqn:vecYY_def} satisfies the linear system $\matrixYY \, \VEC{\TT} = \vectorRHSyy$ \eqref{eqn:SyS_TT}. Then, there must be at least a solution to such a matrix equation \eqref{eqn:SyS_TT}. Equivalently, the vector $\vectorRHSyy \in \mathbb{R}^{\NUMlam \, \NUM{\p}}$ must belong to the column space of the matrix $\matrixYY \in \mathbb{R}^{(\NUMlam \, \NUM{\p}) \times (\NUM{\p} \, \NUM{\var})}$, which is the orthogonal complement of the null space (kernel) of $\matrixYY^{\Transp}$, i.e.,

\begin{equation}
\vectorRHSyy \in \Colsp(\matrixYY) 
\equiv \left( \Ker(\matrixYY^{\Transp}) \right)^{\Ortho}.
\label{eqn:YYvec_OrthoCompl}
\end{equation}

\noindent Let us consider $\NUMrowsAA \coloneqq \NUMlam \, \NUM{\p} - \Rank(\matrixYY) = \Dim \left( \Ker(\matrixYY^{\Transp}) \right)$ vectors $\VEC{\WW}_1,\dots,\VEC{\WW}_{\NUMrowsAA} \in \mathbb{R}^{\NUMlam \, \NUM{\p}}$ to form a basis of the kernel of $\matrixYY^{\Transp}$ and the corresponding matrix

\begin{equation}
\WWmatrix \coloneqq
\begin{pmatrix}
\VEC{\WW}_1 & \VEC{\WW}_2 & \cdots & \VEC{\WW}_{\NUMrowsAA}
\end{pmatrix}
\in \mathbb{R}^{(\NUMlam \, \NUM{\p}) \times \NUMrowsAA}.
\label{eqn:WWmat_def}
\end{equation}

\noindent For the matrix $\WWmatrix$ \eqref{eqn:WWmat_def}, the statement \eqref{eqn:YYvec_OrthoCompl} is equivalent to

\begin{equation}
\WWmatrix^{\Transp} \, \vectorRHSyy = \VEC{0},
\end{equation}

\noindent or, given $\vectorRHSyy$ as in \eqref{eqn:matYY_vecYY_def} and $\WW_{i,j}$ as the entry of row $i$ and column $j$ of $\WWmatrix^{\Transp} \in \mathbb{R}^{\NUMrowsAA \times (\NUMlam \, \NUM{\p})}$, to 

\begin{equation}
\begin{pmatrix}
\WW_{1,1} & \WW_{1,2} & \cdots & \WW_{1,\NUMlam \NUM{\p}} \\
\WW_{2,1} & \WW_{2,2} & \cdots & \WW_{2,\NUMlam \NUM{\p}} \\
\vdots & \vdots & \ddots & \vdots \\
\WW_{\NUMrowsAA,1} & \WW_{\NUMrowsAA,2} & \cdots & \WW_{\NUMrowsAA,\NUMlam \NUM{\p}} 
\end{pmatrix}
\begin{pmatrix} 	 
\VEC{\AAcoeff} - \aavec \\
\vectorRHStt
\end{pmatrix}
= \VEC{0},
\label{eqn:sys_WT_AA_omega_gamma}
\end{equation}

\noindent where $\VEC{\AAcoeff} \in \mathbb{R}^{\NUMlam \, \NUMPI}$ \eqref{eqn:AAdef} is the vector of exponents in \eqref{eqn:lambdas_fun_PI}, while $\aavec \in \mathbb{R}^{\NUMlam \, \NUMPI}$ and $\vectorRHStt \in \mathbb{R}^{\NUMlam \, (\NUM{\p}-\NUMPI)}$ are explicitly known, as specified by \eqref{eqn:aamat_aavec_def} and \eqref{eqn:TTtau_def}, respectively. By using $\NUMPI = \NUM{\p} - \Rank(\MM)$ from \autoref{th:PI_quantities} under \cref{prop:phys_param_indep,prop:phys_param_positive,prop:phys_param_dimen}, we remark that $\NUMcolsAA \coloneqq \NUMlam \, \NUMPI = \NUMlam \, ( \NUM{\p} - \Rank(\MM) ) < \NUMlam \, \NUM{\p}$, being $\MM$ \eqref{eqn:matrixMM} defined by \autoref{th:DIMLESS_quant} under \cref{prop:phys_param_dimen}. The latter inequality follows from \cref{prop:phys_param_dimen} guaranteeing that $\MM$ \eqref{eqn:matrixMM} cannot be a matrix whose entries are all equal to zero, i.e., $\Rank(\MM)\ge1$. Then, it is possible to write \eqref{eqn:sys_WT_AA_omega_gamma} in terms of $\VEC{\AAcoeff} \in \mathbb{R}^{\NUMcolsAA}$ as the matrix equation $\matrixSysAA \, \VEC{\AAcoeff} = \lhsSysAA$ \eqref{eqn:constr_SysAA}, where the matrix $\matrixSysAA$ and the vector $\lhsSysAA$ are 

\begin{equation}
\matrixSysAA =
\begin{pmatrix}
\WW_{1,1} & \WW_{1,2} & \cdots & \WW_{1,\NUMlam \NUMPI} \\
\WW_{2,1} & \WW_{2,2} & \cdots & \WW_{2,\NUMlam \NUMPI} \\
\vdots & \vdots & \ddots & \vdots \\
\WW_{\NUMrowsAA,1} & \WW_{\NUMrowsAA,2} & \cdots & \WW_{\NUMrowsAA,\NUMlam \NUMPI} 
\end{pmatrix}
\in \mathbb{R}^{\NUMrowsAA \times (\NUMlam \, \NUMPI)},
\label{eqn:SysAA_MatrCoeffdef}
\end{equation}

\begin{equation}
\lhsSysAA =
\begin{pmatrix}
\WW_{1,1} & \WW_{1,2} & \cdots & \WW_{1,\NUMlam \NUMPI} \\
\WW_{2,1} & \WW_{2,2} & \cdots & \WW_{2,\NUMlam \NUMPI} \\
\vdots & \vdots & \ddots & \vdots \\
\WW_{\NUMrowsAA,1} & \WW_{\NUMrowsAA,2} & \cdots & \WW_{\NUMrowsAA,\NUMlam \NUMPI} 
\end{pmatrix}
\aavec -
\begin{pmatrix}
\WW_{1,\NUMlam \NUMPI + 1} & \WW_{1,\NUMlam \NUMPI + 2} & \cdots & \WW_{1,\NUMlam \NUM{\p}} \\
\WW_{2,\NUMlam \NUMPI + 1} & \WW_{2,\NUMlam \NUMPI + 2} & \cdots & \WW_{2,\NUMlam \NUM{\p}} \\
\vdots & \vdots & \ddots & \vdots \\
\WW_{\NUMrowsAA,\NUMlam \NUMPI + 1} & \WW_{\NUMrowsAA,\NUMlam \NUMPI + 2} & \cdots & \WW_{\NUMrowsAA,\NUMlam \NUM{\p}} 
\end{pmatrix}
\vectorRHStt
\in \mathbb{R}^{\NUMrowsAA},
\label{eqn:SysAA_RHSdef}
\end{equation}

\noindent with $\WW_{i,j}$ being the explicitly known entry of row $i$ and column $j$ of the transpose of the matrix $\WWmatrix$ \eqref{eqn:WWmat_def}, while $\aavec \in \mathbb{R}^{\NUMlam \, \NUMPI}$ and $\vectorRHStt \in \mathbb{R}^{\NUMlam \, (\NUM{\p}-\NUMPI)}$ being explicitly given by \eqref{eqn:aamat_aavec_def} and \eqref{eqn:TTtau_def}, respectively. 

\end{proof}

\renewcommand{\thesection}{F}		
\section{Proof of \autoref{th:OSC_formula}}
\label{sec:proofThI}

\ThI*

\begin{proof}

	By employing the notation \eqref{eqn:notation_dimensional_quant} in \eqref{eqn:lam_coeff_I} (\cref{prop:lam_coeff_I}), one has
	
\begin{equation}
\lam_i = \Dimensional{\Klam}_i \,
\Dimensional{\thcc}_1^{\CC_{i,1}} \, 
\Dimensional{\thcc}_2^{\CC_{i,2}} \, 
\cdots \, 
\Dimensional{\thcc}_{\NUM{\var}}^{\CC_{i,\NUM{\var}}}
=
\Klam_i \UNITS{\Dimensional{\Klam}_i} \, 
\thcc_1^{\CC_{i,1}} \UNITS{\Dimensional{\thcc}_1}^{\CC_{i,1}} \,
\thcc_2^{\CC_{i,2}} \UNITS{\Dimensional{\thcc}_2}^{\CC_{i,2}} \,
\cdots \,
\thcc_{\NUM{\var}}^{\CC_{i,{\NUM{\var}}}} \UNITS{\Dimensional{\thcc}_{\NUM{\var}}}^{\CC_{i,{\NUM{\var}}}},
\quad \forall i=1,\dots,\NUMlam.
\label{eqn:lam_units_TH1}
\end{equation}
	
\noindent As each of $\lam_1,\dots,\lam_{\NUMlam}$ is dimensionless, i.e.,

\begin{equation}
\UNITS{\lam_i} =
\UNITS{\Dimensional{\Klam}_i} \, 
\UNITS{\Dimensional{\thcc}_1}^{\CC_{i,1}} \,
\UNITS{\Dimensional{\thcc}_2}^{\CC_{i,2}} \,
\cdots \,
\UNITS{\Dimensional{\thcc}_{\NUM{\var}}}^{\CC_{i,{\NUM{\var}}}}
= 1, 
\quad \forall i=1,\dots,\NUMlam,
\end{equation}

\noindent it follows from \eqref{eqn:lam_coeff_I} and \eqref{eqn:lam_units_TH1} that

\begin{equation}
\lam_i = \Klam_i \, 
\thcc_1^{\CC_{i,1}} \,
\thcc_2^{\CC_{i,2}} \,
\cdots \,
\thcc_{\NUM{\var}}^{\CC_{i,{\NUM{\var}}}}
\in (0,\infty),
\quad \forall i=1,\dots,\NUMlam.
\label{eqn:lam_numval_thcc}
\end{equation}
	
\noindent Then, \eqref{eqn:lam_numval_thcc} allows writing the function $\costOS_{\UNITscale}$ as

\begin{equation}
\costOS_{\UNITscale}(\VEC{\Rthcc}) =
\sum_{i \in \UNITscale}
\left( \log_{10}(\Klam_i) + \sum_{k=1}^{\NUM{\var}} \, \CC_{i,k}  \, \Rthcc_k  \right)^2,
\quad
\forall \VEC{\Rthcc} \coloneqq ( \Rthcc_1,\dots,\Rthcc_{\NUM{\var}} ) \in \mathbb{R}^{\NUM{\var}},
\label{eqn:costOSC_fun_rho}
\end{equation}
	
\noindent where $\Rthcc_k \coloneqq \log_{10}(\thcc_k)$ for any $k=1,\dots,\NUM{\var}$. The twice-differentiable function $\costOS_{\UNITscale}(\VEC{\Rthcc})$ \eqref{eqn:costOSC_fun_rho} is convex because its Hessian matrix is positive semi-definite $\forall \VEC{\Rthcc} \in \mathbb{R}^{\NUM{\var}}$. Then, as guaranteed by Theorem 2.5 in \cite{Nocedal_book_2006}, the global minimizers of $\costOS_{\UNITscale}$ can be found by imposing $\Partial_{\Rthcc_j} \costOS_{\UNITscale} = 0$, $\forall j=1,\dots,\NUM{\var}$, yielding to the linear system \eqref{eqn:lin_sys_rho} of $\NUM{\var}$ equations with $\NUM{\var}$ unknowns $\Rthcc_1,\dots,\Rthcc_{\NUM{\var}} \in \mathbb{R}$.

\end{proof}

\renewcommand{\thesection}{G}		
\section{Proof of \autoref{th:SysToSolveOSAA}}
\label{sec:proofThVII}

\ThVII*

\begin{proof}

	The twice continuously differentiable function $\costOS_{\UNITscale}=\costOS_{\UNITscale}(\VEC{\AAcoeff})$ in \eqref{eqn:CostOSC_def} is convex as its Hessian matrix is diagonal with non-negative entries $\forall \VEC{\AAcoeff} \in \mathbb{R}^{\NUMcolsAA}$. Suppose one also aims to satisfy a constraint of the form \eqref{eqn:Constr_SysOSAA}. It follows that finding any minimum of $\costOS_{\UNITscale}=\costOS_{\UNITscale}(\VEC{\AAcoeff})$ reads as a convex optimization problem with equality constraints, as considered in Section 10.1 of \cite{boyd_vandenberghe_2004}. 

	Then, as guaranteed in Section 10.1 of \cite{boyd_vandenberghe_2004} by assuming \eqref{eqn:HP_OS_AA_analytic_form}, the vector $\VEC{\AAcoeff} \in \mathbb{R}^{\NUMcolsAA}$ is optimal for the problem under consideration if and only if there is a vector $\helpVecSysAA \in \mathbb{R}^{\helpNrSysAA}$ such that \eqref{eqn:SysToSolveOSAA} holds. Any found local minimizer is automatically a global minimizer, because the considered problem is convex, as discussed in Section 4.2.2 of \cite{boyd_vandenberghe_2004}.

\end{proof}

%\section*{References}
\addcontentsline{toc}{section}{References}

\bibliography{refs.bib}

\end{document}